\DeclareMathOperator*{\argmax}{argmax}
\newtheorem{theorem}{Theorem}
\newtheorem{prop}{Proposition}[section]
\newtheorem{lm}{Lemma}
\newtheorem{@remark}{\bf Remark}
\newenvironment{remark}{\begin{@remark}\rm}{\end{@remark}}
\newcommand {\A} {\alpha}
\newcommand {\E} {\mathrm{E}}
\newcommand{\cov}{\mathrm{cov}}
\newcommand {\T} {\theta}
\newcommand {\ep} {\epsilon}
\newcommand {\pa} {\partial}
\newcommand {\pai} {\partial_{\T_{i_k}}}
\newcommand {\paj} {\partial_{\T_{j_k}}}
\newcommand {\pal} {\partial_{\T_l}}
\newcommand {\paa}{\partial^2_{\T\T'}}
\newcommand {\paaij}{\partial^2_{\T_{i_k} \T_{j_k}}}
\newcommand {\paail}{\partial^2_{\T_{i_k} \T_l}}
\newcommand {\paajl}{\partial^2_{\T_{j_k} \T_l}}
\newcommand {\paaa}{\partial^3_{\T_{i_k} \T_{j_k} \T_l}}
\newcommand{\tcr}{\textcolor{red}}
\newcommand{\tcb}{}
\newcommand {\les} {\lesssim}
\title{ Information matrix test for normality of innovations in stationary time series models}
\author{Zixuan Liu}
\affil{School of Mathematics and Information Science, Henan University of Economics and Law}
\author[2]{Junmo Song\thanks{Corresponding author}}
\affil{Department of Statistics, Kyungpook National University}
\begin{document}
\maketitle
\begin{abstract}
This study focuses on the problem of testing for normality of innovations in stationary time series models. 
To achieve this, we introduce an information matrix (IM) based test. While the IM test  was originally developed to test for model misspecification, our study addresses that the test can also be used to test for the normality of innovations in various time series models. We provide sufficient conditions under which the limiting null distribution of the test statistics exists. As applications, a first-order threshold moving average model, GARCH model and double autoregressive model are considered. We conduct simulations to evaluate the performance of the proposed test and compare with other tests, and provide a real data analysis. 
\\
\end{abstract}
\noindent{\bf Key words and phrases}: information matrix test, normality test, innovation of time series models, threshold MA(1) models, GARCH models, double AR models.


\section{Introduction}\label{Sec:intro}

\setlength{\parindent}{1em}
Testing for normality has been an important issue in the statistical literature. Many useful tests and methods, such as  the Jarque-Bera (JB) test, the Shapiro-Wilk test, the Kolmogorov-Smirnov test and some graphical methods, have been proposed, and new techniques are constantly being developed. For the review on the normality test, we refer to \cite{yazici2007comparison} and  \cite{mbah2015shapiro}.

In the field of time series analysis, the normality of the errors is still an assumption worth checking. 
A simple way to test for the normality is, for example, to apply the Jarque-Bera or the Shapiro-Wilk tests developed in i.i.d. setting to the residuals obtained from a fitted time series model. Due to its convenience, such methods are usually used in practice but \tcb{one needs to pay attention when applying because the asymptotic distribution of the residual-based tests are not necessarily same as the one of the tests based on true i.i.d. errors} (cf. \cite{koul2006fitting}). Many works therefore have been devoted to showing the validity of each residual-based test. As examples for the JB test, \cite{kilian2000residual}, \cite{kulperger2005high}, \cite{yu2007high}, and \cite{lee:2012} showed the asymptotic validity of the residual-based JB test for vector error-correction models, GARCH models, ARMA models, and ARMA-GARCH models, respectively. 

We are also concerned with the normality test for the errors in time series models, but \tcb{  aim to propose a test applicable for more general time series models}. For this, we introduce an information matrix (IM) test derived from the information matrix equivalence, that is, the relation of the variance of the score function and the information matrix. \tcb{The IM test was originally proposed in order to check a tentative model is correctly specified} (see the original paper by \cite{white:1982} for details).  Subsequently, the IM test has been developed mostly for testing misspecification in various models. See, for example,  \cite{hall1987information}, \cite{reynolds1991testing}, \cite{zhang2001information}, and \cite{abad2010testing}. In time series analysis, \cite{weiss:1984} and \cite{furno:1996} used the IM test for the specification of ARMA-ARCH model and the regression model with ARMA errors, respectively.  

In this study, \tcb{ we shall address that  the IM test is also applicable and practically useful for testing the normality of the errors}. To understand the basic idea, we need to recall the fact that the IM equivalence holds when the model considered is correctly specified and, at the same time, the distribution of error is exactly identified. The previous studies utilizing the IM test for model misspecification testing assumed that the distribution of innovation is known. Conversely to this, if the model is properly specified, it means that \tcb{the IM equivalence could also be used to identify the error distribution.} Particularly given that \tcb{the Gaussian quasi-MLE (QMLE) is widely used in various time series models}, it would be worth exploring the applicability of the IM test for the normality test for the errors because \tcb{the IM based normality test can be performed immediately after obtaining the Gaussian QML estimate}, similar to other likelihood based tests such as the score test and the likelihood ratio test. However, to the best of our knowledge, little work has been made on this topic and thus \tcb{we investigate the IM test and its properties as a tool for testing the normality of innovations in general time series models.} 

The organization of this paper is as follows. In Section \ref{Sec:2}, we construct the IM test statistic and provide sufficient conditions, under which the IM test statistics converges in distribution to a chi-square distribution under the null hypothesis. In Section \ref{Sec:3}, we apply the introduced test to the first-order threshold moving average model, the GARCH model, and the double AR model. We compare the performance of the IM test with other residual based tests  through simulations in section \ref{Sec:simu}. A real data example is provided in section \ref{Sec:real} and the technical proofs are given in Appendix. 

\section{Main results}\label{Sec:2}
Let $\left\{ X_{t}|t \in \mathbb{Z} \right\}$  be a strictly stationary and ergodic time series and  assume that the series can be modelled by
\begin{equation}\label{model1}
	X_{t}=\mu_{t}(\theta)+ \sigma_{t}(\theta) e_{t},
\end{equation}
where $\mu_t(\theta)$  and  $\sigma_t(\theta)$ are measurable functions of $\{X_{t-1}, X_{t-2}, \cdots \}$ with the parameter $\theta \in \mathbb{R}^p$  and   $\left\{ e_{t}|t \in \mathbb{Z} \right\}$ is a sequence of i.i.d random variables  with zero mean and unit variance. We assume that the parameter space $\Theta$ is a compact subset of $\mathbb{R}^p$ and the true parameter $\theta_0$ is in the interior of $\Theta$.  Various time series models such as the traditional ARMA models,  GARCH-type models,  and ARMA-GARCH models can be represented by the model (\ref{model1}).

To estimate the model above, consider the Gaussian QMLE given as
\[ \hat\theta_n = \argmax_{\theta \in \Theta}  \sum_{t=1}^n l(X_t; \theta),\]
where
\begin{eqnarray}\label{OF}
 l(X_t;\theta) =  -\frac{1}{2}\Big( \log \sigma_t^2(\theta) +\frac{(X_t -\mu_t(\theta))^2}{\sigma_t^2(\theta)} \Big).
\end{eqnarray}
For the models where  $\{\mu_t(\theta)|1\leq t \leq n\}$ and $\{\sigma_t^2(\theta)| 1\leq t\leq n\}$ include some unobservable terms due to the initial value issue, it needs to replace the processes with other feasible ones, which can usually be obtained via recursion. In such cases, we denote the approximated processes by $\{ \tilde \mu_t(\theta) | 1\leq t \leq n\}$ and $\{ \tilde \sigma^2_t(\theta) | 1\leq t \leq n\}$, respectively, and the QMLE is then redefined as follows:
\begin{eqnarray}\label{QMLE}
\hat\theta_n = \argmax_{\theta \in \Theta}  \sum_{t=1}^n \tilde l(X_t; \theta),
\end{eqnarray}
where $\tilde l(X_t;\theta)$ is the one obtained from $l(X_t;\theta)$ by replacing $\mu_t(\theta)$ and $\sigma_t^2(\theta)$ with $\tilde \mu_t(\theta)$ and $\tilde \sigma_t^2(\theta)$, respectively.
See, for example, \cite{berkes:2003} and \cite{francq:zakoian:2004} for GARCH models and ARMA-GARCH models, respectively. Hereafter, we mainly state in terms of the estimator (\ref{QMLE}) because $\tilde l(X_t;\theta)$ can be regarded as  $l(X_t;\theta)$ in the case that  $\mu_t(\theta)$ and $\sigma_t^2(\theta)$ are known. Concerning notations, we use $\pa_\T$ and $\pa^2_{\T\T'}$ to denote $\pa/\pa \T$ and $\pa^2/\pa \T \pa \T'$, respectively, and $\| \cdot\|$ denotes any vector or matrix norm. 

The consistency and the asymptotic normality of the QMLE are essential in deriving the limiting null distribution of the test statistics which will be introduced below. For this, we consider the following conditions:
\begin{enumerate}	
	\item[\bf{A1.}]$\left\{ l(X_{t};\theta) | t \in \mathbb{Z} \right\}$ is strictly stationary and ergodic for each $\theta \in \Theta$.
	\item[\bf{A2.}] The true parameter $\T_0$ is identifiable. That is, $\E[ l(X_t;\theta)]$ has a unique maximum at $\T_0$.
	\item[\bf{A3.}] (a)
	$\displaystyle \E \sup_{\T\in\Theta} \big|l(X_t;\T)\big|<\infty\quad\mbox{and (b)}\quad\frac{1}{n}\sum_{t=1}^n \sup_{\T\in\Theta} \big| l(X_t;\theta)-\tilde l(X_t;\theta)\big|=o(1)\quad a.s.$
\item[\bf{A4.}] $\mu_t(\T)$ and $\sigma^2_t(\T)$ are   twice  continuously differentiable with respect to $\T$.
	\item[\bf{A5.}]  $ \E\left[ \partial_{\theta}l(X_{t};\theta_{0})\partial_{\theta'}l(X_{t};\theta_{0}) \right]$ exists and $\E\left[ \partial^{2}_{\theta \theta'} l(X_{t};\theta_{0}) \right]$ is positive definite.
\item[\bf A6.] For some neighborhood $N(\T_0)$ of $\T_0$,
 $$\displaystyle\E \sup_{\theta \in N(\T_0)}\big\|\paa  l(X_t;\T)\big\|  <\infty.$$
\item[\bf A7.]$\displaystyle \frac{1}{\sqrt{n}}\sum_{t=1}^n \big\| \pa_{\theta}\,l(X_t;\theta_0)-\pa_{\theta}\,\tilde l(X_t;\theta_0)\big\|= o(1)\quad a.s.$

\item[\bf A8.]For some neighborhood $N(\T_0)$ of $\T_0$, $$\displaystyle \frac{1}{n}\sum_{t=1}^{n}\sup_{\theta \in N(\T_0)}\big\|\paa  l(X_t;\theta)
-\paa  \tilde l(X_t;\theta)\big\|= o(1)\quad a.s.$$
\end{enumerate}
 {\bf A1} is usually guaranteed by the stationarity and ergodicity of $\{X_t\}$. {\bf A2} and {\bf A3} are the standard assumptions to show the consistency of the estimator. {\bf A6} and the continuity of $\paa l(X_t;\T)$ assured by {\bf A4} indeed yield $\E \sup_{\theta \in N_2(\T_0)}\big\|\paa  l(X_t;\T)-\paa  l(X_t;\T_0)\big\|  <\infty$, from which together with  {\bf A7} and {\bf A8} one can derive asymptotic normality. In the case of $\ep_t\sim N(0,1)$, the positive definiteness of $\E\left[ \partial^{2}_{\theta \theta'} l(X_{t};\theta_{0}) \right]$  can be readily shown just by checking that $z'\pa_\T \sigma_t^2(\T_0)$ and  $z'\pa_\T \mu_t(\T_0)$ are equal to zero almost surely only for $z={\bf 0}$ (cf. Lemma \ref{nonsing} below). Under the assumptions above, one can obtain the following asymptotic result. 

\begin{theorem}\label{thm_qmle}
	Suppose that the assumptions $\bf{A1}$--$\bf{A3}$ hold. Then, $\hat\theta_n$ converges almost surely to $\T_0$. If additionally the assumptions {\bf A4}--{\bf A8} hold and $\T_0$ is in the interior of $\T$, we have
	\[ \sqrt{n}(\hat\T_{n}-\T_0) \stackrel{d}{\longrightarrow} N\big(0, \mathcal{J}^{-1}\mathcal{I}\mathcal{J}^{-1} \big),\]
	where $\mathcal{J}= \E\left[ \partial^{2}_{\theta \theta'} l(X_{t};\theta_{0}) \right]$ and $\mathcal{I}= \E\left[ \partial_{\theta}l(X_{t};\theta_{0})\partial_{\theta'}l(X_{t};\theta_{0}) \right]$.
\end{theorem}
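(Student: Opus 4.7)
The plan is to follow the classical two-step template for M-estimators: first establish consistency by a uniform law of large numbers combined with identifiability, and then obtain asymptotic normality through a Taylor expansion of the score equation, with the initial-value approximation handled separately by assumptions \textbf{A3}(b), \textbf{A7}, and \textbf{A8}.

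For consistency, I would argue that \textbf{A1} together with \textbf{A3}(a) allows a uniform strong law of large numbers on the compact parameter space $\Theta$, giving $\sup_{\T\in\Theta}| n^{-1}\sum_{t=1}^n l(X_t;\T)-\E l(X_t;\T)|\to 0$ almost surely. The approximation condition \textbf{A3}(b) then transfers this uniform limit to the feasible criterion $n^{-1}\sum_{t=1}^n \tilde l(X_t;\T)$. Since by \textbf{A2} the limit $\E l(X_t;\T)$ has a unique maximum at $\T_0$ on the compact set $\Theta$, a standard argmax/compactness argument yields $\hat\T_n \to \T_0$ almost surely.

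For the CLT part, since $\T_0$ lies in the interior of $\Theta$, for large $n$ the first-order condition $n^{-1/2}\sum_{t=1}^n \pa_\T \tilde l(X_t;\hat\T_n)=0$ holds. A Taylor expansion around $\T_0$ gives
\begin{equation*}
0 \;=\; \frac{1}{\sqrt n}\sum_{t=1}^n \pa_\T \tilde l(X_t;\T_0) \;+\; \Bigl( \frac{1}{n}\sum_{t=1}^n \paa \tilde l(X_t;\T_n^\ast) \Bigr)\sqrt n (\hat\T_n-\T_0),
\end{equation*}
where $\T_n^\ast$ lies between $\hat\T_n$ and $\T_0$. Writing out $\pa_\T l(X_t;\T_0)$ from \eqref{OF} produces a sum of terms proportional to $e_t$ and $e_t^2-1$ multiplied by $\mathcal F_{t-1}$-measurable factors, so $\{\pa_\T l(X_t;\T_0),\mathcal F_t\}$ is a stationary, ergodic martingale difference sequence whose second moment exists by \textbf{A5}. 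The Billingsley martingale CLT then gives $n^{-1/2}\sum \pa_\T l(X_t;\T_0) \xrightarrow{d} N(0,\mathcal I)$, and \textbf{A7} replaces $l$ with $\tilde l$ without affecting the limit. For the Hessian term, \textbf{A4}, \textbf{A6}, and consistency of $\hat\T_n$ yield, via a uniform SLLN on a shrinking neighborhood $N(\T_0)$, $n^{-1}\sum \paa l(X_t;\T_n^\ast)\to \mathcal J$ almost surely, and \textbf{A8} again handles the $\tilde l$--$l$ discrepancy. Combining these with invertibility of $\mathcal J$ from \textbf{A5} gives the stated sandwich limit.

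I expect the only subtle steps to be verifying the martingale-difference property of $\pa_\T l(X_t;\T_0)$ and checking that \textbf{A6} together with \textbf{A4} upgrades to a uniform integrable envelope on a neighborhood of $\T_0$; the remaining work is routine bookkeeping of the initial-value approximation through \textbf{A3}(b), \textbf{A7}, and \textbf{A8}. No term in the expansion requires anything beyond what the listed assumptions furnish.
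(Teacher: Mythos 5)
Your proposal is correct and follows essentially the same route as the paper: a uniform SLLN under \textbf{A1} and \textbf{A3}(a) transferred to the feasible criterion by \textbf{A3}(b) for consistency, and then a Taylor expansion of the score equation combined with the martingale CLT for $\pa_\T l(X_t;\T_0)$ (written in terms of $e_t$ and $e_t^2-1$), with \textbf{A7}, \textbf{A6}/\textbf{A4}, and \textbf{A8} handling the approximation and Hessian-convergence steps exactly as in the paper's proof. The only cosmetic difference is that the paper makes the intermediate step $\sqrt{n}(\hat\T_n-\T_0)=O_P(1)$ explicit before isolating the leading term, which your argument implicitly covers.
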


Now, we consider the problem of testing for the normality of the error distribution. That is, the hypotheses of our interest are :
\begin{eqnarray*}
H_0 : e_t \mbox{ follows a normal distribution.}\quad v.s. \quad H_1:  \mbox{ not } H_0.
\end{eqnarray*}
To this end, we employ the IM test, which comes from the information matrix equivalence, i.e., the equivalence of the variance of the score function and the information matrix. Under $H_0$, this equivalence can also be shown for the time series model of (\ref{model1}), that is,
\[ \E\left[ \partial_{\theta}l(X_{t};\theta_{0})\partial_{\theta'}l(X_{t};\theta_{0})\right]+\E\left[ \partial^{2}_{\theta \theta'} l(X_{t};\theta_{0})\right]=0 \]
(see Lemma \ref{lm1}), from which we can consider the following statistics
\begin{eqnarray}\label{elm}
\bigg\{ \sum_{t=1}^n\partial_{\theta_i}\tilde l(X_{t};\hat\theta_n)\partial_{\theta_j}\tilde l(X_{t};\hat\theta_n) +  \sum_{t=1}^n\partial^{2}_{\theta_i \theta_j} \tilde l(X_{t};\hat\theta_n) \bigg| 1\leq i,j \leq p \bigg\},
\end{eqnarray}
where $\theta_i$ and $\T_j$ are the $i$th and $j$th elements in $\T$, respectively.
Here, we note that, as explained in \cite{white:1982}, it may be inappropriate to construct a test statistics using all the elements in (\ref{elm}) because some elements can be zero or a linear combination of others. In this regard, we construct a statistics using some of the elements. 

Let $\tilde d_k(X_t;\T)=\pa^2_{\T_{i_k} \T_{j_k}} \tilde l ( X_{t};\theta ) + \pa_{\T_{i_k}} \tilde l ( X_{t};\theta ) \pa_{\T_{j_k}} \tilde l ( X_{t};\theta ) $ for some $i_k \leq  j_k \leq p$ and define $q$-dimensional vector  $\tilde d(X_t;\T)= (\tilde d_1(X_t;\T),\cdots, \tilde d_q(X_t;\T))'$, where $q\leq p(p+1)/2$.  Then, the IM test statistics is given as
 \begin{eqnarray*}
T_n:=\frac{1}{\sqrt{n}} \sum_{t=1}^n \tilde d(X_t;\hat \T_n).
\end{eqnarray*}
To derive the limiting null distribution, we define some notations. $d(X_t;\T)$ is the counterpart of $\tilde d(X_t;\T)$ obtained by replacing $\tilde l(X_t;\T)$ with $l(X_t;\T)$  and $\nabla d(X_t;\T_0) $ is a matrix whose $k,l$ element is given by  $\partial_{\theta_l} d_k(X_t;\T_0)$, where $k \leq q$ and $l \leq p$.  To get the limiting distribution of $T_n$, further following  conditions are required.
\begin{enumerate}	
\item[\bf{C1.}] $l(x;\T)$ is three times differentiable with respect to $\T$ and is continuous in $\T$ for each $x$.
\item[\bf{C2.}]For some neighborhood $N(\theta_{0})$ of $\theta_{0}$, 
\[ \E \sup_{\T\in N(\T_0)} \big\| \nabla d(X_t;\T) \big\|<\infty. \]
\item[\bf{C3.}] $\cov(d(X_t;\T_0))$ and  $\cov \left(d(X_t;\T_0)- \E[\nabla d(X_t;\T_0)] \mathcal{J}^{-1} \pa_\T l(X_t;\T_0) \right)$ exist.
\item[\bf{C4.}]For some neighborhood $N(\theta_{0})$ of $\theta_{0}$,
\[\frac{1}{\sqrt{n}} \sum_{t=1}^{n} \sup_{\theta \in N(\theta_{0})}  \big\| \partial^{2}_{\theta \theta'} l\left( X_{t};\theta \right) -\partial^{2}_{\theta \theta'} \tilde l\left( X_{t};\theta \right) \big\| = o_P(1) \]
and
\[ \frac{1}{\sqrt{n}} \sum_{t=1}^{n} \sup_{\theta \in N(\theta_{0})}  \big\|  \partial_{\theta} l(X_{t};\T) \partial_{\theta^{'}} l(X_{t};\T) - \partial_{\theta } \tilde{l}(X_{t};\T) \partial_{\theta^{'} } \tilde{l}(X_{t};\T)\big\| = o_P(1). \]
\end{enumerate}
Below our main theorem, condition {\bf A8} is replaced with condition {\bf C4}.

\begin{theorem}\label{thm_main}
	Suppose that the assumptions {\bf A1}--{\bf A7}  and the conditions {\bf C1}--{\bf C4} hold. Then, under the null hypothesis,
\begin{eqnarray*}
 T_n\stackrel{d}{\longrightarrow} N_q ( {\bf 0},V(\T_0) ),
\end{eqnarray*}
where $V(\T_0)=\cov \left(d(X_t;\T_0)- \E[\nabla d(X_t;\T_0)] \mathcal{J}^{-1} \pa_\T l(X_t;\T_0) \right)$.
\end{theorem}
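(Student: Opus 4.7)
The plan is to establish the asymptotic linearization
\[ T_n = \frac{1}{\sqrt n}\sum_{t=1}^n \Bigl\{ d(X_t;\T_0) - \E[\nabla d(X_t;\T_0)]\,\mathcal{J}^{-1}\,\pa_\T l(X_t;\T_0) \Bigr\} + o_P(1), \]
and then to invoke a martingale central limit theorem on the stationary ergodic sum on the right, whose summand has mean zero under $H_0$ by the information matrix equivalence (Lemma \ref{lm1}). The work divides into three reductions at the end of which we are left with a clean CLT.

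First I would remove the tildes, writing
\[ T_n = \frac{1}{\sqrt n}\sum_{t=1}^n d(X_t;\hat\T_n) + \frac{1}{\sqrt n}\sum_{t=1}^n \bigl\{\tilde d(X_t;\hat\T_n) - d(X_t;\hat\T_n)\bigr\}. \]
Since $\tilde d - d$ decomposes into the Hessian and outer-product parts, each part is controlled by one of the two uniform bounds in \textbf{C4}; Theorem \ref{thm_qmle} places $\hat\T_n$ eventually inside any prescribed $N(\T_0)$ almost surely, so the bracketed remainder is $o_P(1)$.

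Next I would perform a componentwise mean-value expansion (legitimate by \textbf{C1}),
\[ \frac{1}{\sqrt n}\sum_{t=1}^n d(X_t;\hat\T_n) = \frac{1}{\sqrt n}\sum_{t=1}^n d(X_t;\T_0) + \Bigl(\frac{1}{n}\sum_{t=1}^n \nabla d(X_t;\T_n^*)\Bigr) \sqrt n(\hat\T_n-\T_0), \]
with $\T_n^*$ on the segment between $\hat\T_n$ and $\T_0$. A uniform ergodic theorem on $N(\T_0)$, valid under \textbf{A1} with the integrable envelope in \textbf{C2}, combined with continuity of $\nabla d$ and $\T_n^*\to\T_0$, gives $n^{-1}\sum_t \nabla d(X_t;\T_n^*)\to \E[\nabla d(X_t;\T_0)]$ in probability. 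Plugging in the standard Bahadur expansion
\[ \sqrt n(\hat\T_n - \T_0) = -\mathcal{J}^{-1}\frac{1}{\sqrt n}\sum_{t=1}^n \pa_\T l(X_t;\T_0) + o_P(1), \]
which is implicit in the proof of Theorem \ref{thm_qmle} and relies on \textbf{A4}--\textbf{A7}, yields the target linearization.

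To finish, under $H_0$ one has $X_t\mid\mathcal{F}_{t-1}\sim N(\mu_t(\T_0),\sigma_t^2(\T_0))$ with $\mathcal{F}_t=\sigma(e_s:s\le t)$, so the conditional version of the information matrix equivalence gives $\E[d(X_t;\T_0)\mid\mathcal{F}_{t-1}]=0$; together with the well-known martingale-difference property of the score, the summand in the linearization is a stationary ergodic MD sequence with finite covariance $V(\T_0)$ by \textbf{C3}, and Billingsley's martingale CLT delivers $T_n\stackrel{d}{\to} N_q(\mathbf{0},V(\T_0))$. The step I expect to be the main obstacle is the bookkeeping in the Taylor expansion: absorbing the random center $\T_n^*$ into the limit $\E[\nabla d(X_t;\T_0)]$ requires the envelope \textbf{C2} to be strong enough to deliver uniform convergence over a neighborhood rather than mere pointwise ergodicity, and verifying the conditional IM equivalence in enough generality to cover all three applications in Section \ref{Sec:3}.
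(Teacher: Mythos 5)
Your proposal is correct and follows essentially the same route as the paper's proof: a Taylor expansion of $n^{-1/2}\sum_t d(X_t;\cdot)$ around $\T_0$, the uniform ergodic theorem under \textbf{C2} to replace $n^{-1}\sum_t \nabla d(X_t;\T_n^*)$ by $\E[\nabla d(X_t;\T_0)]$, the Bahadur expansion of $\sqrt n(\hat\T_n-\T_0)$ from the proof of Theorem \ref{thm_qmle}, the martingale-difference property of $d(X_t;\T_0)$ under $H_0$ followed by the martingale CLT, and \textbf{C4} to dispose of the tilde approximation. The only difference is cosmetic (you remove the tildes first, the paper does it last), so there is nothing substantive to add.
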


\begin{remark}
In cases that  the strong consistency and asymptotic normality of the model considered are already established, it just needs to check that the conditions {\bf C1}--{\bf C4} are met.
\end{remark}

\begin{theorem}\label{thm_test}
Let $\hat V_n$ be a consistent estimator of $V(\T_0)$. If the assumptions in Theorem \ref{thm_main}  hold and $V(\T_0)$ is nonsingular, we have 
\[ T_n' \hat V_n^{-1} T_n \stackrel{d}{\longrightarrow} \chi^2_q.\]
Hence, we reject $H_0$ if $T_n' \hat V_n^{-1} T_n > C_\A$, where $C_\A$ is the $(1-\A)$-quantile of $\chi^2_q$.
\end{theorem}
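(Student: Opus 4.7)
The plan is to deduce Theorem \ref{thm_test} from Theorem \ref{thm_main} by a routine application of Slutsky's lemma together with the continuous mapping theorem. Theorem \ref{thm_main} already gives the joint distributional limit $T_n \stackrel{d}{\longrightarrow} Z \sim N_q(\mathbf{0}, V(\T_0))$ under $H_0$, so the only remaining work is to pass from the quadratic form in the limit to a chi-square distribution.

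The steps I would carry out are as follows. First, since by hypothesis $\hat V_n \stackrel{P}{\longrightarrow} V(\T_0)$ and $V(\T_0)$ is nonsingular, matrix inversion is continuous at $V(\T_0)$; an application of the continuous mapping theorem therefore yields $\hat V_n^{-1} \stackrel{P}{\longrightarrow} V(\T_0)^{-1}$. Second, Slutsky's lemma applied to the pair $(T_n, \hat V_n^{-1})$ — whose second coordinate converges in probability to a constant matrix — gives the joint convergence $(T_n,\hat V_n^{-1}) \stackrel{d}{\longrightarrow} (Z, V(\T_0)^{-1})$. Third, since the map $(t,M) \mapsto t' M t$ is continuous on $\mathbb{R}^q \times \mathbb{R}^{q\times q}$, the continuous mapping theorem delivers
\[ T_n' \hat V_n^{-1} T_n \stackrel{d}{\longrightarrow} Z' V(\T_0)^{-1} Z. \]
Finally, $V(\T_0)$ is a covariance matrix that is assumed nonsingular, hence positive definite, so it admits a symmetric positive-definite square root $V(\T_0)^{1/2}$. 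Writing $Z = V(\T_0)^{1/2} W$ with $W \sim N_q(\mathbf{0}, I_q)$ gives $Z' V(\T_0)^{-1} Z = W'W \sim \chi^2_q$, which identifies the limit. The rejection rule then follows directly from the definition of $C_\A$ as the $(1-\A)$-quantile of $\chi^2_q$.

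There is no substantive obstacle here: this statement is essentially a corollary of Theorem \ref{thm_main}, and the argument is standard. The only point that requires care is that consistency of $\hat V_n$ transfers to consistency of $\hat V_n^{-1}$, which is why the nonsingularity of $V(\T_0)$ is explicitly imposed; it also guarantees that $V(\T_0)^{-1}$ exists so that the quadratic form $Z'V(\T_0)^{-1}Z$ is well defined and has the claimed chi-square law. Construction of a concrete consistent estimator $\hat V_n$ is a separate task, handled model-by-model in Section \ref{Sec:3} rather than in this theorem.
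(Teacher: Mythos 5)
Your argument is correct and is exactly the standard Slutsky/continuous-mapping route that the paper implicitly relies on; the paper omits an explicit proof of this theorem precisely because it is a routine corollary of Theorem \ref{thm_main}. Nothing is missing.
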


\begin{remark}\label{VT0}
One can avoid calculating the third derivatives included in  $\nabla d(X_t;\T_0)$. Using Lemma \ref{lm2}, we can see that under $H_0$,
\[V(\theta_{0}) = \E \left[ d(X_t;\T_0) d(X_t;\T_0)' \right] + \E \left[ d(X_t;\T_0) \pa_{\T'} l(X_t;\theta_{0}) \right]\mathcal{I}^{-1} \E \left[ \pa_\T l(X_t;\theta_{0}) d(X_t;\T_0)' \right].\]
Hence, a natural estimator for $V(\T_0)$ is given by
\begin{eqnarray*}
\hat V_n &=& \frac{1}{n} \sum_{t=1}^n \tilde d(X_t;\hat\T_n) \tilde d(X_t;\hat\T_n)' \\
&&+ \frac{1}{n}\sum_{t=1}^n \tilde d(X_t;\hat\T_n) \pa_{\T'} \tilde l(X_t;\hat\T_n) \Big(\sum_{t=1}^n \pa_\T \tilde l(X_t;\hat\T_n) \pa_{\T'} \tilde l(X_t;\hat\T_n) \Big)^{-1} \sum_{t=1}^n \pa_\T \tilde l(X_t;\hat\T_n) \tilde d(X_t;\hat\T_n)'.
\end{eqnarray*}
\end{remark}

\begin{remark}{\label{rm_elt}}
Selecting an optimal set of the elements in (\ref{elm}) for $\tilde d(X_t;\T)$ is indeed a  practical issue. 
Unfortunately, finding an objective criterion for selecting an optimal set that yields a best performance seems not easy. This represents a weakness of the IM test, and a preliminary simulation may be required to search for an optimal set of the elements. Implementing the test using only the diagonal elements in the matrix (\ref{elm}), i.e., $i=j$, could be a practical choice as all parameters are considered equally weighted. Based on our simulation study, the test with such diagonal elements does not always produce best result, but it often performs above average compared to the test using other combinations of the elements. 
\end{remark}

\begin{remark}{\label{rm_merit}}
In the simulation study below, we assess the performance of the IM test comparing with several normality tests developed for i.i.d data, such as the Jarque-Bera test, the Kolmogorov–Smirnov, and the Anderson–Darling test. We conduct the existing tests using residuals although the limiting null distributions of most of these tests have not been established for the time series models considered in the simulations. Based on our findings, the IM test outperforms particularly in cases where the error distribution is not heavy-tailed and also shows a satisfactory performance in other heavy-tailed cases. For more details, see section \ref{Sec:simu} below.  
\end{remark}
\section{Applications}\label{Sec:3}
In this section, we provide three applications of the IM test to the threshold MA(1) (TMA(1)) model, the GARCH model, and the first-order double AR (DAR(1)) model. In the first application, we will examine all conditions necessary to establish the asymptotic properties of the MLE and to obtain the limiting null distribution of the IM test. This serves as an example for models where the strong consistency and the asymptotic normality of the MLE are not established.  In the second and third  applications, the asymptotics of the QMLE  have been well investigated. For example, see \cite{francq:zakoian:2004} and \cite{ling:2004} for the GARCH model and DAR(1) model, respectively. Hence, our focus in these applications is to check whether the conditions {\bf C1} - {\bf C4} hold for these models. Since our objective is to test the normality of the error distribution, we derive the following results under $H_0$.

\subsection{Threshold  MA(1) model}\label{Sec:3_1}
Consider the following TMA(1) model:
\begin{eqnarray*}
	X_{t}= \left(\phi + \xi I (X_{t-1} \le u) \right) \sigma e_{t-1} + \sigma e_{t}
\end{eqnarray*}
where $I(\cdot)$ is the indicator function and $\{e_t\}$ is a sequence of i.i.d. random variables with mean zero and unit variance.  The threshold value $u \in \mathbb{R}$  is assumed to be fixed in prior.  We denote the parameter vector by $\theta=(\phi,\xi,\sigma^2)' \in \Theta \subset \mathbb{R}^2\times(0,\infty)$.  Noting that 
 $\E (X_t |\mathcal{F}_{t-1})=\left(\phi + \xi I (X_{t-1} \le u) \right) \sigma e_{t-1}$ and $\mathrm{var}(X_t|\mathcal{F}_{t-1})=\sigma^2$,  one can define the QMLE for the TMA(1) model as follows:
 \begin{eqnarray}\label{mle_TMA}
     \hat\theta_n = \argmax_{\theta \in \Theta}  \sum_{t=1}^n \tilde l_t(\theta),
 \end{eqnarray}
 where 
 \[ \tilde l_t(\theta) = -\frac{1}{2} \log \sigma^2-\frac{1}{2\sigma^2} \left\{X_t-\left(\phi + \xi I (X_{t-1} \le u) \right) \sigma \tilde \ep_{t-1}\right\}^2\]
 and $\{\tilde \ep_t | 1\leq t \leq n\}$ is the approximated process for $\{e_t\}$ given recursively by
 \begin{eqnarray}\label{tma.tep}
 \tilde \ep_t(\T):=\tilde \ep_t = \frac{1}{\sigma}X_t-(\phi+\xi I(X_{t-1} \le u)) \tilde\ep_{t-1}
 \end{eqnarray}
 with $\tilde \ep_0 = 0$ as the initial value. 
For the stationarity and ergodicity, we assume that for some constants $c_1<1$, $c_2$, and $c_3$,
\begin{eqnarray}\label{tma.ps}
   \Theta=\{ \T\, |\,  \left| \phi \right| \leq c_1, \left| \phi + \xi \right| \leq c_1, 0<c_2\leq \sigma^2 \leq c_3\}
\end{eqnarray}
and that the true parameter $\T_0$ lies in the interior of $\Theta$. Then, by \cite{ling:2007}, the process $\{X_t | t\in \mathbb{Z} \}$ of TMA(1) model is strictly stationary, ergodic, and further invertible.  
 Hereafter in this subsection, $l_t(\T)$ denotes the counterpart of $\tilde l_t(\T)$ obtained by substituting $\tilde \ep_{t-1}$ with $\ep_{t-1}$ defined as the solution of
 \begin{eqnarray}\label{tma.ep}
 \ep_t(\T):= \ep_t = \frac{1}{\sigma}X_t-(\phi+\xi I(X_{t-1} \le u)) \ep_{t-1}\quad\mbox{for}\  t\in\mathbb{Z}.
 \end{eqnarray}
Thanks to Theorem A.2 of \cite{ling:2005}, the process $\{\ep_t\}$ is well defined, and one can see that it is also strictly stationary and ergodic due to the stationarity and ergodicity of $\{X_t\}$. We also note that $\ep_t(\T_0)= e_t$. 
\begin{remark}
According to \cite{ling:2007}, the condition $|\xi|\sup_{x} |x f(x)|<1$ is further required for the process to be stationary and ergodic, where $f$ is the density of the error distribution. Since we assume that $e_t \sim N(0,1)$ under $H_0$, one can check that this condition is fulfilled for all $\T \in \Theta$.
\end{remark}
We now check the conditions introduced in Section \ref{Sec:2}. {\bf A1} directly follows from the stationarity and ergodicity of $\{X_t |t\in\mathbb{Z}\}$. To deal with  {\bf A2}, let $h_t(\T):=(\phi+\xi I (X_t \leq u) )\sigma$ and note that
\begin{eqnarray*}
	\E \big[X_t-h_{t-1}(\T)\ep_{t-1}\big]^2&=& \E \big[\sigma_0 e_t +h_{t-1}(\T_0)e_{t-1}-h_{t-1}(\T)\ep_{t-1}\big]^2\\
	&=&
	\sigma_0^2 +\E \big[h_{t-1}(\T_0)e_{t-1}-h_{t-1}(\T)\ep_{t-1} \big]^2
\end{eqnarray*}
Then, we have
\begin{eqnarray*}
	\E\,  l_t(\T) 
	&=& -\frac{1}{2} \Big( \log \sigma^2 +  \frac{\sigma^2_0}{\sigma^2} \Big) - \frac{1}{2\sigma^2} \E \big[ h_{t-1}(\T_0)e_{t-1}-h_{t-1}(\T)\ep_{t-1}\big]^2.
\end{eqnarray*}
One can readily see that the first term on the left side of the above equality is maximized at $\sigma^2=\sigma_0^2$. Since the second term is non-positive, $\E\, l_t(\T)  $ reaches the maximum when $h_{t-1}(\T_0)e_{t-1}-h_{t-1}(\T)\ep_{t-1}$is equal to zero almost surely, which implies that $\E\, l_t(\T) $ is maximized at $\T=\T_0$.
Next, since $l_t(\T)= -\frac{1}{2}\log\sigma^2 -\frac{1}{2}\ep^2_t$, it follows from the boundedness of $\Theta$ and Lemma \ref{lm_TMA0} that
\begin{eqnarray*}
	\E \sup_{\T\in\Theta} \left| l_t(\T) \right| &\les &  1+\E \sup_{\T\in\Theta} \ep^2_{t}<\infty.
\end{eqnarray*}
Here, we used the relation $A_n\les B_n$, where $A_n$ and $B_n$ are nonnegative, to denote that $A_n \leq KB_n$ for a positive constant $K$, and we shall use throughout the paper. {\bf A3}(b) can be shown by using Lemmas \ref{lm_TMA0} and \ref{lm_TMA1} as follows:
\begin{eqnarray*}
   \frac{1}{n} \sum_{t=1}^{n} \sup_{\T\in\Theta} \left| l_t(\T) - \tilde{l}_t(\T) \right|
	&=& \frac{1}{n} \sum_{t=1}^{n} \sup_{\T\in\Theta} \frac{1}{2}\left| \tilde \ep_{t}^2-\ep^2_{t} \right| \\
		&\les& \frac{1}{n} \sum_{t=1}^{\infty}\rho^t\big(1+\sup_{\T\in\Theta}|\ep_t|\big)=O\big(\frac{1}{n}\big)\quad a.s.
\end{eqnarray*}
Noting that $\mu_t(\T)=(\phi+\xi I(X_{t-1}\leq u))\sigma \ep_{t-1}$ and $\sigma_t(\T)=\sigma^2$ and that $\ep_t$ can be expressed as in (\ref{ept}), one can see that {\bf A4} is satisfied.  {\bf A5} and {\bf A6} are shown in Lemma \ref{nonsing} under $H_0$ and Lemma \ref{moment2}, respectively. {\bf A7} and {\bf A8} follows from Lemma \ref{lm_approx}. Therefore, the MLE defined in (\ref{mle_TMA}) is strongly consistent and satisfy the asymptotic normality. Furthermore, one can readily show {\bf C2} by using the results in Lemma \ref{moment2}. {\bf C3} can be also shown by using Lemma \ref{moment2} and Lemma \ref{lm2} (cf. see Remark \ref{VT0}). {\bf C4} comes from Lemma \ref{lm_approx}. Hence, we have the following result for TMA(1) model. 
\begin{theorem}\label{thm_TMA}
	Under $H_0$, it holds that
\begin{eqnarray*}
 T_n\stackrel{d}{\longrightarrow} N_q ( {\bf 0},V(\T_0) ),
\end{eqnarray*}
where $V(\T_0)$ is the one given in Theorem \ref{thm_main}. Thus, if $\hat V_n$ is a consistent estimator of $V(\T_0)$, we have that
\[ T_n' \hat V_n^{-1} T_n \stackrel{d}{\longrightarrow} \chi^2_q.\]
\end{theorem}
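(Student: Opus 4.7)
The plan is essentially bookkeeping: the theorem follows directly from Theorems \ref{thm_main} and \ref{thm_test} once we verify that the TMA(1) model satisfies all of the hypotheses \textbf{A1}--\textbf{A7} and \textbf{C1}--\textbf{C4} under $H_0$. Since every ingredient is announced in the paragraph preceding the statement, I would structure the proof as an explicit checklist rather than a new argument, cashing in the auxiliary lemmas established elsewhere in the paper.

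First I would dispatch the consistency/asymptotic-normality package. \textbf{A1} is inherited from the strict stationarity and ergodicity of $\{X_t\}$ together with the fact that Theorem A.2 of Ling (2005) makes $\{\ep_t(\T)\}$ stationary-ergodic. \textbf{A2} follows from the decomposition of $\E\, l_t(\T)$ already displayed: the first term $-\tfrac{1}{2}(\log\sigma^2 + \sigma_0^2/\sigma^2)$ is uniquely maximized at $\sigma^2 = \sigma_0^2$, and the remaining non-positive term vanishes only when $h_{t-1}(\T_0)e_{t-1}=h_{t-1}(\T)\ep_{t-1}$ a.s., which forces $\T=\T_0$ on the invertible parameter region (\ref{tma.ps}). \textbf{A3}(a) combines the boundedness of $\Theta$ with the uniform moment bound in Lemma \ref{lm_TMA0}, while \textbf{A3}(b) uses the geometric approximation rate of Lemma \ref{lm_TMA1}. \textbf{A4} is immediate from the explicit form of $\mu_t,\sigma_t^2$; \textbf{A5} reduces to Lemma \ref{nonsing} under $H_0$; \textbf{A6} to Lemma \ref{moment2}; and \textbf{A7} to Lemma \ref{lm_approx}.

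Next I would verify the IM-specific hypotheses. \textbf{C1} is trivial because $l_t(\T)$ is a polynomial in $\ep_t(\T)$ and $\sigma^{-1}$, both smooth in $\T$. \textbf{C2} follows from the uniform moment bounds for derivatives of $\ep_t$ collected in Lemma \ref{moment2}. \textbf{C3} follows from the alternative representation of $V(\T_0)$ given in Remark \ref{VT0}, so the existence of the covariance reduces to the existence of second moments of $d(X_t;\T_0)$ and $\pa_\T l(X_t;\T_0)$, once again from Lemma \ref{moment2}. Finally \textbf{C4} is exactly Lemma \ref{lm_approx} applied to the first and second derivatives of $l_t-\tilde l_t$. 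With all eleven conditions in place, Theorem \ref{thm_main} yields $T_n \stackrel{d}{\longrightarrow} N_q({\bf 0},V(\T_0))$, and Theorem \ref{thm_test} upgrades this to the chi-square conclusion, provided we supply a consistent $\hat V_n$; the natural candidate is the plug-in of Remark \ref{VT0}, whose consistency follows from the strong consistency of $\hat\T_n$, the ergodic theorem applied to the relevant sample second moments, and Lemma \ref{lm_approx} one more time to swap $\tilde l_t$ for $l_t$.

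The main obstacle, and where all the real work lives, is Lemma \ref{lm_approx} together with the uniform moment bounds of Lemma \ref{moment2}. Both rely on the invertibility of the TMA(1) recursion, which is what produces the geometric decay of $\sup_{\T\in\Theta}|\tilde\ep_t(\T)-\ep_t(\T)|$; differentiating the recursion once or twice introduces multiplicative factors whose uniform moments must be controlled under the constraints $|\phi|,|\phi+\xi|\le c_1<1$ imposed in (\ref{tma.ps}). A secondary subtlety is the nonsingularity of $V(\T_0)$ required by Theorem \ref{thm_test}: this must be checked for the particular index set $\{(i_k,j_k)\}$ one selects, but under $H_0$ it reduces to a straightforward linear-independence check among moment functionals of $e_t\sim N(0,1)$ combined with Lemma \ref{nonsing}.
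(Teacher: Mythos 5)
Your proposal is correct and follows essentially the same route as the paper: both verify \textbf{A1}--\textbf{A7} and \textbf{C1}--\textbf{C4} for the TMA(1) model by citing the same auxiliary lemmas (Lemmas \ref{lm_TMA0}, \ref{lm_TMA1}, \ref{moment2}, \ref{nonsing}, \ref{lm_approx} and Remark \ref{VT0}) and then invoke Theorems \ref{thm_main} and \ref{thm_test}. Your identification of where the real work lies (the geometric approximation and uniform moment bounds driven by invertibility) matches the structure of the paper's argument.
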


\subsection{GARCH model}\label{Sec:3_2}
\noindent Consider the  following GARCH(p,q) models:
\begin{equation}\label{GARCH}
	\begin{aligned}
		X_t &=\sigma_{t} e_t \\
		\sigma^{2}_{t}&=\omega+\sum_{i=1}^{p} \alpha_{i} X^{2}_{t-i}+\sum_{j=1}^{q} \beta_{j}\sigma^{2}_{t-j}
	\end{aligned}	
\end{equation}

\noindent where $\omega >0$, $\alpha_{i} \ge 0$, $\beta_{j} \ge 0$ and $\{ e_{t}| t \in \mathbb{Z} \}$ is a sequence of i.i.d random variables with zero mean and unit variance. The parameter vector is denoted by $\T=(\omega,\alpha_{1},\cdots,\alpha_{p},\beta_{1},\cdots,\beta_{q})^{'} \in \Theta \subset { (0,\infty) \times [0,\infty)^{p+q}}$ and the true parameter that generates the process $\{X_t\}$ is denoted by $\T_0$. We assume that $\{X_t\}$ is strictly stationary and ergodic. The detailed conditions for the GARCH model to have such solution can be found, for example, in \cite{bougerol:1992} and \cite{chen:an:1998}.

As an estimator for $\T$, we employ the QMLE of 
\cite{francq:zakoian:2004} given by
\begin{equation} \label{qmle-garch}
	\begin{aligned}
		\hat{\theta}_{n} &= \argmax_{\theta \in \Theta} \sum_{t=1}^{n} \tilde{l}_{t}(\theta)
	\end{aligned}
\end{equation}
where 
\[\tilde{l}_{t}(\theta)=-\frac{1}{2} \left( \log \tilde{\sigma}^{2}_{t}(\theta)+\frac{X^2_t}{\tilde{\sigma}^{2}_{t}(\theta)}  \right)\]
and 
$\{\tilde{\sigma}^{2}_{t}| 1 \le t \le n \}$ is the processes defined recursively by
\begin{eqnarray*}
	          \tilde{\sigma}^{2}_{t}(\T): = \tilde{\sigma}^{2}= \omega+\sum_{i=1}^{p} \alpha_{i} X^2_{t-i} + \sum_{j=1}^{q} \beta_{j} \tilde{\sigma}^{2}_{t-j}.
			\end{eqnarray*}
Here the initial values are assumed to be given properly. $l_t(\T)$, the stationary version of $\tilde l_t(\T)$, is given as $-\frac{1}{2}\big(\log \sigma_t^2(\T) +X^2_t/\sigma_t^2(\T)\big),$ where $\{\sigma_t^2(\T) | t \in \mathbb{Z} \}$ is defined as the solution of 
\begin{eqnarray*}
		 \sigma^{2}_{t}(\T):= \sigma_t^{2} = \omega+\sum_{i=1}^{p} \alpha_{i} X^2_{t-i} + \sum_{j=1}^{q} \beta_{j} \sigma^{2}_{t-j}.
\end{eqnarray*}
We consider the following standard assumptions.
\begin{enumerate}
    \renewcommand{\labelenumi}{\bf{G1.}}
	\item $\Theta$ is a compact set.
	\renewcommand{\labelenumi}{\bf{G2.}}
	\item For all $\theta \in \Theta$, $\sum_{j=1}^{q} \beta_{j} <1$.
	\renewcommand{\labelenumi}{\bf{G3.}}
	\item If $q>0$, $\mathcal{A}_{\theta_{0}} \mathscr(z)$ and $\mathcal{B}_{\theta_{0}} \mathscr(z)$ have no common root, $\mathcal{A}_{\theta_{0}} \mathscr(1) \neq 0$ and $\alpha_{0p}+\beta_{0q} \neq 0$ where $\mathcal{A}_{\T_0} \mathscr(z) = \sum_{i=1}^{p} \alpha_{i} \mathscr(z)^{i}$ and $\mathcal{B}_{\T_0} \mathscr(z) = 1-\sum_{j=1}^{q} \beta_{j} \mathscr(z)^{i}$. (Conventionally, $\mathcal{A}_{\T_0} \mathscr(z) = 0$ if $p=0$ and $\mathcal{B}_{\T_0} \mathscr(z)  =1$ if $q=0$).
	\renewcommand{\labelenumi}{\bf{G4.}}
	\item $\T_0$ is in the interior of $\Theta$.
\end{enumerate}
Under the assumptions above, \cite{francq:zakoian:2004} showed the strong consistency and the asymptotic normality of the estimator. Also, $\mathcal{J}= \E\left[ \partial^{2}_{\theta \theta'} l_t(\theta_{0}) \right]$ and $\mathcal{I}= \E\left[ \partial_{\theta}l_t(\theta_{0})\partial_{\theta'}l_t(\theta_{0}) \right]$  are  positive definite.  {\bf C2} and {\bf C3} can be shown by using Lemma \ref{lm_garch_C2}. Further, ${\bf C4}$ is implied by Lemma \ref{lm_garch_C4}.  Hence, we have the following result for the GARCH models.
\begin{theorem} \label{thm-garch}
	Suppose that the assumptions $\bf{G1}$-$\bf{G4}$ hold. If $\hat V_n$ is a consistent estimator of $V(\T_0)$, then under $H_{0}$, we have
\[ T_n' \hat V_n^{-1} T_n \stackrel{d}{\longrightarrow} \chi^2_q.\]
\end{theorem}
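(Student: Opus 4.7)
The plan is to deduce Theorem \ref{thm-garch} directly from Theorem \ref{thm_test}. Since Theorem \ref{thm_test} only requires the assumptions \textbf{A1}--\textbf{A7} and \textbf{C1}--\textbf{C4}, the nonsingularity of $V(\T_0)$, and the consistency of $\hat V_n$, my strategy is to borrow the QMLE asymptotics of \cite{francq:zakoian:2004} for the bulk of \textbf{A1}--\textbf{A7} and then focus on the structural conditions \textbf{C1}--\textbf{C4} that involve $\tilde d(X_t;\T)$.

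First, under \textbf{G1}--\textbf{G4} the results of \cite{francq:zakoian:2004} provide strict stationarity and ergodicity of $\{X_t\}$ and $\{l_t(\T)\}$ (assumption \textbf{A1}), identifiability of $\T_0$ (assumption \textbf{A2}), the uniform integrability and almost-sure approximation bound on $l_t$ versus $\tilde l_t$ (assumption \textbf{A3}), the positive definiteness of $\mathcal J$ and existence of $\mathcal I$ (assumption \textbf{A5}), the uniform integrability of the second derivatives (assumption \textbf{A6}) and the $n^{-1/2}$-rate score approximation (assumption \textbf{A7}). Assumption \textbf{A4} is automatic because $\sigma_t^2(\T) = \omega + \sum_i \alpha_i X_{t-i}^2 + \sum_j \beta_j \sigma_{t-j}^2(\T)$ is smooth in $\T$ and $\mu_t \equiv 0$; the same argument, combined with the bound $\sigma_t^2 \geq \omega > 0$ on $\Theta$, shows that $l_t(\T) = -\tfrac{1}{2}\bigl(\log\sigma_t^2(\T) + X_t^2/\sigma_t^2(\T)\bigr)$ is three times continuously differentiable in $\T$, giving \textbf{C1}.

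The nontrivial content lies in \textbf{C2}--\textbf{C4}. Condition \textbf{C2} asks for an integrable dominating function for $\nabla d(X_t;\T)$ on a neighborhood of $\T_0$ and \textbf{C3} asks for the covariances of $d(X_t;\T_0)$ and of $d(X_t;\T_0) - \E[\nabla d(X_t;\T_0)]\mathcal J^{-1}\pa_\T l(X_t;\T_0)$ to be finite; both reduce to uniform $L^r$ bounds on the derivatives of $\log\sigma_t^2(\T)$ up to order three, which are packaged in Lemma \ref{lm_garch_C2}. Condition \textbf{C4}, which controls the $n^{-1/2}$-scaled error of approximating $\paa l_t$ and $\pa_\T l_t\,\pa_{\T'} l_t$ by their tilde counterparts, is the content of Lemma \ref{lm_garch_C4}. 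Once \textbf{A1}--\textbf{A7} and \textbf{C1}--\textbf{C4} are in hand, Theorem \ref{thm_main} yields $T_n \stackrel{d}{\longrightarrow} N_q(\mathbf 0, V(\T_0))$. Consistency of the plug-in $\hat V_n$ from Remark \ref{VT0} then follows from the ergodic theorem applied to the stationary processes $l_t$, $d_t$ and their derivatives, continuity in $\T$, the strong consistency $\hat\T_n \to \T_0$, and the uniform tilde approximations already used for \textbf{C4}; invoking Theorem \ref{thm_test} then delivers the $\chi_q^2$ limit.

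I expect the main obstacle to be the uniform integrability required by \textbf{C2}: third-order derivatives of $\log\sigma_t^2(\T)$ involve triple products of the form $\pa_\T\sigma_t^2/\sigma_t^2$ that must be dominated by an integrable function over an entire neighborhood of $\T_0$, and this is exactly where the ARCH$(\infty)$ representation of $\sigma_t^2$, the uniform lower bound $\sigma_t^2 \geq \omega_{\min} > 0$ on $\Theta$, and the polynomial moments of $e_t$ available under $H_0$ (Gaussianity) become indispensable. A secondary point worth verifying is the nonsingularity of $V(\T_0)$: using the alternative form in Remark \ref{VT0} it suffices to check that the chosen coordinates of $d(X_t;\T_0)$ are not linearly expressible in terms of the Gaussian score $\pa_\T l(X_t;\T_0)$, which can be read off from the explicit expressions for the selected $d_k$'s in the GARCH setting and is the same kind of issue flagged in the discussion following equation \eqref{elm}.
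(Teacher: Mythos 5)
Your proposal is correct and follows essentially the same route as the paper: invoke the QMLE asymptotics of Francq and Zakoian (2004) for \textbf{A1}--\textbf{A7}, verify \textbf{C2}--\textbf{C3} via Lemma \ref{lm_garch_C2} and \textbf{C4} via Lemma \ref{lm_garch_C4}, and then apply Theorems \ref{thm_main} and \ref{thm_test}. Note only that the consistency of $\hat V_n$ is a hypothesis of the theorem rather than something to be proved, and that the nonsingularity of $V(\T_0)$, which you rightly flag, is left unverified in the paper as well.
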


\subsection{First-order DAR model }\label{Sec:3_3}
Consider the following DAR(1) model: 
\begin{eqnarray} \label{dar-model}
			X_t=\phi X_{t-1}+e_t \sqrt{\omega+\alpha X_{t-1}^2},	
\end{eqnarray}
where $\omega$, $\alpha >0$ and $\{e_t\}$ is a sequence of i.i.d random variables with zero mean and unit variance. Denote the parameter vector by $\T=(\phi,\omega,\alpha)'$ and assume that the parameter space is given as follows:  
\begin{eqnarray}\label{dar_par}
   \Theta=\left\{ \T\in\mathbb{R}^3\,|\, \E\log|\phi+e_t\sqrt{\alpha}|<0 \ {\rm with} \ |\phi|\le c_1, c_2\le\omega\le c_3, \ {\rm and}\ c_4\le\alpha\le c_5 \right\}  
\end{eqnarray}
for some finite positive constants $c_1,c_2,c_3,c_4,$ and $c_5$. This constraint on the parameter space  can be found in \cite{ling:2004}. Under the null hypothesis that  $\epsilon_t$ follows $N(0,1)$, the equation (\ref{dar-model}) has a strictly stationary and ergodic solution. 
The QMLE for the DAR(1) model is given as follows:
\begin{eqnarray}\label{dar-qmle}
	\hat{\T}=\argmax_{\theta \in \Theta}\sum_{t=1}^{n}l_t(\T)
\end{eqnarray}
where 
\[ l_t(\T)=-\frac{1}{2}\log\left( \omega+\alpha X_{t-1}^2 \right) - \frac{\left(X_t-\phi X_{t-1}\right)^2}{2\left(\omega+\alpha X_{t-1}^2\right)}. \]
\cite{ling:2004} established the strong consistency and the asymptotic normality of the QMLE above. Hence, in order to implement the IM test for the DAR(1) model,  we only need to verify whether  ${\bf C1}$-${\bf C3}$ are satisfied. It is evident that ${\bf C1}$ holds, and Lemma \ref{lm_dar_C2} confirms the validity of ${\bf C2}$ and ${\bf C3}$. One can therefore see that Theorem \ref{thm_test} holds for the DAR(1) model above.


\begin{table}[t!]
	\renewcommand\arraystretch{1.25}
	\tabcolsep=10pt
	\caption{Empirical sizes of the IM$_{opt}$ test and other normality tests for the TMA(1) model}
	\label{tma-size}
	\centering
	\begin{tabular}{cccccccc}
		\hline
	    &          & \multicolumn{ 2}{c}{$\T=(0.2,0.7,1)$} & \multicolumn{ 2}{c}{$\T=(0.9,-0.7,1)$} & \multicolumn{ 2}{c}{$\T=(-0.5,1,1)$} \\
		\hline
	Test	&     $n$        &  $\alpha$=5\% & $\alpha$=10\% &  $\alpha$=5\% & $\alpha$=10\% &  $\alpha$=5\% & $\alpha$=10\% \\
		\hline
		&       1000 &     0.045  &     0.096  &     0.044  &     0.093  &     0.062  &     0.099  \\
		
		IM$_{opt}$ &       2000 &     0.044  &     0.091  &     0.052  &     0.098  &     0.041  &     0.089  \\
		
		&       3000 &     0.058  &     0.112  &     0.056  &     0.105  &     0.058  &     0.105  \\
		\hline
		&       1000 &     0.052  &     0.095  &     0.048  &     0.092  &     0.044  &     0.091  \\
		
		JB &       2000 &     0.043  &     0.089  &     0.058  &     0.104  &     0.048  &     0.089  \\
		
		&       3000 &     0.057  &     0.107  &     0.045  &     0.103  &     0.054  &     0.101  \\
		\hline
		&       1000 &     0.034  &     0.073  &     0.030  &     0.063  &     0.030  &     0.065  \\
		
		KS &       2000 &     0.048  &     0.080  &     0.035  &     0.070  &     0.034  &     0.073  \\
		
		&       3000 &     0.032  &     0.069  &     0.030  &     0.054  &     0.028  &     0.066  \\
		\hline
		&       1000 &     0.049  &     0.106  &     0.047  &     0.095  &     0.056  &     0.104  \\
		
		CVM &       2000 &     0.048  &     0.095  &     0.046  &     0.101  &     0.058  &     0.102  \\
		
		&       3000 &     0.053  &     0.113  &     0.049  &     0.088  &     0.051  &     0.100  \\
		\hline
		&       1000 &     0.045  &     0.108  &     0.046  &     0.093  &     0.055  &     0.106  \\
		
		AD &       2000 &     0.047  &     0.093  &     0.048  &     0.105  &     0.058  &     0.100  \\
		
		&       3000 &     0.054  &     0.110  &     0.050  &     0.090  &     0.052  &     0.101  \\
		\hline
		&       1000 &     0.040  &     0.096  &     0.044  &     0.107  &     0.054  &     0.106  \\
		
		LL &       2000 &     0.047  &     0.099  &     0.046  &     0.107  &     0.050  &     0.111  \\
		
		&       3000 &     0.045  &     0.108  &     0.046  &     0.101  &     0.043  &     0.102  \\
		\hline
	\end{tabular}  
\end{table}

\section{Simulation studies} \label{Sec:simu}
\indent
    We shall evaluate performance of the proposed test for the TMA(1) model, the GARCH(1,1) model and the DAR(1) model, respectively. For comparisons, we also conduct the following normality tests based on residuals: the Jarque-Bera (JB) test, the Kolmogorov–Smirnov (KS) test, the Cram{\'e}r–von Mises (CVM) test, the Anderson–Darling (AD) test, and the Lilliefors (LL) test. To the best of the author's knowledge, the limiting null distributions of these tests based on residuals have not been established, except for the JB test for GARCH models (cf. \cite{kulperger2005high}). Nevertheless, we use these tests assuming that residuals behave like i.i.d. random variables. The Shapiro–Wilk test and the D’Agostino-Pearson test were also considered, but their results did not show significant differences compared to the AD test and the JB test, respectively. So, we do not report them.
	
	Under $H_0$, we generate errors from $N(0,1)$. To evaluate empirical powers, we consider the following error distributions under $H_1$: the $t$-distribution with 15 degrees of freedom (t(15)), the centered logistic distribution (LD), the normal mixture distributions of $0.2N(0,2)+0.8N(0,0.75)$ (NM1), $0.5N(0.7,1)+0.5N(-0.7,1)$ (NM2), and  $0.5N(1,2)+0.5N(-1,2)$ (NM3), and the generalized lambda distribution (GLD) with the parameter of $(\lambda_1,\lambda_2,\lambda_3, \lambda_4)=(0,1,0.2,0.2)$. All the distributions considered under $H_1$ are scaled to have unit variance. Here,  it is important to note that each kurtosis of t(15), LD, and NM1 is greater than 3, indicating that these distributions have fatter tails than the normal distribution, whereas NM2, NM3, and GLD have a kurtosis less than 3.  

 The following empirical sizes and powers are calculated based on 2,000 repetitions. We performed the IM test with every combination of the elements in (\ref{elm}), but we report the results of the IM test with optimal subset that produced the best performance. For each model considered below, empirical sizes are presented in tables, and empirical powers obtained at the significance level of 10\% are displayed in figures.


We first consider the TMA(1) model with the parameter of $\T=(\phi,\xi,\sigma^2)$ as follows:
	\begin{eqnarray*}
	    X_t = (\phi+\xi I(X_{t-1}\le 0.5))\sigma e_{t-1}+\sigma e_t,
	\end{eqnarray*}
where we consider $\T=(0.2,0.7, 1)$, $(0.9,-0.7, 1)$ and $(-0.5,1, 1)$. 
\begin{figure}[htp] 
  \centering 
  \includegraphics[width=\textwidth,height=0.7\textheight]{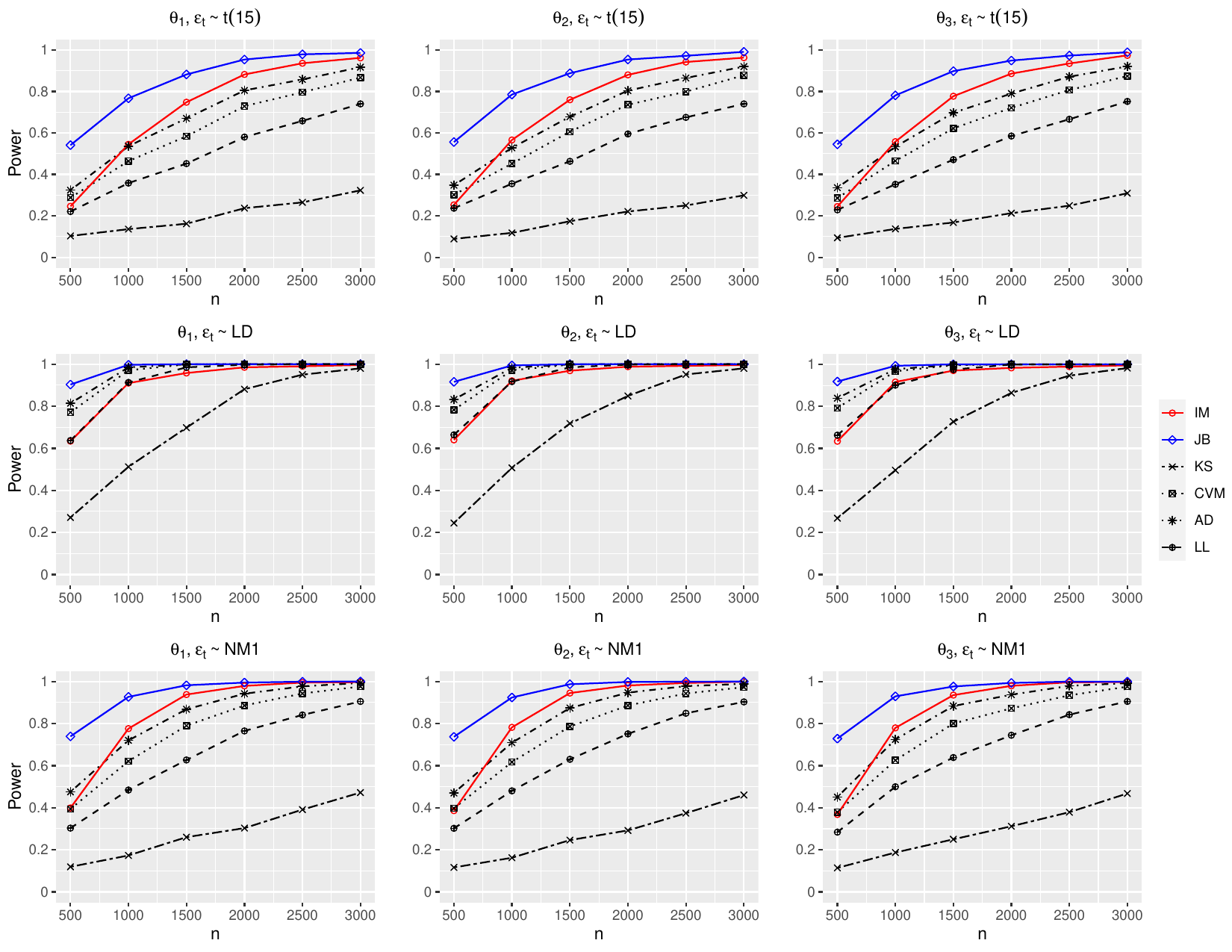} 
  \caption{Empirical powers of the IM$_{opt}$ test and other normality tests for the TMA(1) model when $\ep_t$ follows t(15), LD, and NM1, respectively.} 
  \label{fig:tma-power1} 
\end{figure}

\begin{figure}[htp] 
  \centering 
  \includegraphics[width=\textwidth,height=0.7\textheight]{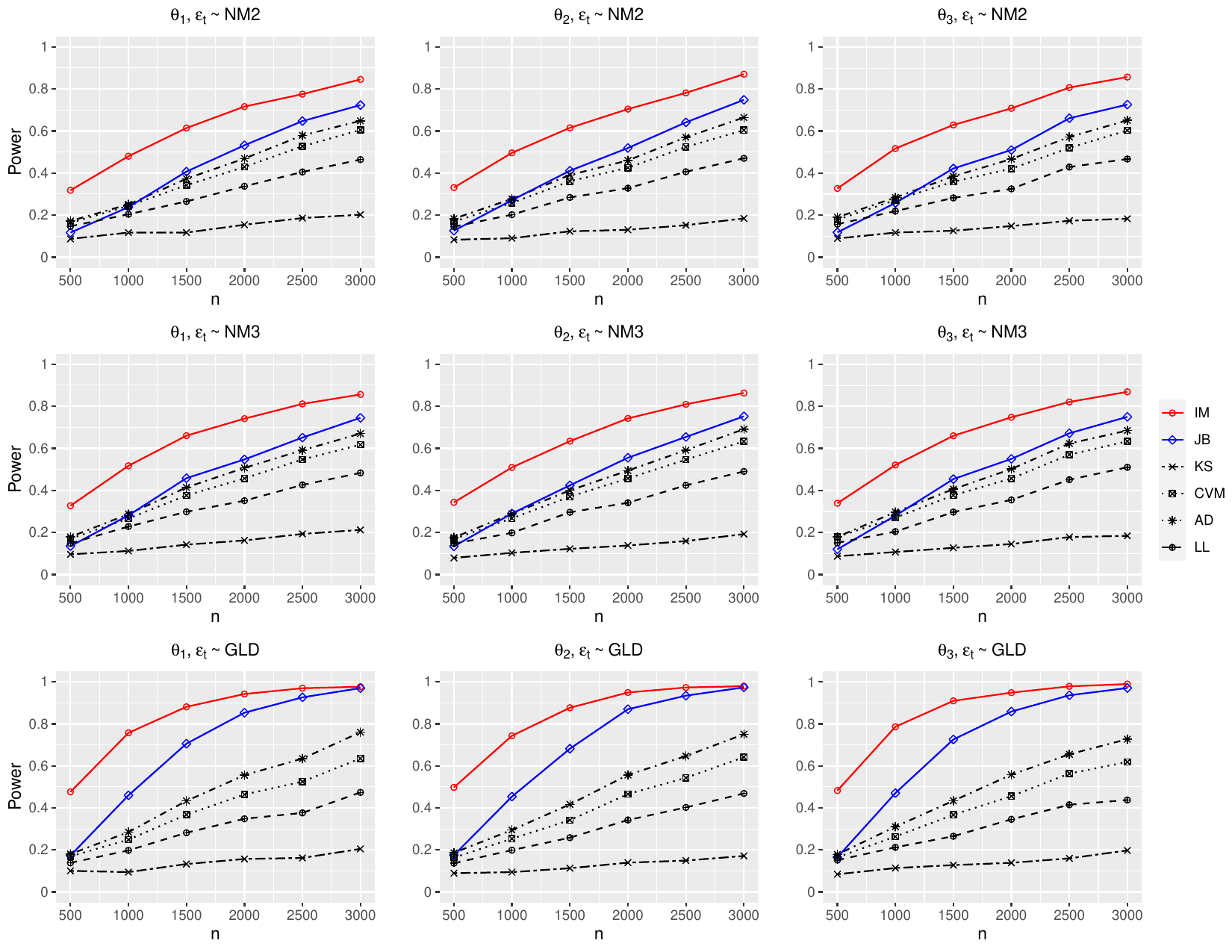} 
\caption{Empirical powers of the IM$_{opt}$ test and other normality tests for the TMA(1) model when $\ep_t$ follows NM2, NM3, and GLD, respectively.}  
\label{fig:tma-power2} 
\end{figure}

    For the above TMA(1) model, the IM test based on $\tilde d(X_t;\T) =\pa^2_{\T_3 \T_3} \tilde l ( X_{t};\theta ) + \pa_{\T_3 } \tilde l ( X_{t};\theta ) \pa_{\T_3 } \tilde l ( X_{t};\theta )$, say IM$_{opt}$, exhibits the best performance, where $\T_3=\sigma^2$. The empirical sizes and powers for the  IM$_{opt}$ and other normality tests are presented in Table \ref{tma-size} and Figures \ref{fig:tma-power1}-\ref{fig:tma-power2}, respectively. It can be seen in Table \ref{tma-size} that the  IM$_{opt}$ consistently produces proper sizes for all parameter cases.  As mentioned earlier, the limiting distributions of the other residual-based tests have not been established for the TMA(1) models. It is, however, noteworthy that these tests still yield reasonable sizes, except for the KS test, which produces somewhat smaller sizes compared to the specified significance levels. From Figure \ref{fig:tma-power1}, we can also see that the IM$_{opt}$ test produces typical shapes of power curves, with the empirical powers increasing as the sample size $n$ grows. Similar trends are observed in other tests but the KS test and the LL test yield comparatively lower powers. Although the JB test performs best for the cases of t(15), LD, and NM1 distributions, the IM$_{opt}$ test also perform quite well in these cases. It is widely recognized  in the literature that the JB test shows strong performance for the cases of heavy-tailed distributions (cf.\cite{thadewald:2007}). Our IM$_{opt}$ test, however, outperforms other normality tests for NM2, NM3, and GLD, as can be seen in Figures \ref{fig:tma-power2}.

\begin{table}[t!]
	\renewcommand\arraystretch{1.25}
	\tabcolsep=10pt
	\caption{Empirical sizes of the IM$_{opt}$ test and other normality tests for the GARCH(1,1) model}
	\label{garch-size}
	\centering
	\begin{tabular}{cccccccc}
		\hline
	    &           & \multicolumn{ 2}{c}{$\T=(0.2,0.3,0.2)$} & \multicolumn{ 2}{c}{$\T=(0.2,0.1,0.8)$} & \multicolumn{ 2}{c}{$\T=(0.2,0.05,0.9)$} \\
		\hline
	Test	&     $n$       &  $\alpha$=5\% & $\alpha$=10\% &  $\alpha$=5\% & $\alpha$=10\% &  $\alpha$=5\% & $\alpha$=10\% \\
		\hline
		&       1000 &     0.038  &     0.088  &     0.044  &     0.915  &     0.040  &     0.092  \\
		
		IM$_{opt}$ &       2000 &     0.057  &     0.105  &     0.049  &     0.096  &     0.054  &     0.104  \\
		
		&       3000 &     0.050  &     0.091  &     0.055  &     0.109  &     0.049  &     0.097  \\
		\hline
		&       1000 &     0.040  &     0.088  &     0.051  &     0.090  &     0.048  &     0.090  \\
		
		JB &       2000 &     0.043  &     0.087  &     0.051  &     0.097  &     0.046  &     0.099  \\
		
		&       3000 &     0.041  &     0.089  &     0.049  &     0.084  &     0.053  &     0.098  \\
		\hline
		&       1000 &     0.036  &     0.087  &     0.036  &     0.074  &     0.046  &     0.094  \\
		
		KS &       2000 &     0.045  &     0.085  &     0.050  &     0.091  &     0.040  &     0.084  \\
		
		&       3000 &     0.047  &     0.091  &     0.044  &     0.094  &     0.036  &     0.074  \\
		\hline
		&       1000 &     0.048  &     0.102  &     0.045  &     0.092  &     0.043  &     0.093  \\
		
		CVM &       2000 &     0.051  &     0.105  &     0.049  &     0.093  &     0.047  &     0.091  \\
		
		&       3000 &     0.053  &     0.102  &     0.051  &     0.095  &     0.041  &     0.097  \\
		\hline
		&       1000 &     0.048  &     0.102  &     0.046  &     0.092  &     0.048  &     0.095  \\
		
		AD &       2000 &     0.049  &     0.106  &     0.045  &     0.100  &     0.046  &     0.093  \\
		
		&       3000 &     0.057  &     0.107  &     0.051  &     0.097  &     0.042  &     0.096  \\
		\hline
		&       1000 &     0.051  &     0.114  &     0.040  &     0.099  &     0.040  &     0.088  \\
		
		LL &       2000 &     0.059  &     0.113  &     0.045  &     0.102  &     0.049  &     0.103  \\
		
		&       3000 &     0.047  &     0.126  &     0.045  &     0.099  &     0.041  &     0.105  \\
		\hline
	\end{tabular} 
\end{table}

We also examine the performance of the IM test for the GARCH(1,1) model and the DAR(1) model, given in (\ref{GARCH}) with $p=q=1$ and (\ref{dar-model}), respectively. The optimal IM test obtained for the GARCH(1,1) model is based on $\tilde{d}(X_t;\T)=(\tilde{d}_{11}(X_t;\T),\tilde{d}_{22}(X_t;\T))^{'}$, where $\tilde d_{ij}(X_t;\T) =\pa^2_{\T_i \T_j} \tilde l ( X_{t};\theta ) + \pa_{\T_i} \tilde l ( X_{t};\theta) \pa_{\T_j} \tilde l ( X_{t};\theta)$ and $\T=(\T_1,\T_2,\T_3)=(\omega, \alpha,\beta)$. Meanwhile, for the DAR(1) model, the IM test using $\tilde{d}(X_t;\T)=\pa^2_{\T_2 \T_3} \tilde l ( X_{t};\theta ) + \pa_{\T_2} \tilde l ( X_{t};\theta) \pa_{\T_3} \tilde l ( X_{t};\theta)$ shows the best performance, where $\T=(\T_1,\T_2,\T_3)=(\phi,\omega,\A)$.

The parameters considered are $\T=(0.2, 0.3, 0.2), ( 0.2, 0.1, 0.8)$, and $( 0.2, 0.05, 0.9)$ for the GARCH model and $\T=(0.2,0.5,0.3), (0.4,0.5,0.5)$, and $(0.5,0.5,0.7)$ for the DAR model. 
The empirical sizes for the GARCH(1,1) model and the DAR(1) model are presented in Tables \ref{garch-size} and \ref{dar-size}, respectively. One can see that the IM$_{opt}$ test consistently achieves reasonable sizes. Our test exhibits stable sizes even in the highly persistent scenarios, such as when $\alpha+\beta = 0.95$ for the GARCH model and when $\phi=0.5$ and $\A=0.7$ for the DAR(1) model. However, the JB test produces relatively larger sizes in the last parameter case for the DAR(1) model. Most of the other tests yield proper sizes. The empirical powers are displayed in Figures \ref{fig:garch-power1}-\ref{fig:garch-power2} for the GARCH(1,1) model and in Figures \ref{fig:dar-power1}-\ref{fig:dar-power2} for the DAR(1) model. The results obtained are similar to those for the TMA(1) model discussed earlier.  

Overall, our simulation results strongly support the validity and effectiveness of the IM test, particularly in cases where the error distributions are not heavy-tailed, while also showing good performance in other heavy-tailed cases. Based on these findings, we can conclude that the IM test serves as a valuable complement to existing tests for testing the normality of innovations in time series models.

\begin{figure}[htp] 
  \centering 
  \includegraphics[width=\textwidth,height=0.7\textheight]{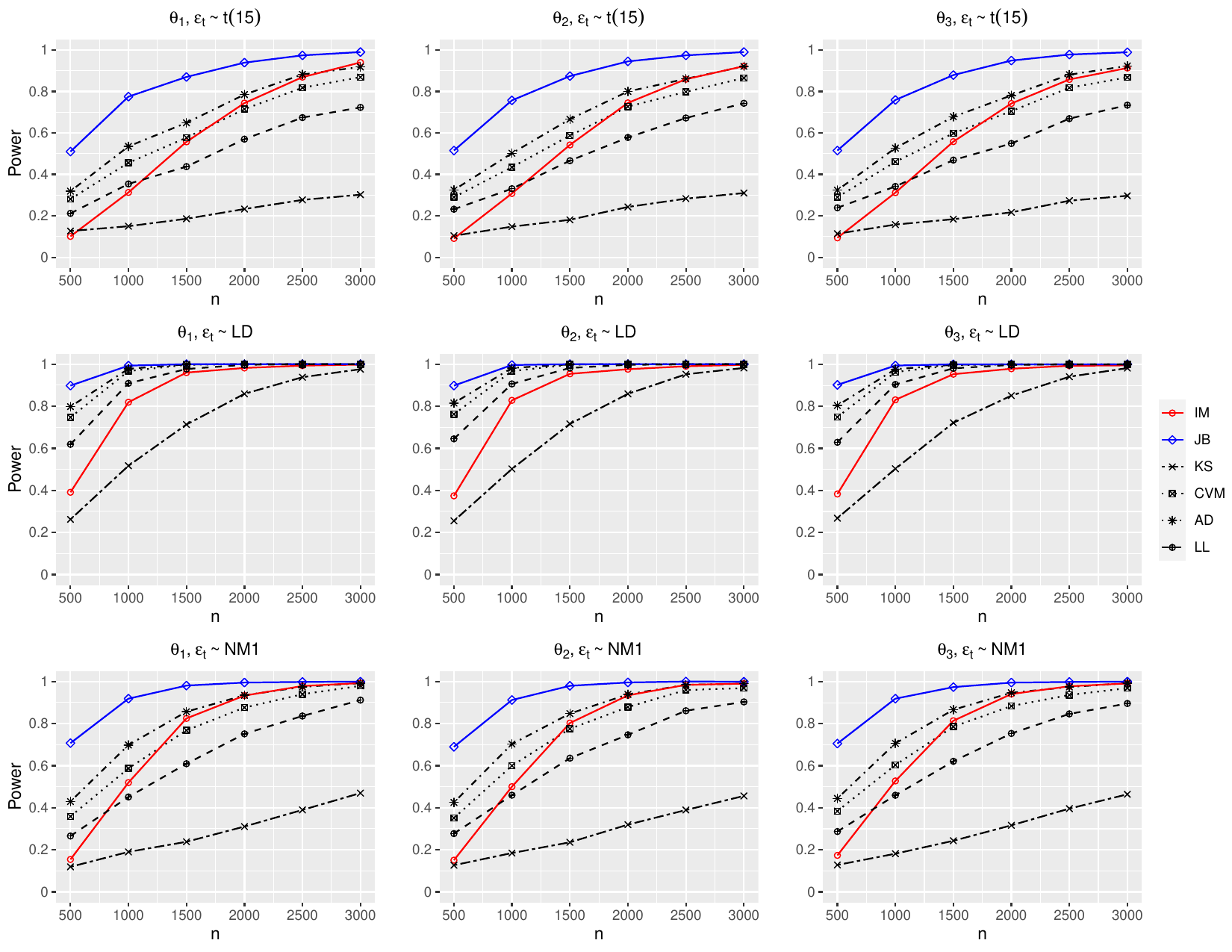} 
\caption{Empirical powers of the IM$_{opt}$ test and other normality tests for the GARCH(1,1) model when $\ep_t$ follows t(15), LD, and NM1, respectively.}   \label{fig:garch-power1} 
\end{figure}

\begin{figure}[htp] 
  \centering 
  \includegraphics[width=\textwidth,height=0.7\textheight]{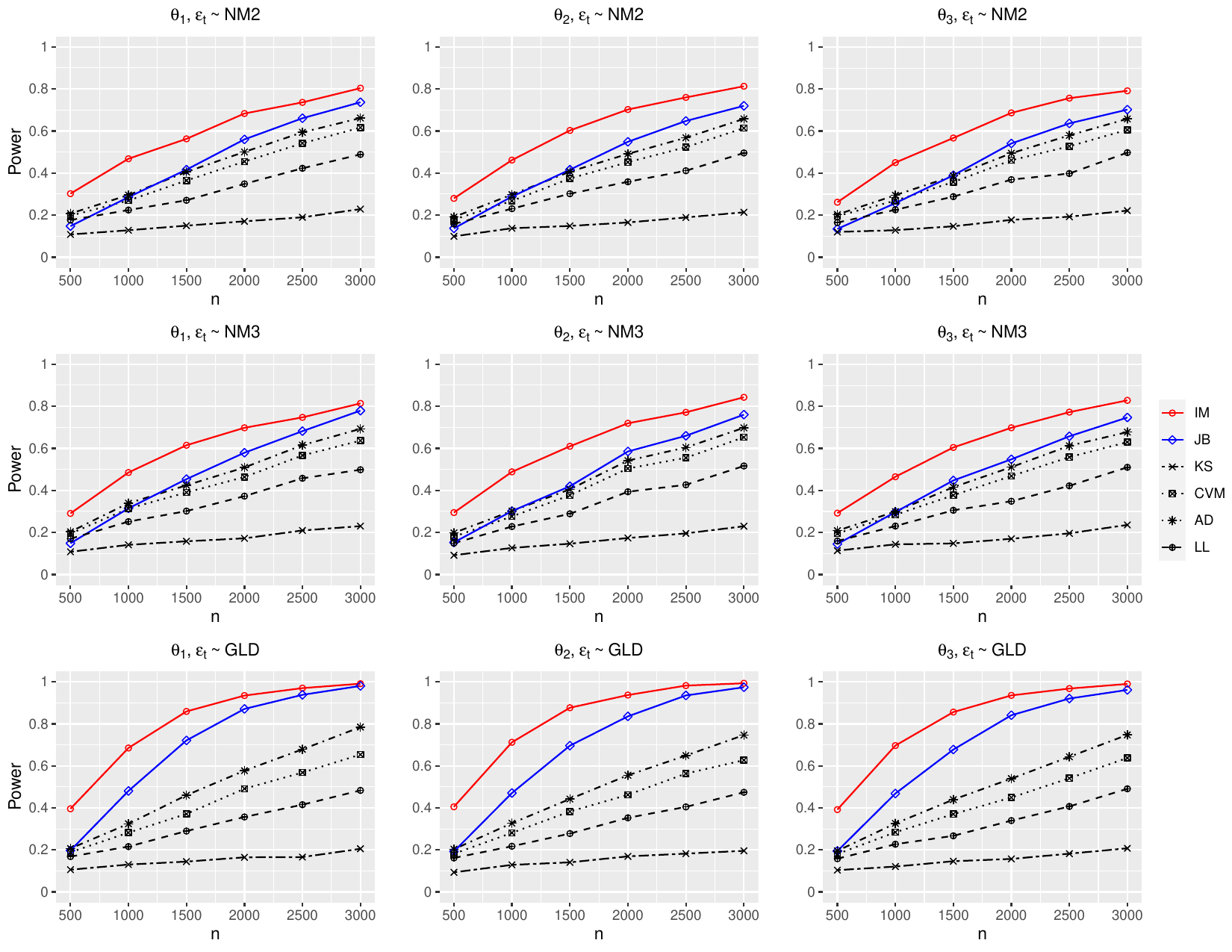} 
\caption{Empirical powers of the IM$_{opt}$ test and other normality tests for the GARCH(1,1) model when $\ep_t$ follows NM2, NM3, and GLD, respectively.}  
  \label{fig:garch-power2} 
\end{figure}

\begin{table}[t!]
		\renewcommand\arraystretch{1.25}
		\tabcolsep=10pt
\caption{Empirical sizes of the IM$_{opt}$ test and other normality tests for the DAR(1) model}		\label{dar-size}
		\centering
	    \begin{tabular}{cccccccc}
	    	\hline
	    	Size &          n & \multicolumn{ 2}{c}{(0.2,0.5,0.3)} & \multicolumn{ 2}{c}{(0.4,0.5,0.5)} & \multicolumn{ 2}{c}{(0.5,0.5,0.7)} \\
	    	\hline
	    	&            &  $\alpha$=5\% & $\alpha$=10\% &  $\alpha$=5\% & $\alpha$=10\% &  $\alpha$=5\% & $\alpha$=10\% \\
	    	\hline
	    	           &       1000 &     0.051  &     0.103  &     0.057  &     0.103  &     0.057  &     0.099  \\

           IM$_{opt}$ &       2000 &     0.046  &     0.102  &     0.059  &     0.107  &     0.060  &     0.119  \\

           &       3000 &     0.057  &     0.104  &     0.052  &     0.104  &     0.063  &     0.108  \\
	    	\hline
	    	&       1000 &     0.041  &     0.090  &     0.059  &     0.107  &     0.099  &     0.151  \\
	    	
	    	JB &       2000 &     0.061  &     0.103  &     0.056  &     0.104  &     0.090  &     0.141  \\
	    	
	    	&       3000 &     0.048  &     0.087  &     0.059  &     0.099  &     0.090  &     0.144  \\
	    	\hline
	    	&       1000 &     0.042  &     0.086  &     0.046  &     0.084  &     0.038  &     0.074  \\
	    	
	    	KS &       2000 &     0.039  &     0.082  &     0.040  &     0.093  &     0.048  &     0.090  \\
	    	
	    	&       3000 &     0.032  &     0.079  &     0.046  &     0.084  &     0.048  &     0.088  \\
	    	\hline
	    	&       1000 &     0.048  &     0.094  &     0.047  &     0.098  &     0.067  &     0.110  \\
	    	
	    	CVM &       2000 &     0.050  &     0.101  &     0.047  &     0.101  &     0.057  &     0.106  \\
	    	
	    	&       3000 &     0.040  &     0.087  &     0.054  &     0.096  &     0.053  &     0.093  \\
	    	\hline
	    	&       1000 &     0.047  &     0.092  &     0.043  &     0.102  &     0.067  &     0.114  \\
	    	
	    	AD &       2000 &     0.053  &     0.101  &     0.050  &     0.098  &     0.069  &     0.111  \\
	    	
	    	&       3000 &     0.035  &     0.087  &     0.053  &     0.093  &     0.057  &     0.107  \\
	    	\hline
	    	&       1000 &     0.043  &     0.102  &     0.048  &     0.099  &     0.057  &     0.119  \\
	    	
	    	LL &       2000 &     0.048  &     0.108  &     0.048  &     0.105  &     0.048  &     0.104  \\
	    	
	    	&       3000 &     0.036  &     0.104  &     0.038  &     0.095  &     0.042  &     0.102  \\
	    	\hline
	    \end{tabular} 
	\end{table}

\begin{figure}[htp] 
  \centering 
  \includegraphics[width=\textwidth,height=0.7\textheight]{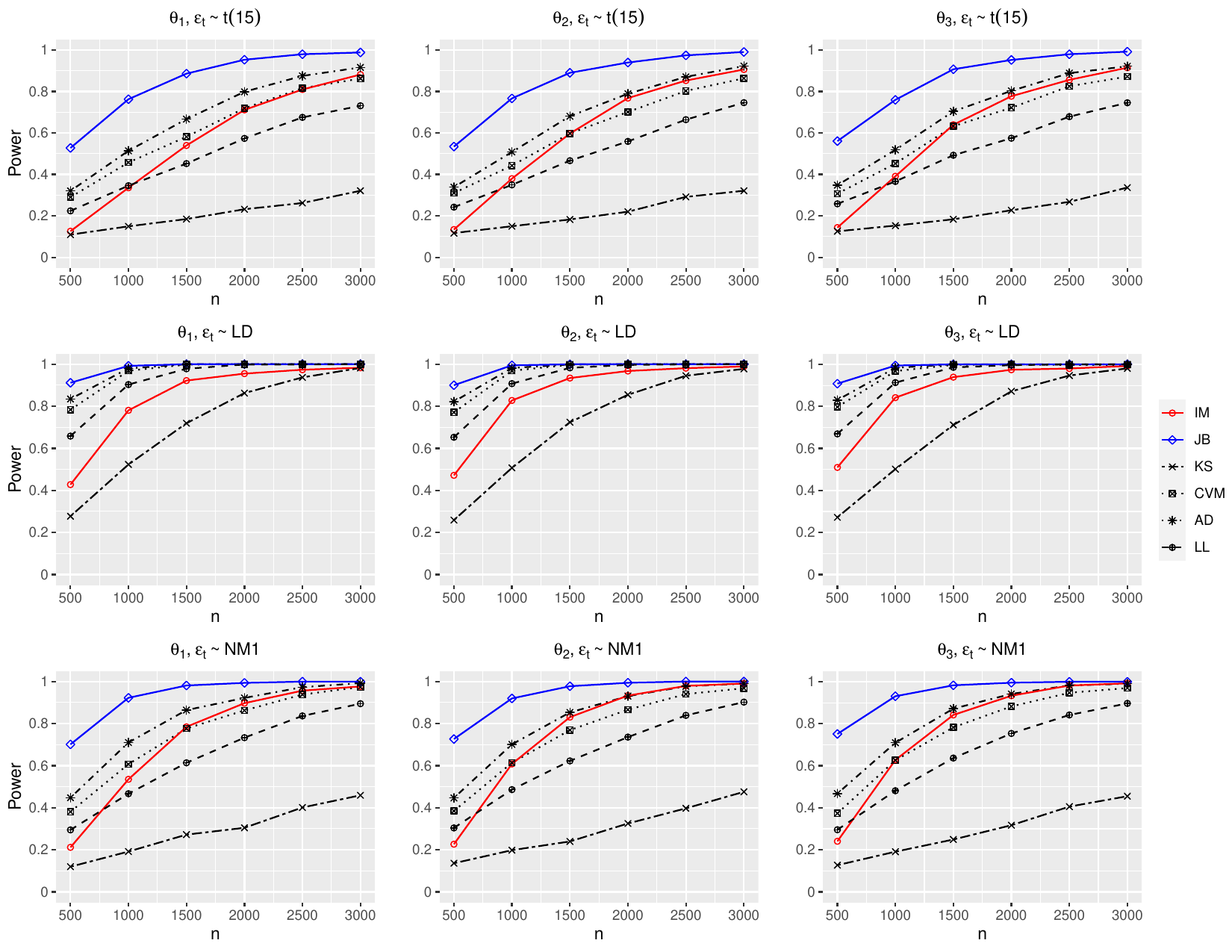} 
\caption{Empirical powers of the IM$_{opt}$ test and other normality tests for the DAR(1) model when $\ep_t$ follows t(15), LD, and NM1, respectively.}  \label{fig:dar-power1} 
\end{figure}

\begin{figure}[htp] 
  \centering 
  \includegraphics[width=\textwidth,height=0.7\textheight]{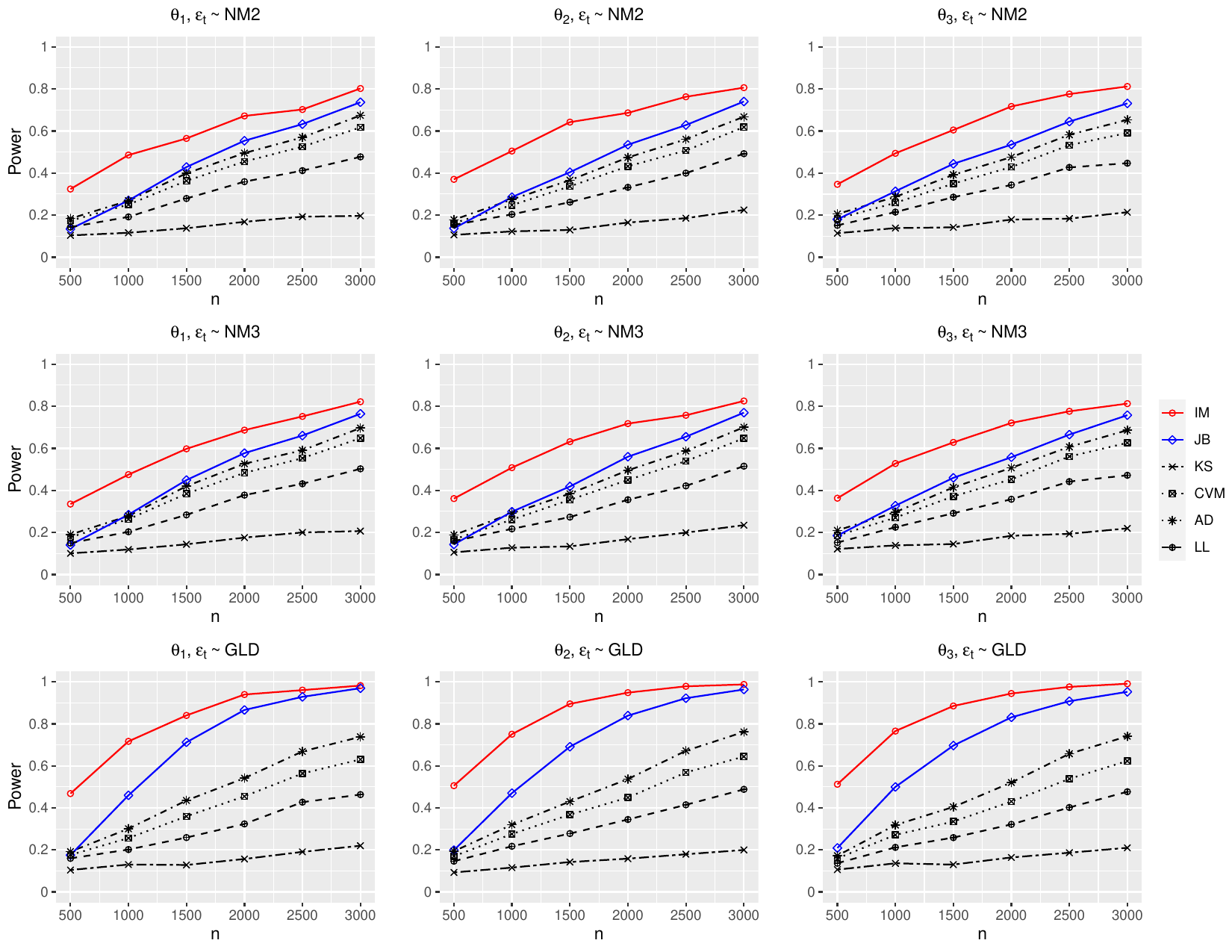} 
\caption{Empirical powers of the IM$_{opt}$ test and other normality tests for the DAR(1) model when $\ep_t$ follows NM2, NM3, and GLD, respectively.}  
  \label{fig:dar-power2} 
\end{figure}

\section{Real data analysis}\label{Sec:real}

In this section, we analyze the log return series of the S\&P500 index during two distinct periods: 2001-2005 and 2006-2010, consisting of 1255 and 1258 observations, respectively. Figure \ref{fig:orig-series} presents the original index series (L) for each period and their corresponding log return series (R).   During the first period (2001-2005), both the market and the economy exhibited relative stability.  In such circumstances, models with normal innovations are typically sufficient to fit the data. However, it is needed to note that the market experienced the global financial crisis during the 2006-2010 period. In this case, it is well known that  heavy-tailed distributions are more appropriate as error distribution.
   
\begin{figure}[t]
	\centering
    \subfigure{
    	\includegraphics[width=3.1in,height=1.8in]{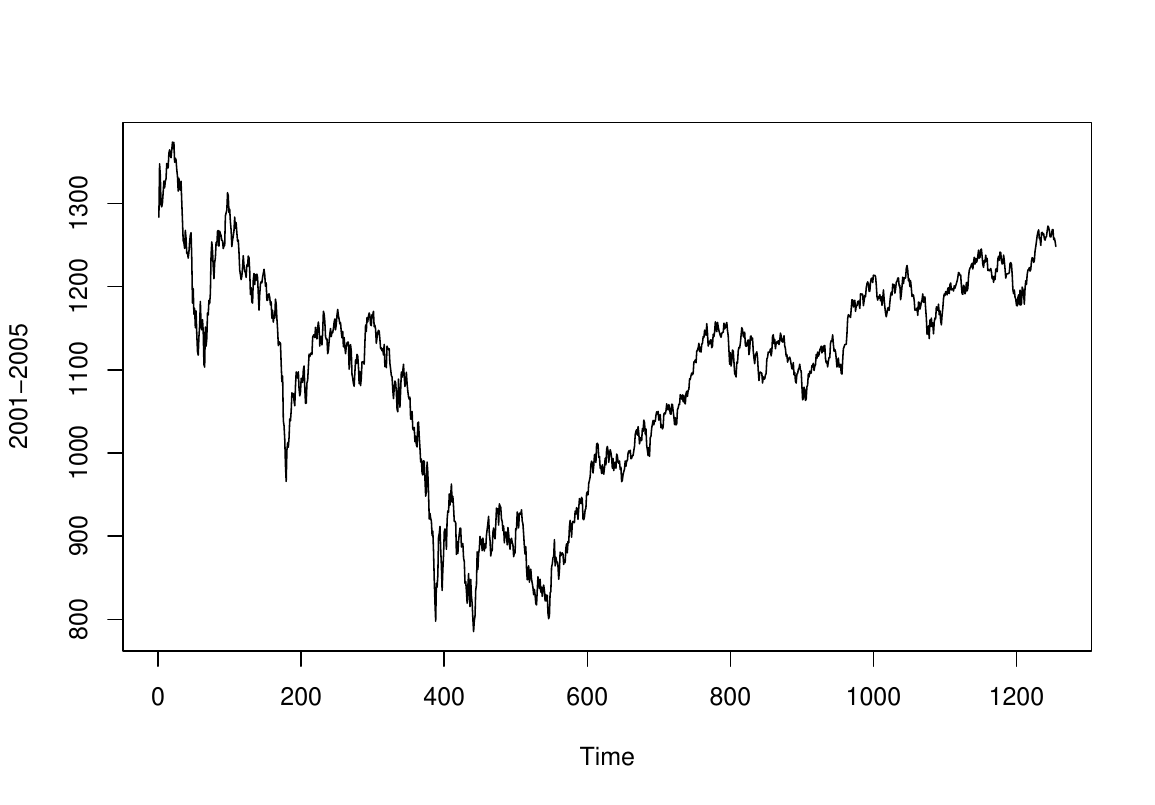}
    }\vspace{-4mm}
    \subfigure{
    	\includegraphics[width=3.1in,height=1.8in]{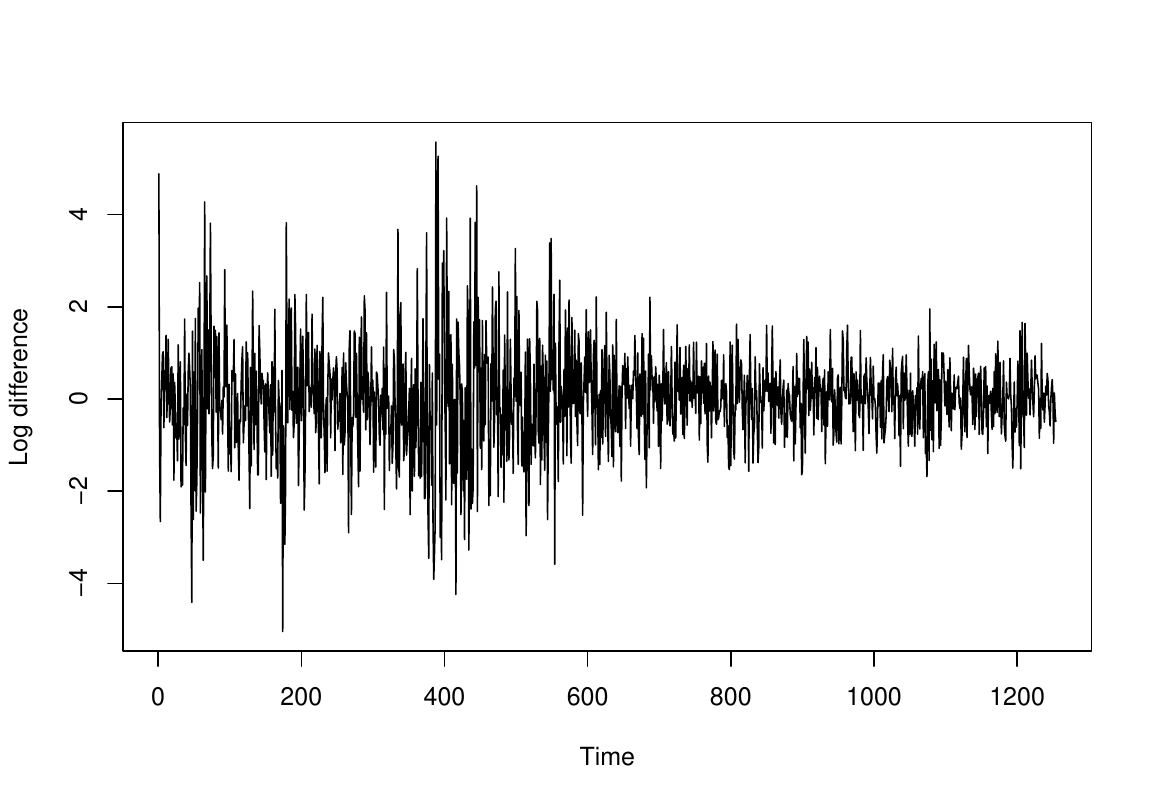}
    }
    \subfigure{
    	\includegraphics[width=3.1in,height=1.8in]{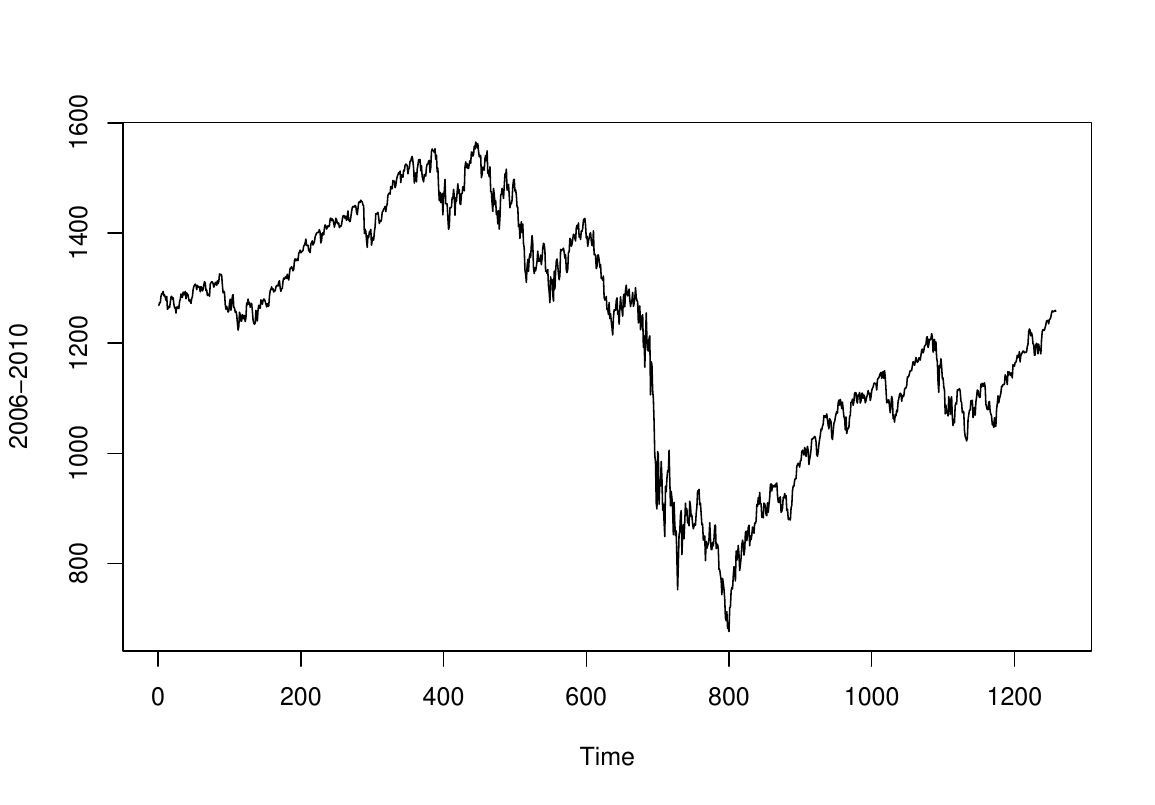}
    }
    \subfigure{
    	\includegraphics[width=3.1in,height=1.8in]{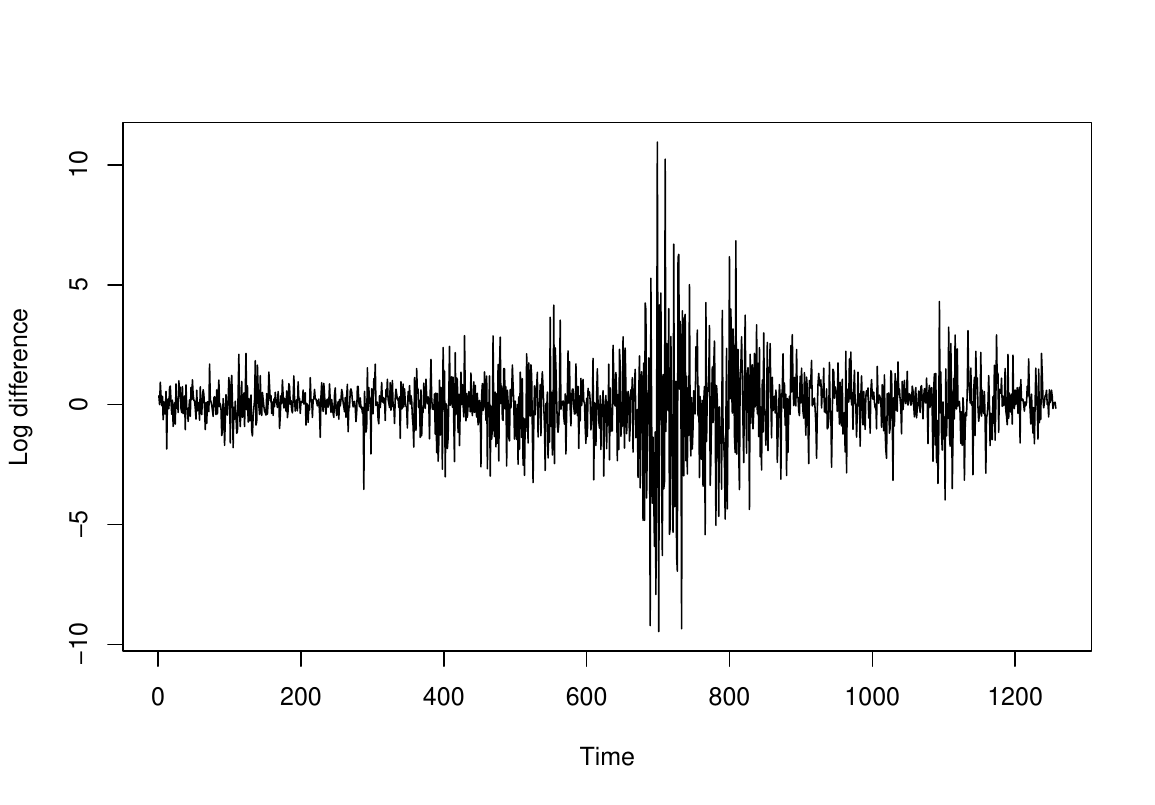}
    }
	\caption{The S\&P500 index series (L) and the corresponding log returns series (R)}
	\label{fig:orig-series}
\end{figure}

 As commonly used in empirical practice, we fit the GARCH(1,1) model given in (\ref{GARCH}) to each data set since each series shows typical features such as arch effect and also due to its simplicity.  Based on the simulation results above, we perform the IM test with $\tilde{d}(X_t;\T)=(\tilde{d}_{11}(X_t;\T),\tilde{d}_{22}(X_t;\T))^{'}$, and also conduct the residual-based JB test. The estimation results and the p-values of the IM test and the JB test are presented in Table \ref{fitted-garch}. The $\hat{\omega}$, $\hat{\alpha}$ and $\hat{\beta}$ are QML estimates and the asymptotic standard errors are given in parentheses. We first note that both tests yield the same conclusion for the second period. The p-values obtained from both tests are close to zero, indicating the rejection of the null hypothesis. As expected, the normal distribution is not suitable as the error distribution during this period. However,for the first period, the two tests lead to different conclusions. The IM test produces a p-value of 0.241, while the JB test yields a p-value of 0.025. That is, at the significance level of 5\%, the IM test does not reject the null hypothesis whereas the JB test rejects it. The JB test relies on skewness and kurotosis, which are sensitive to outlying observations. Upon inspecting the residuals, we omit the residual with the minimum value of -4.33, which is considered to be an influential point, and  reconduct the JB test for the remaining residuals. The resulting p-value of the JB test increases significantly from 0.025 to 0.49, suggesting that the previous result of the JB test is influenced by the presence of the minimum residual. For comparison, we also reimplement the IM test for the log return data without including the observation corresponding to the omitted residual. The p-value obtained from the IM test is 0.256. Based on these results, we can presume that the normal innovation is suitable for the first period.

\begin{table}[t]
	\caption{Parameter estimates and the results of the IM$_{opt}$ test and the JB test}
	\label{fitted-garch}
 \centering
	\small
	\setlength{\tabcolsep}{2mm}{
 \renewcommand\arraystretch{1.8}
	\begin{tabular}{ccccccc}
		\hline
		Periods & $n$ & $\hat{\omega}$ & $\hat{\alpha}$ & $\hat{\beta}$ & IM$_{opt}$ & JB \\
		\hline
		2001\,--\,2005 & 1255 & 0.006(0.004) & 0.066(0.015) & 0.928(0.015) & 0.241 & 0.025 \\
		2006\,--\,2010 & 1258 & 0.017(0.005) & 0.092(0.012) & 0.899(0.012)& 0.001 &0.000\\
		\hline
	\end{tabular}
    }
\end{table}

\section{Concluding remarks}
In this study, we introduced the IM test for testing the normality of innovations in time series models, and provided a set of conditions for time series models under which the IM test follows a chi-square distribution as its limiting null distribution. We applied the IM test to the TMA(1) model, the GARCH model, and DAR(1) model as examples. Through simulation study and real data analysis, we demonstrated the validity and usefulness of the test. It is important to  note that there is no objective criterion for selecting an optimal set of elements to be used in the IM test. Therefore, in order to use the test in  practice, a preliminary simulation would be required to  determine the optimal IM test.  

We expect that the IM test procedure can be extended to multivariate time series models. Application to a random coefficient model is also of interest, as  residuals are not easy to be obtained due the presence of random terms in a random coefficient, consequently making it cumbersome to use the residual-based normality tests.  We leave these issues as a possible topic of future study.


\section{Appendix}
In this appendix, we shall provide the proofs of Theorems and some Lemmas. 

\subsection{Proofs for Section \ref{Sec:2}}
\noindent{\bf Proof of Theorem \ref{thm_qmle}}\\
By {\bf A1} and {\bf A3}(a), one can see that
\[ \sup_{\theta\in\Theta} \Big| \frac{1}{n}\sum_{t=1}^n l(X_t;\T) -\E l(X_1;\T) \Big| =o(1)\quad a.s.\]
(cf. \cite{straumann:2006}). Using the above and {\bf A3}(b), we have
\[ \sup_{\theta\in\Theta} \Big| \frac{1}{n}\sum_{t=1}^n \tilde l(X_t;\T) -\E l(X_1;\T) \Big| =o(1)\quad a.s.,\]
and thus, by the standard arguments, the strong consistency of $\hat \T_n$ is established from {\bf A2}.

Next, we show the asymptotic normality. Since
\begin{eqnarray}\label{pl}
   \pa_\T l(X_t;\T)=-\frac{1}{2} \left(  \frac{1}{\sigma_t^2(\T) }\pa_\T \sigma_t^2(\T)
   - \frac{2}{\sigma_t^2(\T)}(X_t-\mu_t(\T))\pa_\T \mu_t (\T)
   -\frac{1}{\sigma_t^4(\T)} (X_t -\mu_t(\T))^2 \pa_\T \sigma_t^2(\T)\right),
\end{eqnarray}
we have
\begin{eqnarray}\label{pL}
   \pa_\T l(X_t;\T_0)
   &=&-\frac{1}{2} \left(  \frac{1}{\sigma_t^2(\T_0) }(1-\eta_t^2)\pa_\T \sigma_t^2(\T_0)
   - \frac{2}{\sigma_t(\T_0)} \eta_t\pa_\T \mu_t (\T_0)
   \right)
\end{eqnarray}
and thus we can see that  $\E\big[\pa_\T l(X_t;\T_0)|\mathcal{F}_{t-1}\big] =0$. Recalling that $\{\pa_\T l(X_t;\T_0)\}$ is strictly stationary and ergodic, it follows from the central limit theorem for martingales and {\bf A7} that
\begin{eqnarray}\label{clt}
   \frac{1}{\sqrt{n}} \sum_{t=1}^n \pa_\T \tilde l(X_t ;\T_0) \stackrel{d}{\longrightarrow} N_p({\bf 0},\mathcal{I}). 
\end{eqnarray}
Using Taylor's theorem, we have
\[0=\frac{1}{\sqrt{n}}\sum_{t=1}^n \pa_\T \tilde l(X_t;\hat\T_n)
=\frac{1}{\sqrt{n}}\sum_{t=1}^n \pa_\T \tilde l(X_t;\T_0)+\frac{1}{n}\sum_{t=1}^n \paa \tilde l(X_t;\T^*_n )\sqrt{n}(\hat\T_n-\T_0),\]
where $\T^*_n$ lies between $\hat\T_n$ and $\T_0$, and thus we can write that 
\[\sqrt{n}(\hat\T_n -\T_0)=-\mathcal{J}^{-1}\frac{1}{\sqrt{n}}\sum_{t=1}^n \pa_\T \tilde l(X_t;\T_0)-\mathcal{J}^{-1}(\tilde B_n -\mathcal{J})\sqrt{n}(\hat\T_n -\T_0), \]
where $\tilde B_n=\frac{1}{n}\sum_{t=1}^n \paa \tilde l(X_t;\T^*_n )$.
Further, using {\bf A6}, the continuity of $\paa l(X_t;\T)$ in $\T$, and the strong consistency of $\hat\T_n$, one can show that $\frac{1}{n}\sum_{t=1}^n \paa  l(X_t;\T^*_n )$ converges to $\mathcal{J}$ almost surely, so does $\tilde B_n$ due to {\bf A8}. $\sqrt{n}(\hat\T_n -\T_0)$ should therefore be $O_P(1)$, so we have
\begin{eqnarray}\label{P1}
\sqrt{n} (\hat{\theta}_{n}-\theta_0)
= -\mathcal{J}^{-1}\frac{1}{\sqrt{n}} \sum_{t=1}^n \pa_\T \tilde l(X_t ;\T_0)+o_P(1),
\end{eqnarray}
which together with (\ref{clt}) establishes the asymptotic normality of $\hat\T_n$.
\hfill{$\Box$}\\

\begin{lm}\label{lm1}
Under $H_0$, it holds that
\[ \E\left[ \partial_{\theta}l(X_{t};\theta_{0})\partial_{\theta'}l(X_{t};\theta_{0})\right]= -\E\left[ \partial^{2}_{\theta \theta'} l(X_{t};\theta_{0})\right].\]
\end{lm}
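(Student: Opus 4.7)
The plan is to verify the identity by first computing the conditional expectations given $\mathcal{F}_{t-1}$ and then applying the tower property. Since $\mu_t(\T_0)$, $\sigma_t^2(\T_0)$, and all their $\T$-derivatives at $\T_0$ are $\mathcal{F}_{t-1}$-measurable, only $e_t$ is random conditionally, and under $H_0$ the relevant moments are $\E e_t = \E e_t^3 = 0$, $\E e_t^2 = 1$, and $\E e_t^4 = 3$, which give $\E(1-e_t^2)^2 = 2$ and $\E[e_t(1-e_t^2)]=0$.

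Starting from formula (\ref{pL}), the outer product $\pa_\T l(X_t;\T_0)\pa_{\T'}l(X_t;\T_0)$ expands into four pieces whose $e_t$-coefficients are $(1-e_t^2)^2$, $e_t(1-e_t^2)$ (twice), and $e_t^2$. Taking $\E[\cdot\,|\,\mathcal{F}_{t-1}]$, the two cross pieces vanish and the other two collapse to
\[ \E\bigl[\pa_\T l(X_t;\T_0)\,\pa_{\T'}l(X_t;\T_0)\,\bigm|\,\mathcal{F}_{t-1}\bigr] = \frac{1}{2\sigma_t^4(\T_0)}\pa_\T\sigma_t^2(\T_0)\,\pa_{\T'}\sigma_t^2(\T_0) + \frac{1}{\sigma_t^2(\T_0)}\pa_\T\mu_t(\T_0)\,\pa_{\T'}\mu_t(\T_0). \]

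For the Hessian I would differentiate the score (\ref{pl}) once more with respect to $\T'$ and then substitute $X_t-\mu_t(\T_0)=\sigma_t(\T_0)e_t$. The resulting expression decomposes into: (i) products of first derivatives of $\mu_t$ and $\sigma_t^2$ (the target terms); (ii) a piece involving $\paa\mu_t(\T_0)$ carrying a factor $e_t$; and (iii) two pieces involving $\paa\sigma_t^2(\T_0)$ whose coefficients combine to a factor proportional to $e_t^2-1$. Both (ii) and (iii) vanish on taking $\E[\cdot\,|\,\mathcal{F}_{t-1}]$, and careful bookkeeping of the remaining first-derivative products (using $\E e_t^2 = 1$) yields
\[ \E\bigl[\paa l(X_t;\T_0)\,\bigm|\,\mathcal{F}_{t-1}\bigr] = -\frac{1}{2\sigma_t^4(\T_0)}\pa_\T\sigma_t^2(\T_0)\,\pa_{\T'}\sigma_t^2(\T_0) - \frac{1}{\sigma_t^2(\T_0)}\pa_\T\mu_t(\T_0)\,\pa_{\T'}\mu_t(\T_0), \]
which is exactly the negative of the previous display. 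Taking outer expectations (integrability coming from \textbf{A5}) then finishes the proof.

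The only non-mechanical step is the matching of the $\pa_\T\sigma_t^2\,\pa_{\T'}\sigma_t^2$ coefficients between the two sides: this match rests on $\E(1-e_t^2)^2 = 2$, i.e.\ $\E e_t^4 = 3$, which is the point where normality enters essentially and without which the identity would fail for a generic unit-variance distribution. Everything else reduces to careful differentiation and sign bookkeeping, requiring no regularity beyond \textbf{A4}--\textbf{A5} already imposed on the score.
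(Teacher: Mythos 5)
Your proposal is correct and follows essentially the same route as the paper: both compute $\E[\pa_\T l(X_t;\T_0)\pa_{\T'}l(X_t;\T_0)\,|\,\mathcal{F}_{t-1}]$ and $\E[\paa l(X_t;\T_0)\,|\,\mathcal{F}_{t-1}]$ from the explicit score and Hessian after substituting $X_t-\mu_t(\T_0)=\sigma_t(\T_0)e_t$, use $\E(1-e_t^2)^2=2$ and $\E[e_t(1-e_t^2)]=0$ to kill the cross terms and match the $\pa_\T\sigma_t^2\,\pa_{\T'}\sigma_t^2$ coefficients, and conclude by the tower property. Your two conditional-expectation displays agree exactly with the paper's equations (\ref{epl1}) and (\ref{epl2}), and your remark that normality enters only through the fourth-moment identity is an accurate reading of where $H_0$ is used.
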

\begin{proof}
From (\ref{pL}), we have
\begin{eqnarray*}
   \pa_\T l(X_t;\T_0)\pa_{\T'} l(X_t;\T_0) = \frac{1}{4}  \bigg[ \frac{1}{\sigma_t^4(\T_0)} (1-\eta_t^2)^2 \pa_\T \sigma_t^2(\T_0) \pa_{\T'} \sigma_t^2(\T_0) + \frac{4}{\sigma_t^2(\T_0)} \eta_t^2 \pa_\T \mu_t(\T_0) \pa_{\T'} \mu_t(\T_0) \\
   - \frac{2}{\sigma_t^3(\T_0)} \eta_t (1-\eta_t^2) \big\{ \pa_\T \mu_t(\T_0) \pa_{\T'} \sigma_t^2(\T_0)+ \pa_\T \sigma_t^2(\T_0) \pa_{\T'} \mu_t(\T_0)\big\} \bigg].
\end{eqnarray*}
Observing that $\E (1-\eta_t^2)^2 =2$ and $\E \eta_t(1-\eta_t^2) =0$ under $H_0$, we have
\begin{eqnarray}\label{epl1}
\E[  \pa_\T l(X_t;\T_0)\pa_{\T'} l(X_t;\T_0) |\mathcal{F}_{t-1}]
=
\frac{1}{2}  \bigg( \frac{1}{\sigma_t^4(\T_0)}  \pa_\T \sigma_t^2(\T_0) \pa_{\T'} \sigma_t^2(\T_0) + \frac{2}{\sigma_t^2(\T_0)} \pa_\T \mu_t(\T_0) \pa_{\T'} \mu_t(\T_0) \bigg).
\end{eqnarray}
In a similar way, one can obtain that 
\begin{eqnarray*}
\paa l(X_t;\T_0)&=&
 -\frac{1}{2} \bigg[ \frac{1}{\sigma_t^2(\T_0)} (1-\eta_t^2) \paa \sigma_t^2(\T_0)
 + \frac{1}{\sigma_t^4(\T_0)} (2\eta_t^2-1) \pa_\T \sigma_t^2(\T_0) \pa_{\T'} \sigma_t^2(\T_0)\\
 &&\qquad +\frac{2}{\sigma_t^3(\T_0)}\eta_t \big\{ \pa_\T \mu_t (\T_0)\pa_{\T'} \sigma_t^2(\T_0)+ \pa_\T \sigma_t^2(\T_0)\pa_{\T'} \mu_t (\T_0)\big\} \\
 &&\qquad +   \frac{2}{\sigma_t^2(\T_0)}\pa_\T \mu_t (\T_0)\pa_{\T'} \mu_t (\T_0)  - \frac{2}{\sigma_t(\T_0)}\eta_t\paa \mu_t (\T_0) \bigg]
\end{eqnarray*}
and thus we have
\begin{eqnarray}\label{epl2}
\E[ \paa l(X_t;\T_0)|\mathcal{F}_{t-1}] =-\frac{1}{2} \bigg(\frac{1}{\sigma_t^4(\T_0)} \pa_\T \sigma_t^2(\T_0) \pa_{\T'} \sigma_t^2(\T_0)   + \frac{2}{\sigma_t^2(\T_0)}\pa_\T \mu_t (\T_0)\pa_{\T'} \mu_t (\T_0) \bigg),
\end{eqnarray}
from which and (\ref{epl1}) we get the lemma.
\end{proof}

\noindent{\bf Proof of Theorem \ref{thm_main}}\\
From (\ref{epl1}) and (\ref{epl2}), we can see that  $\left\{ (d(X_t;\T_0), \mathcal{F}_{t-1})\right\}$ is a martingale difference. Hence, by the CLT for the martingale differences, we have
	\begin{eqnarray*}
D_n(\T_0):=\frac{1}{\sqrt{n}} \sum_{t=1}^n d(X_t;\T_0) \stackrel{d}{\longrightarrow} N_q ({\bf 0}, \Sigma_0),
\end{eqnarray*}
where $\Sigma_0= \cov(d(X;\T_0))$.  By Taylor's theorem, we can write that
\begin{eqnarray}\label{taylor}
D_n(\hat \T_n) =D_n(\T_0)+ \frac{1}{\sqrt{n}}\nabla D_n(\tilde\T_n)\sqrt{n}(\hat \T_n -\T_0),
\end{eqnarray}
where $\nabla D_n$ is the Jacobian matrix of $D_n$ and $\tilde \T_n$ is a point between $\hat \T_n$ and $\T_0$.\\
We first note that since $\tilde \theta_n$ also converges almost surely to $\T_0$, we have by condition {\bf C2} that 
\begin{eqnarray}\label{Jac.D}
\frac{1}{\sqrt{n}}\nabla D_n(\tilde\T_n) =\frac{1}{n} \sum_{t=1}^n \nabla  d(X_t;\tilde\T_n) \stackrel{a.s.}{\longrightarrow}\mathcal{K}:=\E [\nabla d(X_t;\T_0)].
\end{eqnarray}
Further, using (\ref{P1}) and assumption {\bf A7}, we have
\begin{eqnarray}\label{P2}
\sqrt{n} (\hat{\theta}_{n}-\theta_0)
= -\mathcal{J}^{-1}\frac{1}{\sqrt{n}} \sum_{t=1}^n \pa_\T l(X_t ;\T_0)+o_P(1).
\end{eqnarray}
Hence, we can see from (\ref{Jac.D}) and (\ref{P2}) that
\begin{eqnarray*}
\frac{1}{\sqrt{n}}\nabla D_n(\tilde\T_n)\sqrt{n} (\hat{\theta}_{n}-\theta_0)+\mathcal{K}\mathcal{J}^{-1}\frac{1}{\sqrt{n}} \sum_{t=1}^n \pa_\T l(X_t ;\T_0)=o_P(1)
\end{eqnarray*}
and thus, by (\ref{taylor}), we have
\begin{eqnarray*}
 D_n(\hat \T_n)&=&D_n(\T_0)-\mathcal{K}\mathcal{J}^{-1}\frac{1}{\sqrt{n}} \sum_{t=1}^n \pa_\T l(X_t ;\T_0)+o_P(1)\\
 &=& \frac{1}{\sqrt{n}}\sum_{t=1}^n \big(  d(X_t;\T_0) -\mathcal{K}\mathcal{J}^{-1}\pa_\T l(X_t ;\T_0)\big)+o_P(1)
\end{eqnarray*}
Recall from (\ref{epl1}) that $\left\{ (\pa_\T l(X_t;\T_0), \mathcal{F}_{t-1})\right\}$ is a martingale difference, hence $\left\{ (d(X_t;\T_0) -\mathcal{K}\mathcal{J}^{-1}\right. $ $\left.\pa_\T l(X_t ;\T_0),  \mathcal{F}_{t-1})\right\}$ also becomes a martingale difference. Thus, we have by the CLT for martingales that
\begin{eqnarray*}
 D_n(\hat \T_n) \stackrel{d}{\longrightarrow} N_q ( {\bf 0}, \Sigma ),
\end{eqnarray*}
where $ \Sigma= \cov\left( d(X_t;\T_0)- \mathcal{K}\mathcal{J}^{-1}\pa_\T l(X_t ;\T_0) \right) $.

Since $\hat\T_n$ converges almost surely to $\T_0$, we have by assumption {\bf C4} that for sufficiently large $n$,
	\begin{align*}	
		 \frac{1}{\sqrt{n}} \sum_{t=1}^{n}  \left\|  \partial^{2}_{\theta \theta^{'}} \tilde{l} ( X_{t};\hat{\theta}_n ) - \partial^{2}_{\theta \theta^{'}} l ( X_{t};\hat{\theta}_n )    \right\|
		\le \frac{1}{\sqrt{n}}  \sum_{t=1}^{n} \sup_{\theta \in N(\T_0)} \left\| \partial^{2}_{\theta \theta^{'}} \tilde{l} \left( X_{t};\theta \right) - \partial^{2}_{\theta \theta^{'}} l \left( X_{t};\theta \right)  \right\|=o_P(1).
	\end{align*}
and
	\begin{align*}	
		& \frac{1}{\sqrt{n}} \sum_{t=1}^{n}  \left\|  \partial_{\theta } \tilde{l}(X_{t};\hat{\theta}_n ) \partial_{\theta^{'} } \tilde{l}(X_{t};\hat{\theta}_n ) - \partial_{\theta} l(X_{t};\hat{\theta}_{n}) \partial_{\theta^{'}} l(X_{t};\hat{\theta}_{n})   \right\|\\
		&
		\le \frac{1}{\sqrt{n}} \sum_{t=1}^{n} \sup_{\theta \in \mathcal{N}(\theta_{0})}  \left\|  \partial_{\theta } \tilde{l}(X_{t};\T) \partial_{\theta^{'} } \tilde{l}(X_{t};\T) - \partial_{\theta} l(X_{t};\hat{\theta}_{n}) \partial_{\theta^{'}} l(X_{t};\hat{\theta}_{n}) \right\| = o_P(1),
	\end{align*}
which ensure that
\[ \frac{1}{\sqrt{n}} \sum_{t=1}^n \|\tilde d(X_t;\hat \T_n) - d(X_t; \T_0)\| =o_P(1).\]	
This completes the proof.
\hfill{$\Box$}\\

\begin{lm}\label{lm2}
Under $H_0$, it holds that
\[ \E \big[\nabla d(X_t;\T_0)\big]=-\E \big[d(X_t;\T_0) \pa_{\T'} l(X_t;\T_0)\big].\]
\end{lm}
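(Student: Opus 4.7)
The identity is a third‑order version of the Bartlett/information‑matrix equivalence already used in Lemma \ref{lm1}. The plan is to apply the classical Bartlett identities to the conditional density of $X_t$ given $\mathcal{F}_{t-1}$, which under $H_0$ is $N(\mu_t(\T_0), \sigma_t^2(\T_0))$, and then integrate out.

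\medskip
\noindent\emph{Step 1: write out the two sides.} Since $d_k(X_t;\T) = \pa^2_{\T_{i_k}\T_{j_k}} l(X_t;\T) + \pa_{\T_{i_k}} l(X_t;\T)\,\pa_{\T_{j_k}} l(X_t;\T)$, the $(k,l)$-entry of $\nabla d(X_t;\T_0)$ is
\[
\pa_{\T_l} d_k(X_t;\T_0) \;=\; \pa^3_{\T_{i_k}\T_{j_k}\T_l} l \;+\; \pa^2_{\T_{i_k}\T_l} l\,\pa_{\T_{j_k}} l \;+\; \pa_{\T_{i_k}} l\,\pa^2_{\T_{j_k}\T_l} l,
\]
whereas the $(k,l)$-entry of $d(X_t;\T_0)\,\pa_{\T'} l(X_t;\T_0)$ is
\[
d_k(X_t;\T_0)\,\pa_{\T_l} l \;=\; \pa^2_{\T_{i_k}\T_{j_k}} l\,\pa_{\T_l} l \;+\; \pa_{\T_{i_k}} l\,\pa_{\T_{j_k}} l\,\pa_{\T_l} l,
\]
where all derivatives are at $\T_0$.

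\medskip
\noindent\emph{Step 2: conditional Bartlett identities.} Let $f_t(x;\T)$ denote the conditional Gaussian density of $X_t$ given $\mathcal{F}_{t-1}$, so that $l(X_t;\T) = \log f_t(X_t;\T)$ up to a constant. Differentiating the identity $\int f_t(x;\T)\,dx = 1$ three times under the integral sign (justified because $f_t$ is Gaussian with $\mathcal{F}_{t-1}$-measurable mean and positive variance, so all derivatives of $f_t$ in $\T$ have conditional moments bounded by polynomials in $x$ with integrable coefficients, and the global moment conditions \textbf{A5}, \textbf{A6} and \textbf{C2} guarantee the resulting expectations exist) one obtains the standard second Bartlett identity
\[
\E\!\left[\pa^3_{\T_a\T_b\T_c} l \,\big|\, \mathcal{F}_{t-1}\right] \;=\; -\E\!\left[\pa_{\T_a} l\,\pa_{\T_b} l\,\pa_{\T_c} l \;+\; \pa^2_{\T_a\T_b} l\,\pa_{\T_c} l \;+\; \pa^2_{\T_a\T_c} l\,\pa_{\T_b} l \;+\; \pa^2_{\T_b\T_c} l\,\pa_{\T_a} l \,\big|\, \mathcal{F}_{t-1}\right]
\]
almost surely, where all expressions are evaluated at $\T_0$.

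\medskip
\noindent\emph{Step 3: algebraic cancellation.} Substituting the above identity (with $(a,b,c) = (i_k, j_k, l)$) into the expression for $\E[\pa_{\T_l} d_k(X_t;\T_0)\,|\,\mathcal{F}_{t-1}]$ from Step 1, the two mixed second‑first derivative terms $\E[\pa^2_{\T_{i_k}\T_l} l\,\pa_{\T_{j_k}} l\,|\,\mathcal{F}_{t-1}]$ and $\E[\pa^2_{\T_{j_k}\T_l} l\,\pa_{\T_{i_k}} l\,|\,\mathcal{F}_{t-1}]$ cancel against their negatives, leaving
\[
\E\!\left[\pa_{\T_l} d_k(X_t;\T_0)\,\big|\,\mathcal{F}_{t-1}\right] \;=\; -\E\!\left[\pa^2_{\T_{i_k}\T_{j_k}} l\,\pa_{\T_l} l \;+\; \pa_{\T_{i_k}} l\,\pa_{\T_{j_k}} l\,\pa_{\T_l} l \,\big|\,\mathcal{F}_{t-1}\right] \;=\; -\E\!\left[d_k(X_t;\T_0)\,\pa_{\T_l} l(X_t;\T_0)\,\big|\,\mathcal{F}_{t-1}\right].
\]
Taking unconditional expectations and assembling entries over $(k,l)$ yields the claimed identity.

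\medskip
\noindent\emph{Main obstacle.} The only subtle point is the justification of the three differentiations under the integral sign that produce the conditional Bartlett identity. Because the model is conditionally Gaussian and the score/Hessian/third‑derivative entries are polynomials in the standardized residual $\eta_t = (X_t - \mu_t(\T_0))/\sigma_t(\T_0)$ with $\mathcal{F}_{t-1}$‑measurable coefficients involving $\pa_\T \mu_t$, $\pa_\T \sigma_t^2$, $\paa \mu_t$, and $\paa \sigma_t^2$, pointwise differentiation under the integral poses no problem; integrability of the resulting products in the unconditional sense is exactly what \textbf{A5}, \textbf{A6}, and \textbf{C2} provide, so no further hypothesis is needed.
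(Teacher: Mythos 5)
Your proposal is correct and follows essentially the same route as the paper: both arguments differentiate the conditional Gaussian density identity under the integral sign to obtain a third-order Bartlett relation conditional on $\mathcal{F}_{t-1}$, and then take unconditional expectations. The only (purely organizational) difference is that the paper differentiates the already-established second-order identity of Lemma \ref{lm1} once with respect to $\T_l$, so the terms emerge pre-grouped as $\pa_{\T_l}d_k$ and $d_k\,\pa_{\T_l}l$, whereas you state the full third Bartlett identity and then cancel the two mixed terms $\E[\pa^2_{\T_{i_k}\T_l}l\,\pa_{\T_{j_k}}l\,|\,\mathcal{F}_{t-1}]$ and $\E[\pa^2_{\T_{j_k}\T_l}l\,\pa_{\T_{i_k}}l\,|\,\mathcal{F}_{t-1}]$ — the algebra is identical.
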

\begin{proof} 
Let us denote $\{X_t^\T\}$ be the process from the model (\ref{model1}) with the parameter $\T$. Then, following the same argument in Lemma \ref{lm1}, one can see that under $H_0$,
\[ \E \big[ \pai l(X_{t}^\T;\theta)\paj l(X_{t}^\T;\theta)|\mathcal{F}_{t-1}\big]= -\E\big[ \paaij l(X_{t}^\T;\theta)|\mathcal{F}_{t-1}\big].\]
Since the conditional distribution of $X_t^\T$ given $\mathcal{F}_{t-1}$ is $N(\mu_t(\T), \sigma_t^2(\T))$, we can express the above as 
\[ \int \pai l(x;\theta)\paj l(x;\theta)f(x;\theta)dx= -\int  \paaij l(x;\theta)f(x;\theta)dx,\]
where $f(x;\T)$ is the pdf of $N(\mu_t(\T), \sigma_t^2(\T))$. Differentiating the both sides of the above with respect to $\T_l$, we obtain 
\begin{eqnarray*}
&&\E \big[ \paail l(X_{t}^\T;\theta)\paj l(X_{t}^\T;\theta)+\pai l(X_{t}^\T;\theta)\paajl l(X_{t}^\T;\theta)+\pai l(X_{t}^\T;\theta)\paj l(X_{t}^\T;\theta)\pal l(X_{t}^\T;\theta)|\mathcal{F}_{t-1}\big]\\
&&=- \E \big[ \paaa l(X_{t}^\T;\theta)+\paaij l(X_{t}^\T;\theta)\pal l(X_{t}^\T;\theta)|\mathcal{F}_{t-1}\big],
\end{eqnarray*}
from which we can see that 
\begin{eqnarray*}
&&\E \big[ \pal d_k(X_t;\T_0) |\mathcal{F}_{t-1}\big]\\
&&=\E \big[ \paaa l(X_{t};\theta_0)+ \paail l(X_{t};\theta_0)\paj l(X_{t};\theta_0)+\pai l(X_{t};\theta_0)\paajl l(X_{t};\theta_0)|\mathcal{F}_{t-1}\big]\\
&&=- \E \big[\big\{\paaij l(X_{t};\theta_0)+\pai l(X_{t};\theta_0)\paj l(X_{t};\theta_0)\big\}\pal l(X_{t};\theta_0)|\mathcal{F}_{t-1}\big]\\
&&= -\E \big[ d_k(X_t;\T_0) \pal l(X_{t};\theta_0)|\mathcal{F}_{t-1}\big].
\end{eqnarray*}
This asserts the lemma. 
\end{proof}

\subsection{Proofs for Subsection \ref{Sec:3_1}}
$\tilde \ep_t$ and $\ep_t$ that will be shown in Lemmas \ref{lm_TMA0}-\ref{lm_approx} are the ones defined in (\ref{tma.tep}) and (\ref{tma.ep}), respectively, and $\Theta$ is the parameter space given in (\ref{tma.ps}).
\begin{lm} \label{lm_TMA0}
Under $H_0$, we have that  for all $d\geq1$,
    \[\E \sup_{\T\in\Theta} |\ep_t|^d<\infty,\quad \E \sup_{\T\in\Theta} | \pa_{\T_i} \ep_t|^d <\infty
    ,\quad \E \sup_{\T\in\Theta} | \pa^2_{\T_i\T_j} \ep_t |^d <\infty,\quad 
    \E \sup_{\T\in\Theta} | \pa^3_{\T_i\T_j\T_k} \ep_t |^d <\infty.\]
\end{lm}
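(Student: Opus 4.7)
The plan is to iterate the defining recursion of $\ep_t(\T)$ and of its $\T$-derivatives, and to exploit the uniform geometric contraction $\sup_{\T\in\Theta}|\phi+\xi I(X_{t-1}\le u)|\le c_1<1$ guaranteed by (\ref{tma.ps}). Each quantity to be bounded will satisfy a first-order stochastic recursion with the same contracting coefficient, and each driving term will inductively have all $L^d$ moments uniformly in $\T$.

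First I would record the baseline moment $\E|X_t|^d<\infty$ for every $d\ge 1$: under $H_0$, the model yields $|X_t|\le c_1\sigma_0|e_{t-1}|+\sigma_0|e_t|$, so this follows from $e_t\sim N(0,1)$ having all Gaussian moments. Writing $a_t(\T):=\phi+\xi I(X_{t-1}\le u)$, the recursion $\ep_t(\T)=X_t/\sigma-a_t(\T)\ep_{t-1}(\T)$ unrolls to the almost surely convergent series
\[
\ep_t(\T)=\sum_{k=0}^{\infty}(-1)^k\Big(\prod_{j=1}^{k}a_{t-j+1}(\T)\Big)\frac{X_{t-k}}{\sigma}.
\]
Since $\sup_\T|a_\cdot(\T)|\le c_1$ and $\sigma^{-1}\le c_2^{-1/2}$, I obtain $\sup_{\T\in\Theta}|\ep_t|\le c_2^{-1/2}\sum_{k\ge 0}c_1^k|X_{t-k}|$; Minkowski's inequality together with stationarity of $\{X_t\}$ then yields $\E\sup_{\T\in\Theta}|\ep_t|^d<\infty$.

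For the first-order derivatives, differentiation of the recursion gives
\[
\pa_\phi\ep_t=-\ep_{t-1}-a_t\pa_\phi\ep_{t-1},\quad \pa_\xi\ep_t=-I(X_{t-1}\le u)\ep_{t-1}-a_t\pa_\xi\ep_{t-1},\quad \pa_{\sigma^2}\ep_t=-\tfrac{X_t}{2\sigma^3}-a_t\pa_{\sigma^2}\ep_{t-1}.
\]
Each has the form $D_t=f_t-a_tD_{t-1}$ with driving process $f_t$ whose $\sup_{\T\in\Theta}|f_t|$ already has all finite $L^d$ moments (either $\ep_{t-1}$, bounded in the previous step, or $X_t$ with a uniformly bounded coefficient). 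Iterating gives $\sup_\T|D_t|\le\sum_{k=0}^{\infty}c_1^k\sup_\T|f_{t-k}|$, and the same Minkowski argument concludes. The second- and third-order derivatives are handled by repeating the scheme: each further application of $\pa_{\T_i}$ differentiates $a_t(\T)$ (producing $1$, $I(X_{t-1}\le u)$, or $0$, all uniformly bounded and $\T$-free) and differentiates $\sigma^{-1}X_t$ (producing further negative powers of $\sigma$, still uniformly bounded on $\Theta$). Consequently, the order-$m$ derivative satisfies $\pa^m\ep_t=g_t^{(m)}-a_t\pa^m\ep_{t-1}$, where $g_t^{(m)}$ is a finite sum of terms involving only $X_t$, bounded $\T$-functions, and derivatives of $\ep$ of order strictly less than $m$. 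An induction on $m$ propagates the $L^d$ bound, giving the second and third claims.

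The main obstacle is purely bookkeeping at the third-derivative stage, where three rounds of the product rule generate many cross terms; none of them introduces any structural difficulty, since the contracting factor $|a_t|\le c_1<1$ and the moment bounds on lower-order derivatives established at previous induction steps suffice to close the argument.
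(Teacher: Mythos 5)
Your proposal is correct and follows essentially the same route as the paper: both rest on unrolling the inverted recursion into the geometrically weighted series $\ep_t=\sigma^{-1}X_t+\sigma^{-1}\sum_{j\ge1}(\prod_{i\le j}A_{t-i})X_{t-j}$ with $\sup_\T|A_\cdot|\le c_1<1$, combined with the Gaussian moments of $X_t$ under $H_0$ and Minkowski's inequality. The only difference is organizational — the paper writes out the differentiated series in closed form (double and triple sums with one or two factors deleted from the product) and bounds them by $\sum_j j^{m}\rho^{j-m}\|X_{t-j}\|_d$, whereas you obtain the identical estimate by induction on the derivative order through the contracting recursion $D_t=f_t-a_tD_{t-1}$.
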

\begin{proof}
By the boundedness of $\Theta$, we have
\begin{eqnarray*}
	|X_{t}| \leq  \big| \big(\phi_0 + \xi_0 I (X_{t-1} \le u) \big) \sigma \ep_{t-1}| + | \sigma \ep_{t}| \les  |\ep_{t-1}| +|\ep_t|.
\end{eqnarray*}
Since $\ep_t$ follows the normal distribution under $H_0$, we can see that $X_t$ admits moments of any order. Now letting $A_t(\T):=A_t=-\left(\phi +\xi I (X_t \le u) \right)$, it can be written that
\begin{eqnarray}\label{ept}
	\ep_t= \frac{1}{\sigma}  X_{t} +\frac{1}{\sigma} \sum_{j=1}^{\infty} \Big( \prod_{i=1}^{j} A_{t-i} \Big) X_{t-j}.
\end{eqnarray}
Noting that $|A_t|\leq |\phi|\vee|\phi+\xi|\leq c_1$, we have that for any $\rho \in [c_1,1)$,  $\prod_{i=1}^{j} |A_{t-i}| \leq \rho^j$. Hence, it follows from (\ref{ept}) and Minkowski's inequality that 
\begin{eqnarray}\label{mnt}
    \sup_{\T\in\Theta}\|\ep_t\|_d\les \|X_t\|_d + \sum_{j=1}^\infty \rho^j \|X_{t-j}\|_d <\infty,
\end{eqnarray}
where $\|\cdot\|_d$ is the $L_d$-norm, and consequently we have $\E \sup_{\T\in\Theta} |\ep_t|^d<\infty$.\\
By simple algebra, we have that 
\begin{eqnarray*}
&&\frac{\pa\ep_t}{\pa \phi}= -\frac{1}{\sigma} \sum_{j=1}^\infty \sum_{k=1}^j \Big(\prod_{i=1 ,i\neq k}^j A_{t-i}\Big) X_{t-j},
\quad\frac{\pa\ep_t}{\pa \xi}= -\frac{1}{\sigma} \sum_{j=1}^\infty \sum_{k=1}^j I(X_{t-k} \leq u) \Big(\prod_{i=1 ,i\neq k}^j A_{t-i}\Big) X_{t-j},\\
&& \frac{\pa\ep_t}{\pa \sigma^2}= -\frac{1}{2 \sigma^3}\Big( X_t +\sum_{j=1}^{\infty} \Big( \prod_{i=1}^{j} A_{t-i} \Big) X_{t-j}\Big),
\end{eqnarray*}
and
\begin{eqnarray*}
    &&\frac{\pa^2 \ep_t}{\pa \phi^2} = \frac{2}{\sigma} \sum_{j=1}^{\infty} \sum_{k=1}^{j} \sum_{l=1,l\neq k}^j \Big( \prod_{i=1,i\neq k,l}^{j} A_{t-i}\Big) X_{t-j},\\
	&&\frac{\pa^2 \ep_t}{\pa \xi^2} = \frac{2}{\sigma} \sum_{j=1}^{\infty} \sum_{k=1}^{j}I(X_{t-k} \leq u) \sum_{l=1,l\neq k}^j I(X_{t-l} \leq u)\Big( \prod_{i=1,i\neq k,l}^{j} A_{t-i}\Big) X_{t-j},\\
	&&\frac{\pa^2 \ep_t}{\pa \sigma^4} = \frac{3}{4 \sigma^5} \Big( X_t + \sum_{j=1}^{\infty} \Big( \prod_{i=1}^{j} A_{t-i} \Big) X_{t-j} \Big),\\
	&&\frac{\pa^2 \ep_t}{\pa \phi\pa\xi} = \frac{2}{\sigma} \sum_{j=1}^{\infty} \sum_{k=1}^{j} \sum_{l=1,l\neq k}^j I(X_{t-l} \leq u)\Big( \prod_{i=1,i\neq k,l}^{j} A_{t-i}\Big) X_{t-j},\\
	&&\frac{\pa^2 \ep_t}{\pa \phi\pa\sigma^2} = \frac{1}{2\sigma^3} \sum_{j=1}^\infty \sum_{k=1}^j \Big(\prod_{i=1 ,i\neq k}^j A_{t-i}\Big) X_{t-j},\quad
	\frac{\pa^2 \ep_t}{\pa \xi \pa \sigma^2} = \frac{1}{2\sigma^3} \sum_{j=1}^\infty \sum_{k=1}^j I(X_{t-k} \leq u) \Big(\prod_{i=1 ,i\neq k}^j A_{t-i}\Big) X_{t-j}.
\end{eqnarray*}
Similarly to (\ref{mnt}), one can show that for $1\leq i,j\leq 3$,
\begin{eqnarray*}
    && \sup_{\T\in\Theta} \|\pa_{\T_i} \ep_t\|_d\vee  \sup_{\T\in\Theta} \|\pa^2_{\T_i\T_j} \ep_t\|_d \\
&&\les \|X_t\|_d+ \sum_{j=1}^\infty \rho^{j} \|X_{t-j}\|_d+ \sum_{j=1}^\infty j \rho^{j-1} \|X_{t-j}\|_d
+\sum_{j=1}^\infty j(j-1)\rho^{j-2} \|X_{t-j}\|_d <\infty.
\end{eqnarray*} 
The moment condition for the third derivatives can also be shown in the same way and we omit the proof for brevity.  
\end{proof}

\begin{lm} \label{lm_TMA1}
It holds that for some $\rho \in (0,1)$,
    \[\sup_{\T\in\Theta} |\tilde{\ep}_t - \ep_t| \les \rho^t,
    \quad  \sup_{\T\in\Theta}|\pa_{\T_i} \tilde{\ep}_t - \pa_{\T_t} \ep_t |\les (1+t)\rho^t, \quad
    \sup_{\T\in\Theta} |\pa^2_{\T_i\T_j} \tilde{\ep}_t - \pa^2_{\T_i\T_j} \ep_t| \les (1+t+t^2)\rho^t.\]
\end{lm}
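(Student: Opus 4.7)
The plan is to set up a linear recursion for each difference $\tilde\ep_t - \ep_t$ and its derivatives, exploit the fact that the multiplier $A_s := -(\phi + \xi I(X_s\le u))$ satisfies $\sup_{\T\in\Theta}|A_s|\le c_1<1$ by (\ref{tma.ps}), and iterate. The first estimate is immediate: writing the defining recursion as $\ep_t = X_t/\sigma + A_{t-1}\ep_{t-1}$ (and the analogous one for $\tilde\ep_t$), the $X_t/\sigma$ term cancels upon subtraction, so $\tilde\ep_t - \ep_t = A_{t-1}(\tilde\ep_{t-1} - \ep_{t-1})$. Iterating $t$ times with $\tilde\ep_0 = 0$ yields $\tilde\ep_t - \ep_t = -\ep_0\prod_{i=1}^t A_{t-i}$, hence $\sup_{\T\in\Theta}|\tilde\ep_t-\ep_t|\leq c_1^t\sup_{\T\in\Theta}|\ep_0|$, and Lemma \ref{lm_TMA0} ensures $\sup_{\T\in\Theta}|\ep_0|$ is a.s. finite. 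Any $\rho\in[c_1,1)$ will suffice.

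For the first-derivative bounds, I would differentiate the recursion. For $\T_k\in\{\phi,\xi\}$, one obtains $\pa_{\T_k}\ep_t = (\pa_{\T_k}A_{t-1})\ep_{t-1} + A_{t-1}\pa_{\T_k}\ep_{t-1}$ with $|\pa_{\T_k}A_{t-1}|\le 1$ uniformly. Subtracting the analogous identity for $\tilde\ep_t$ and iterating gives the closed form
\[\pa_{\T_k}(\tilde\ep_t-\ep_t) = \Big(\prod_{i=1}^t A_{t-i}\Big)\pa_{\T_k}(\tilde\ep_0-\ep_0) + \sum_{j=1}^t\Big(\prod_{i=1}^{j-1}A_{t-i}\Big)(\pa_{\T_k}A_{t-j})(\tilde\ep_{t-j}-\ep_{t-j}).\]
Plugging in the first estimate $|\tilde\ep_{t-j}-\ep_{t-j}|\le c_1^{t-j}\sup_\T|\ep_0|$ and using $|A_s|\le c_1$ produces $|\pa_{\T_k}(\tilde\ep_t-\ep_t)|\les c_1^t + t\, c_1^{t-1}\les (1+t)\rho^t$. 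The case $\T_k=\sigma^2$ is simpler because the forcing term $-X_t/(2\sigma^3)$ in $\pa_{\sigma^2}\ep_t$ is identical for $\ep_t$ and $\tilde\ep_t$, so it cancels and the difference obeys the pure contraction $\pa_{\sigma^2}(\tilde\ep_t-\ep_t)=A_{t-1}\pa_{\sigma^2}(\tilde\ep_{t-1}-\ep_{t-1})$.

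The second-derivative bound follows the same template. Differentiating once more yields recursions of the shape $\pa^2_{\T_i\T_j}(\tilde\ep_t-\ep_t) = A_{t-1}\pa^2_{\T_i\T_j}(\tilde\ep_{t-1}-\ep_{t-1}) + (\text{forcing})$, where the forcing is a bounded linear combination of $\pa_{\T_i}(\tilde\ep_{t-1}-\ep_{t-1})$ and $\pa_{\T_j}(\tilde\ep_{t-1}-\ep_{t-1})$ (the multipliers come from first and mixed partial derivatives of $A$, each of which is either $0$, $\pm 1$, or $\pm I(X\le u)$). Iterating and substituting the bound $\les(1+t)\rho^t$ for the first-derivative differences produces a sum of at most $t$ such terms, giving $\les t(1+t)\rho^{t-1}\les (1+t+t^2)\rho^t$ after absorbing the extra $\rho^{-1}$ into the implicit constant.

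The main difficulty is bookkeeping rather than analysis: one must treat the parameter components $\phi$, $\xi$, $\sigma^2$ separately because their partial derivatives of $A_{t-1}$ and the forcing term $X_t/\sigma$ behave differently, and the implicit constant in $\les$ must absorb the a.s. finite quantities $\sup_\T|\ep_0|$, $\sup_\T|\pa_{\T_k}\ep_0|$, and $\sup_\T|\pa^2_{\T_i\T_j}\ep_0|$ supplied by Lemma \ref{lm_TMA0}. No delicate estimate is required; the uniform contraction $c_1<1$ drives everything.
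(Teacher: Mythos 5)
Your argument is correct and yields exactly the bounds claimed, but it is organized differently from the paper's proof. The paper works from the explicit series representation $\ep_t=\sigma^{-1}X_t+\sigma^{-1}\sum_{j\ge 1}\big(\prod_{i=1}^{j}A_{t-i}\big)X_{t-j}$ and its truncation at $j=t-1$ for $\tilde\ep_t$, so each difference is literally a tail sum $\sum_{j\ge t}$, which is then dominated by $\rho^t$ times a.s.\ finite weighted series such as $\sum_{j\ge 0}\rho^j|X_{-j}|$ and $\sum_{j\ge 0}j\rho^{j-1}|X_{-j}|$; the derivative cases there require writing out the double sums produced by differentiating the products $\prod A_{t-i}$. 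You instead keep everything in recursive form: each difference satisfies a linear recursion with contraction coefficient $A_{t-1}$, $\sup_\T|A_{t-1}|\le c_1<1$, driven by the lower-order differences, and a variation-of-constants iteration converts the inductively known bound $\les(1+t)^{m}\rho^t$ on the forcing into the next bound, with one extra power of $t$ appearing per order of differentiation. Unrolling your recursion reproduces the paper's tail sums (e.g.\ $-\ep_0\prod_{i=1}^{t}A_{t-i}=-\sigma^{-1}\sum_{j\ge t}\big(\prod_{i=1}^{j}A_{t-i}\big)X_{t-j}$), so the two computations are equivalent in substance; what your version buys is that it avoids the explicit derivative formulas entirely and makes the origin of the factors $(1+t)$ and $(1+t+t^2)$ transparent. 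The two points requiring the same care in both proofs are handled correctly by you: the implicit constants are a.s.\ finite random variables, which you source from Lemma \ref{lm_TMA0} via $\sup_{\T\in\Theta}|\ep_0|$, $\sup_{\T\in\Theta}|\pa_{\T_k}\ep_0|$, $\sup_{\T\in\Theta}|\pa^2_{\T_i\T_j}\ep_0|$ (the paper uses the weighted series in $|X_{-j}|$ instead), and the component-by-component treatment of $\phi$, $\xi$, $\sigma^2$, including the cancellation of the $\sigma^2$-forcing term, matches the paper's case analysis.
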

\begin{proof}
Note that
\begin{eqnarray*}
	\tilde{\ep}_t= \frac{1}{\sigma}  X_{t} +\frac{1}{\sigma} \sum_{j=1}^{t-1} \Big( \prod_{i=1}^{j} A_{t-i} \Big) X_{t-j}.
\end{eqnarray*}
From (\ref{ept}), we have
\begin{eqnarray*}
	| \tilde{\ep}_{t} - \ep_{t}  |&=&  \frac{1}{\sigma}   \Big| \sum_{j=1}^{t-1} \Big( \prod_{i=1}^{j} A_{t-i} \Big) X_{t-j} - \sum_{j=1}^{\infty} \Big( \prod_{i=1}^{j} A_{t-i} \Big) X_{t-j}  \Big| \\
	&\les&   \Big| \sum_{j=t}^{\infty} \Big( \prod_{i=1}^{j} A_{t-i} \Big) X_{t-j} \Big|\\
	&\les& \rho^t \sum_{j=t}^\infty \rho^{j-t} |X_{t-j}| =\rho^t \sum_{j=0}^\infty \rho^{j} |X_{-j}|\quad a.s.
\end{eqnarray*}
Since $\sum_{j=0}^\infty \rho^{j}\E|X_{-j}|<\infty$, $\sum_{j=0}^\infty \rho^{j}|X_{-j}|$  is well defined. Thus, we have $\sup_{\T\in\Theta}| \tilde{\ep}_t - \ep_t | \les \rho^t$ a.s.\\
The first and second derivatives of $\tilde \ep_t$ can be obtained similarly to those of $\ep_t$ in Lemma \ref{lm_TMA0}. In the same fashion as above, we can see that
\begin{eqnarray*}
	\sup_{\T\in\Theta}\left| \pa_\phi \tilde \ep_t - \pa_\phi \ep_t \right|
	&\les& 	\sup_{\T\in\Theta}\Big| \sum_{j=1}^\infty \sum_{k=1}^j \Big(\prod_{i=1 ,i\neq k}^j A_{t-i}\Big) X_{t-j}- \sum_{j=1}^{t-1} \sum_{k=1}^j \Big(\prod_{i=1 ,i\neq k}^j A_{t-i}\Big) X_{t-j} \Big|\\
	&=&	\sup_{\T\in\Theta}\Big| \sum_{j=t}^\infty \sum_{k=1}^j \Big(\prod_{i=1 ,i\neq k}^j A_{t-i}\Big) X_{t-j}\Big|\\
    &\les& \rho^t\sum_{j=t}^\infty j\rho^{j-1-t} |X_{t-j}|\\
    &=&\rho^t\Big(\sum_{j=t}^\infty (j-t)\rho^{j-1-t} |X_{t-j}|+t\sum_{j=t}^\infty \rho^{j-1-t} |X_{t-j}|\Big)\\
    &=&\rho^t\Big(\sum_{j=0}^\infty j\rho^{j-1} |X_{-j}|+t\sum_{j=0}^\infty \rho^{j-1} |X_{-j}|\Big)\les(1+t)\rho^t\quad a.s.
\end{eqnarray*}
Similarly, we can show that 
\[	\sup_{\T\in\Theta}\left| \pa_\xi \tilde \ep_t - \pa_\xi \ep_t \right| \les (1+t)\rho^t\quad a.s.,\qquad	\sup_{\T\in\Theta}\left| \pa_{\sigma^2} \tilde \ep_t - \pa_{\sigma^2} \ep_t \right| \les \rho^t\quad a.s.\]
and
\[	\sup_{\T\in\Theta}\big\{\left| \pa^2_{\phi\phi} \tilde \ep_t - \pa^2_{\phi\phi} \ep_t \right|\vee\left| \pa^2_{\xi\xi} \tilde \ep_t - \pa^2_{\xi\xi} \ep_t \right|
\vee\left| \pa^2_{\phi\xi} \tilde \ep_t - \pa^2_{\phi\xi} \ep_t \right|\big\}\les (1+t+t^2)\rho^t\quad a.s.\]
\[	\sup_{\T\in\Theta}\big\{\left| \pa^2_{\phi\sigma^2} \tilde \ep_t - \pa^2_{\phi\sigma^2} \ep_t \right| \vee \left| \pa^2_{\xi\sigma^2} \tilde \ep_t - \pa^2_{\xi\sigma^2} \ep_t \right|\big\}\les (1+t)\rho^t\quad a.s.,\qquad	\sup_{\T\in\Theta}\left| \pa^2_{\sigma^2\sigma^2} \tilde \ep_t - \pa^2_{\sigma^2\sigma^2} \ep_t \right| \les \rho^t\quad a.s.,\]
which yield the last two equalities in the lemma.
\end{proof}

\begin{lm}\label{moment2} Under $H_0$, we have that for all $d\geq1$,
     \[\E \sup_{\theta \in \Theta }\big|\pa_{\T_i} l_t(\T)\big|^d  <\infty, \quad\E \sup_{\theta \in \Theta }\big|\pa^2_{\T_i\T_j} l_t(\T)\big|^d  <\infty,
     \quad\E \sup_{\theta \in \Theta }\big|\pa^3_{\T_i\T_j\T_k} l_t(\T)\big|^d  <\infty.\]
\end{lm}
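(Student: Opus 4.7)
The plan is to express every derivative of $l_t(\T)$ appearing in the statement as a polynomial in $\ep_t$, the parameter-derivatives of $\ep_t$ up to third order, and bounded functions of $\sigma^2$, and then to apply H\"{o}lder's inequality together with Lemma \ref{lm_TMA0} to conclude that each summand has a finite supremum moment of every order.

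First I would record the precise form of the derivatives. Since $l_t(\T)=-\tfrac12\log\sigma^2-\tfrac12\ep_t^2$, and $\sigma^2$ depends only on the coordinate $\T_3=\sigma^2$, one computes by the chain rule
\[
\pa_{\T_i} l_t =-\tfrac12\pa_{\T_i}\log\sigma^2-\ep_t\pa_{\T_i}\ep_t,
\]
\[
\pa^2_{\T_i\T_j} l_t=-\tfrac12\pa^2_{\T_i\T_j}\log\sigma^2-(\pa_{\T_i}\ep_t)(\pa_{\T_j}\ep_t)-\ep_t\pa^2_{\T_i\T_j}\ep_t,
\]
\[
\pa^3_{\T_i\T_j\T_k} l_t=-\tfrac12\pa^3_{\T_i\T_j\T_k}\log\sigma^2-\sum_{\text{perm}}(\pa_{\T_\cdot}\ep_t)(\pa^2_{\T_\cdot\T_\cdot}\ep_t)-\ep_t\pa^3_{\T_i\T_j\T_k}\ep_t,
\]
where the ``perm'' sum is over the three cyclic pairings of the indices. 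The contributions from the $\log\sigma^2$ term are bounded uniformly on $\Theta$ because $\sigma^2\in[c_2,c_3]$ by (\ref{tma.ps}), so those terms contribute $O(1)$ to $\sup_{\T\in\Theta}$ and thus have finite moments of every order.

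Next I would control the remaining terms one by one using H\"{o}lder's inequality. For the first-derivative case, for each fixed $d\geq 1$,
\[
\E\sup_{\T\in\Theta}|\ep_t\pa_{\T_i}\ep_t|^d\leq\bigl(\E\sup_{\T\in\Theta}|\ep_t|^{2d}\bigr)^{1/2}\bigl(\E\sup_{\T\in\Theta}|\pa_{\T_i}\ep_t|^{2d}\bigr)^{1/2}<\infty
\]
by Lemma \ref{lm_TMA0}. Exactly the same device handles the second- and third-derivative terms: for the mixed product $(\pa_{\T_i}\ep_t)(\pa^2_{\T_j\T_k}\ep_t)$ and $\ep_t\pa^3_{\T_i\T_j\T_k}\ep_t$ one splits via H\"{o}lder into factors $L^{2d}$ or $L^{3d}$ norms of $\ep_t$ and its derivatives (up to order three), each of which is finite by Lemma \ref{lm_TMA0}. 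Adding finitely many such bounds via Minkowski's inequality (or the elementary $(a+b)^d\les a^d+b^d$) gives the claim for each of the three displayed quantities.

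I do not anticipate a real obstacle: the lemma is a bookkeeping exercise, and the only nontrivial ingredient, namely uniform-in-$\T$ moment bounds on $\ep_t$ and its derivatives up to order three, has already been supplied by Lemma \ref{lm_TMA0}. The only point requiring a little care is to make sure that when taking $\sup_{\T\in\Theta}$ I keep the supremum inside the moments before applying H\"{o}lder, which is legitimate since $\sup_\T|fg|\leq(\sup_\T|f|)(\sup_\T|g|)$.
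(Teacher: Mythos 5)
Your proposal is correct and follows essentially the same route as the paper: both decompose $\pa_{\T_i} l_t$, $\pa^2_{\T_i\T_j} l_t$, and $\pa^3_{\T_i\T_j\T_k} l_t$ into a bounded term coming from $\log\sigma^2$ plus products of $\ep_t$ and its derivatives up to order three, and then invoke the uniform moment bounds of Lemma \ref{lm_TMA0} together with the Cauchy--Schwarz (H\"older) inequality. No gaps.
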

\begin{proof}
Note that  $|\pa_{\T_i} \sigma^2|\leq 1$ and $\pa^2_{\T_i\T_j} \sigma^2=0$.
Since $l_t(\T)= -\frac{1}{2} \log \sigma^2 -\frac{1}{2}\ep_t^2$, we have
\begin{eqnarray*}
|\pa_{\T_i} l_t(\T)| &=&\frac{1}{2}\Big|\frac{1}{\sigma^2}\pa_{\T_i} \sigma^2+2\ep_t\pa_{\T_i}\ep_{t}\Big| \les 1+|\ep_t||\pa_{\T_i}\ep_t|\\
|\pa^2_{\T_i\T_j} l_t(\T)| &=& \frac{1}{2}\Big| \frac{1}{\sigma^4}\pa_{\T_i} \sigma^2 \pa_{\T_j} \sigma^2 -\frac{1}{\sigma^2} \pa^2_{\T_i\T_j} \sigma^2 -2 \pa_{\T_i} \ep_t \pa_{\T_j} \ep_t-2\ep_t   \pa^2_{\T_i\T_j} \ep_t \Big|\\
 &\les& 1+ |
 \pa_{\T_i} \ep_t| | \pa_{\T_j} \ep_t|+ |\ep_t| |\pa^2_{\T_i\T_j} \ep_t |
\end{eqnarray*}
and
\begin{eqnarray*}
|\pa^3_{\T_i\T_j\T_k} l_t(\T)| &=& \Big| \frac{1}{\sigma^6}\pa_{\T_i} \sigma^2 \pa_{\T_j} \sigma^2\pa_{\T_k} \sigma^2 + \pa^2_{\T_i\T_k} \ep_t \pa_{\T_j} \ep_t+\pa_{\T_i} \ep_t \pa^2_{\T_j\T_k}\ep_t+\pa_{\T_k}\ep_t  \pa^2_{\T_i\T_j} \ep_t
+\ep_t  \pa^3_{\T_i\T_j\T_k} \ep_t\Big|\\
 &\les& 1+ |\pa^2_{\T_i\T_k} \ep_t| | \pa_{\T_j} \ep_t|+ |\pa_{\T_i} \ep_t| | \pa^2_{\T_j\T_k} \ep_t|+|\pa_{\T_k} \ep_t| | \pa^2_{\T_i\T_j} \ep_t|+|\ep_t| |\pa^3_{\T_i\T_j\T_k} \ep_t |
\end{eqnarray*}
which together with Lemma \ref{lm_TMA0} and the Cauchy–Schwarz inequality yields the lemma.  
\end{proof}

\begin{lm}\label{nonsing} Under $H_0$, $\E \big[\paa l_t(\T_0) \big]$ is a nonsingular matrix.
\end{lm}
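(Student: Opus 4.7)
The plan is to exploit Lemma \ref{lm1}, which under $H_0$ gives the explicit conditional expectation
\[
\E\bigl[\paa l_t(\T_0)\mid\mathcal{F}_{t-1}\bigr]
=-\tfrac12\Bigl(\sigma_t^{-4}(\T_0)\,\pa_\T\sigma_t^2(\T_0)\pa_{\T'}\sigma_t^2(\T_0)+2\sigma_t^{-2}(\T_0)\,\pa_\T\mu_t(\T_0)\pa_{\T'}\mu_t(\T_0)\Bigr).
\]
Since $\sigma_t^2(\T_0)$ is bounded away from $0$ and $\infty$ by the parameter-space constraint (\ref{tma.ps}), the matrix $-\E[\paa l_t(\T_0)]$ is positive semi-definite, and for $z\in\mathbb{R}^3$ we have $z'(-\E[\paa l_t(\T_0)])z=0$ iff both $z'\pa_\T\sigma_t^2(\T_0)=0$ and $z'\pa_\T\mu_t(\T_0)=0$ almost surely. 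Hence it suffices to prove that these two conditions force $z=\mathbf{0}$.

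The first condition is immediate: for the TMA(1) model $\sigma_t^2(\T)\equiv\sigma^2$, so $\pa_\T\sigma_t^2(\T_0)=(0,0,1)'$ and the condition gives $z_3=0$. The task then reduces to showing that
\[
z_1\pa_\phi\mu_t(\T_0)+z_2\pa_\xi\mu_t(\T_0)=0 \quad \text{a.s.} \Longrightarrow z_1=z_2=0.
\]
Writing $a_{t-1}(\T)=\phi+\xi I(X_{t-1}\le u)$, we have $\mu_t(\T)=a_{t-1}(\T)\sigma\ep_{t-1}(\T)$, so differentiating and using the recursion $\ep_t(\T)=X_t/\sigma-a_{t-1}(\T)\ep_{t-1}(\T)$ from (\ref{tma.ep}) to obtain $\pa_\phi\ep_t(\T_0)=-e_{t-1}-a_{t-1}(\T_0)\pa_\phi\ep_{t-1}(\T_0)$ and the analogous identity for $\pa_\xi$, evaluation at $\T_0$ yields
\[
\pa_\phi\mu_t(\T_0)=\sigma_0 e_{t-1}+a_{t-1}(\T_0)\sigma_0\pa_\phi\ep_{t-1}(\T_0),\qquad
\pa_\xi\mu_t(\T_0)=I(X_{t-1}\le u)\sigma_0 e_{t-1}+a_{t-1}(\T_0)\sigma_0\pa_\xi\ep_{t-1}(\T_0).
\]

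Setting $\eta_t:=z_1\pa_\phi\ep_t(\T_0)+z_2\pa_\xi\ep_t(\T_0)$, the recursion gives $\eta_t=-(z_1+z_2 I(X_{t-1}\le u))e_{t-1}-a_{t-1}(\T_0)\eta_{t-1}$, and the assumed vanishing becomes $(z_1+z_2 I(X_{t-1}\le u))\sigma_0 e_{t-1}+a_{t-1}(\T_0)\sigma_0\eta_{t-1}=0$ a.s., i.e.\ $\eta_t=0$ a.s.\ for all $t$. Substituting $\eta_t=\eta_{t-1}=0$ back into the recursion gives $(z_1+z_2 I(X_{t-1}\le u))e_{t-1}=0$ a.s. Since $e_{t-1}\sim N(0,1)$ is nonzero almost surely, $z_1+z_2 I(X_{t-1}\le u)=0$ a.s. Under $H_0$ the unconditional distribution of $X_{t-1}$ has full support on $\mathbb{R}$ (as $X_t$ is the convolution $a_{t-1}\sigma_0 e_{t-1}+\sigma_0 e_t$ with a nondegenerate normal component), so both $P(X_{t-1}\le u)$ and $P(X_{t-1}>u)$ are strictly positive, which forces $z_1=0$ on $\{X_{t-1}>u\}$ and $z_1+z_2=0$ on $\{X_{t-1}\le u\}$, hence $z_1=z_2=0$.

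The main obstacle is the recursion bookkeeping in the second paragraph: one must resist the temptation to work with a closed-form series representation of $\pa_\phi\ep_t(\T_0)$ and $\pa_\xi\ep_t(\T_0)$ and instead notice that the hypothesis \emph{is} precisely the statement $\eta_t=0$, which then closes the loop on itself and collapses the recursion to the pointwise identity $(z_1+z_2 I(X_{t-1}\le u))e_{t-1}=0$. Once this is seen, the remainder is a routine consequence of the independence of $e_{t-1}$ and $\mathcal{F}_{t-2}$ together with the fact that $u$ is in the interior of the support of $X_{t-1}$.
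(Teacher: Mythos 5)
Your proof is correct and follows essentially the same route as the paper's: negative semidefiniteness via Lemma \ref{lm1} and (\ref{epl2}), then $z_3=0$ from $\pa_\T\sigma_t^2(\T_0)=(0,0,1)'$, and finally $z_1=z_2=0$ from the vanishing of $z'\pa_\T\mu_t(\T_0)$. Your recursion argument for $z_1\pa_\phi\ep_t+z_2\pa_\xi\ep_t$ in fact supplies the justification for the one step the paper states without detail, namely the reduction of $z'\pa_\T\mu_t(\T_0)=0$ to $(z_1+z_2 I(X_{t-1}\le u))e_{t-1}=0$ a.s.
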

\begin{proof} 
By Lemma \ref{moment2}, $\E \big[\paa l_t(\T_0) \big]$ exists and it is negative semidefinite since $\E\left[ \partial^{2}_{\theta \theta'} l(X_{t};\theta_{0})\right]= -\cov \big(\pa_{\T}l_t(\theta_{0})\big)$ by Lemma \ref{lm1}. Assume that for some $z =(z_1,z_2,z_3)' \in \mathbb{R}^3$, $z' \E[ \paa l_t(\T_0)] z =0$.
Then, it follows from (\ref{epl2}) that
\begin{eqnarray*}
   z'\E\big[ \paa l_t(\T_0)\big]z &=&-\frac{1}{2}z'\E\bigg[\frac{1}{\sigma_t^4(\T_0)} \pa_\T \sigma_t^2(\T_0) \pa_{\T'} \sigma_t^2(\T_0)   + \frac{2}{\sigma_t^2(\T_0)}\pa_\T \mu_t (\T_0)\pa_{\T'} \mu_t (\T_0) \bigg]z\\
    &=& -\frac{1}{2} \E\bigg[\frac{1}{\sigma_t^4(\T_0)} \big(z'\pa_\T \sigma_t^2(\T_0)\big)^2   + \frac{2}{\sigma_t^2(\T_0)}\big(z'\pa_\T \mu_t (\T_0)\big)^2 \bigg]=0.
\end{eqnarray*}
Hence, we can see that $z'\pa_\T \sigma_t^2(\T_0)$ and $z'\pa_\T \mu_t (\T_0)$ are equal to zero almost surely. Noting that $\pa_\T \sigma_t^2(\T_0)=(0,0,1)'$, we have $z_3=0$. From the second equation, it should also hold that $(z_1+I(X_{t-1} \leq u)z_2)\eta_{t-1}=0$ almost surely, which implies $z_1=z_2=0$. Therefore, $\E \big[\paa l_t(\T_0) \big]$ is invertible.
\end{proof}

\begin{lm} \label{lm_approx}
	Under $H_0$, we have
\begin{eqnarray*}
    &&\sum_{t=1}^n \sup_{\theta \in \Theta}\big\|\pa_{\theta}\,l_t(\T)-\pa_{\theta}\,\tilde l_t(\T)\big\|= O(1)\quad a.s.\\
	&&
	\sum_{t=1}^{n} \sup_{\theta \in \Theta}  \big\|  \partial_{\theta} l_t(\theta) \partial_{\theta^{'}} l_t(\theta)- \partial_{\theta } \tilde l_t(\theta) \partial_{\theta^{'} } \tilde l_t(\theta) \big\| = O(1)\quad a.s. \\
	&&\sum_{t=1}^{n}\sup_{\theta \in \Theta}\big\|\paa  l_t(\theta)
	-\paa  \tilde l_t(\theta)\big\|= O(1)\quad a.s.
\end{eqnarray*}
\end{lm}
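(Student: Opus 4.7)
The plan is to reduce each of the three differences to sums of products in which one factor carries geometric-in-$t$ decay coming from Lemma~\ref{lm_TMA1}, while the remaining factors admit finite moments uniformly in $t$ from Lemmas~\ref{lm_TMA0} and \ref{moment2}. Once this is set up, Tonelli applied to nonnegative summands gives $\E\sum_{t=1}^{\infty}\sup_{\T}\|\cdot\|<\infty$, which forces $\sum_{t=1}^{n}\sup_{\T}\|\cdot\|$ to be a.s.\ finite and hence $O(1)$ a.s. In this way all three bounds are derived in parallel; the differences lie only in which of the three approximation bounds of Lemma~\ref{lm_TMA1} is used.

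For the first identity, since $l_t(\T)=-\tfrac12\log\sigma^2-\tfrac12\ep_t^2$ and the $\sigma^2$-derivatives are $\tilde\cdot$-free, one has $\pa_{\T_i}l_t-\pa_{\T_i}\tilde l_t=-(\ep_t\pa_{\T_i}\ep_t-\tilde\ep_t\pa_{\T_i}\tilde\ep_t)$. I would then use the identity $ab-\tilde a\tilde b=(a-\tilde a)b+\tilde a(b-\tilde b)$ and pass to suprema to get
\[
\sup_{\T}|\pa_{\T_i}l_t-\pa_{\T_i}\tilde l_t|\le \sup_{\T}|\ep_t-\tilde\ep_t|\,\sup_{\T}|\pa_{\T_i}\ep_t|+\sup_{\T}|\tilde\ep_t|\,\sup_{\T}|\pa_{\T_i}\ep_t-\pa_{\T_i}\tilde\ep_t|.
\]
Lemma~\ref{lm_TMA1} controls the first and third suprema by $K_1\rho^t$ and $K_2(1+t)\rho^t$ (with $K_1,K_2$ a.s.-finite and independent of $t$), while Cauchy--Schwarz together with Lemma~\ref{lm_TMA0} handles the remaining $L^2$ factors uniformly in $t$; summing a geometric-times-polynomial series closes the case.

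The second identity follows from the same product-difference trick applied one further level up: $\pa_{\T}l_t\pa_{\T'}l_t-\pa_{\T}\tilde l_t\pa_{\T'}\tilde l_t=\pa_{\T}l_t\,(\pa_{\T'}l_t-\pa_{\T'}\tilde l_t)+(\pa_{\T}l_t-\pa_{\T}\tilde l_t)\,\pa_{\T'}\tilde l_t$. Lemma~\ref{moment2} gives the uniform $L^2$ control of $\|\pa_{\T}l_t\|$ and $\|\pa_{\T'}\tilde l_t\|$, while the pre-summed, pointwise-in-$t$ form of the first identity delivers the $(1+t)\rho^t$ $L^2$-decay of the other factor. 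Cauchy--Schwarz plus summation of the geometric-times-polynomial series finishes the second claim.

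For the third identity, writing $\pa^2_{\T_i\T_j}l_t$ out explicitly gives
\[
\pa^2_{\T_i\T_j}l_t-\pa^2_{\T_i\T_j}\tilde l_t=-\bigl(\pa_{\T_i}\ep_t\pa_{\T_j}\ep_t-\pa_{\T_i}\tilde\ep_t\pa_{\T_j}\tilde\ep_t\bigr)-\bigl(\ep_t\pa^2_{\T_i\T_j}\ep_t-\tilde\ep_t\pa^2_{\T_i\T_j}\tilde\ep_t\bigr),
\]
and each of the two pieces splits via the product-difference identity as before. The relevant input from Lemma~\ref{lm_TMA1} is now $\sup_{\T}|\pa^2_{\T_i\T_j}\ep_t-\pa^2_{\T_i\T_j}\tilde\ep_t|\le K_3(1+t+t^2)\rho^t$, and Lemma~\ref{lm_TMA0} supplies finite $L^2$ bounds for all remaining factors. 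The real difficulty is purely bookkeeping: each geometric-decay factor must be paired with a factor that has sufficiently many moments, but because Lemma~\ref{lm_TMA0} yields moments of every order, $d=2$ always suffices and no single analytical step is delicate.
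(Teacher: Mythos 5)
Your proposal is correct and follows essentially the same route as the paper's proof: the product-difference identity $ab-\tilde a\tilde b=(a-\tilde a)b+\tilde a(b-\tilde b)$, the geometric bounds of Lemma~\ref{lm_TMA1}, the uniform moments of Lemma~\ref{lm_TMA0} combined with the Cauchy--Schwarz inequality, and summability of the geometric-times-polynomial series to conclude via expectations. The only difference is that you spell out the third (second-derivative) case explicitly, which the paper omits ``for brevity,'' and your decomposition there matches what the paper's argument would give.
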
	
	
\begin{proof}
Using Lemma \ref{lm_TMA1} and $|\tilde \ep_t| \les |\ep_t|+\rho^t \les 1+|\ep_t|$, we have
\begin{eqnarray*}
   \big|\pa_{\T_i}l_t(\T)-\pa_{\T_i}\tilde l_t(\T)\big|&=& | \tilde \ep_t \pa_{\T_i} \tilde \ep_t - \ep_t \pa_{\T_i} \ep_t |\\
    &\leq& |\tilde \ep_t| | \pa_{\T_i} \tilde \ep_t- \pa_{\T_i} \ep_t| +
    |\tilde \ep_t -\ep_t| |\pa_{\T_i} \ep_t|\\
    &\les& (1+t)(1+|\ep_t|+|\pa_{\T_i} \ep_t|)\rho^t:=P_{t,i}(\T)\rho^t.
\end{eqnarray*}
Also noting that
$| \pa_{\T_i} l_t(\T) | \les 1+ |\ep_t| |\pa_{\T_i} \ep_t|$ and 
$ | \pa_{\T_j} \tilde l_t(\T) | \les |\pa_{\T_j} l_t(\T)|+P_{t,j}(\T)$, we have
\begin{eqnarray*}
\big|\pa_{\T_i} l_t(\T) \pa_{\T_j} l_t(\T)-\pa_{\T_i}\tilde l_t(\T)
\pa_{\T_j}\tilde l_t(\T)\big| &\leq& \big|\pa_{\T_i} l_t(\T)\big|\big|\pa_{\T_j} l_t(\T)-\pa_{\T_j} \tilde l_t(\T)\big|+\big|\pa_{\T_j} \tilde l_t(\T)\big|\big|\pa_{\T_i} l_t(\T)-\pa_{\T_i} \tilde l_t(\T)\big|\\
&\les& \big(1+ |\ep_t| |\pa_{\T_i} \ep_t|+|\ep_t| |\pa_{\T_j} \ep_t|+P_{t,j}(\T)\big) \big(P_{t,i}(\T)+P_{t,j}(\T)\big)\rho^t\\
&:=& Q_{t,i,j}(\T)\rho^t.
\end{eqnarray*}
In view of the moment results in Lemma \ref{lm_TMA0} and the Cauchy-Schwarz inequality, we can see that 
\begin{eqnarray*}
    \E \sup_{\T \in \Theta} P_{t,i}(\T) < \infty, \quad  \E \sup_{\T \in \Theta} Q_{t,i,j}(\T) < \infty.
\end{eqnarray*}
And thus, the first two results in the lemma follow from 
\begin{eqnarray*}
     \sum_{t=1}^{\infty} \rho^t \E \sup_{\T \in \Theta} P_{t,i}(\T) < \infty, \quad  \sum_{t=1}^{\infty} \rho^t \E \sup_{\T \in \Theta} Q_{t,i,j}(\T) < \infty,
\end{eqnarray*}
respectively. Since the last result can be shown in a similar way, we omit its proof for brevity.
\end{proof}	

\subsection{Proofs for Subsection \ref{Sec:3_2}}
\noindent To establish Lemmas \ref{lm_garch_C2} and \ref{lm_garch_C4} below, we shall use some technical results obtained by \cite{francq:zakoian:2004}. Under the assumptions $\bf{G1}$-$\bf{G4}$ and $H_0$, the followings hold: for any $d\geq1$,
\begin{eqnarray}\label{tech1}
   \E \sup_{\T\in\Theta^*} \Big|\frac{1}{\sigma^2_t} \pa_{\T_k} \sigma^2_t \Big|^d <\infty,\quad  \E \sup_{\T\in\Theta^*} \Big|\frac{1}{\sigma^2_t} \pa^2_{\T_i \T_j} \sigma^2_t \Big|^d <\infty,
 \quad
 \E \sup_{\T\in\Theta^*} \Big|\frac{1}{\sigma^2_t} \pa^3_{\T_i \T_j \T_k} \sigma^2_t \Big|^d <\infty,
\end{eqnarray} 
 where $\Theta^* \subset \Theta^o$ is a compact set containing $\T_0$, and for some constant $\rho \in (0,1)$,
     \begin{eqnarray} \label{tech2}
        \sup_{\T \in \Theta} \left\{  \big| \pa_{\T_i} \sigma^2_t - \pa_{\T_i} \tilde{\sigma}^2_t\big| \vee \big| \pa^2_{\T_i\T_j} \sigma^2_t - \pa^2_{\T_i\T_j} \tilde{\sigma}^2_t \big| \right\}  \les  \rho^t \ a.s., \quad \left| \frac{1}{\sigma^2_t} - \frac{1}{\tilde{\sigma}^2_t} \right| \les \frac{\rho^t}{\sigma^2_t}\ a.s.
    \end{eqnarray}
 From (\ref{tech2}), it can be shown that
 \begin{eqnarray} \label{tech-deriv}
 \left| \frac{1}{\tilde{\sigma}^2_t} \frac{\pa \tilde{\sigma}^2_t}{\pa \T_i} \right| \les 1+ \left| \frac{1}{\sigma^2_t} \frac{\pa \sigma^2_t}{\pa \T_i} \right|, 
 \quad 
 \frac{X^2_t}{\tilde{\sigma}^2_t} \les \frac{X^2_t}{\sigma^2_t}.
 \end{eqnarray}

\begin{lm} \label{lm_garch_C2}
    Suppose that $\bf{G1}$-$\bf{G4}$ hold. Then, under $H_0$, we have that for any $d\geq1$,
       \[\E \sup_{\theta \in  N(\T_0) }\big|\pa_{\T_i} l_t(\T)\big|^d  <\infty, \quad\E \sup_{\theta \in  N(\T_0) }\big|\pa^2_{\T_i\T_j} l_t(\T)\big|^d  <\infty,
     \quad\E \sup_{\theta \in  N(\T_0) }\big|\pa^3_{\T_i\T_j\T_k} l_t(\T)\big|^d  <\infty,\]
where $N(\theta_{0})$ is a neighborhood of $\T_0$.
\end{lm}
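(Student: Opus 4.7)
The plan is to expand each derivative of $l_t(\theta) = -\frac{1}{2}(\log \sigma_t^2(\theta) + X_t^2/\sigma_t^2(\theta))$ as a polynomial in the dimensionless quantities $\frac{1}{\sigma_t^2}\partial^{(k)}\sigma_t^2$ (whose $L^d$-norms are controlled by (\ref{tech1})) and the rescaled innovation $X_t^2/\sigma_t^2(\theta)$, and then apply H\"older's inequality after establishing uniform Gaussian-type moment bounds for the rescaled innovation over a small neighborhood of $\theta_0$.

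First, I would write out the derivatives explicitly. The first derivative is
\[
\partial_{\theta_i} l_t(\theta) = -\tfrac{1}{2}\,\tfrac{1}{\sigma_t^2}\partial_{\theta_i}\sigma_t^2 \Big(1 - \tfrac{X_t^2}{\sigma_t^2}\Big),
\]
and the second and third derivatives expand into finite sums of terms of the shape
\[
\Big(\prod_{m}\tfrac{1}{\sigma_t^2}\partial^{(\alpha_m)}\sigma_t^2\Big)\cdot\Big(a_0 + a_1 \tfrac{X_t^2}{\sigma_t^2(\theta)}\Big),
\]
where the $\alpha_m$ are multi-indices of order at most $3$ and $a_0,a_1$ are universal constants. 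This is purely algebra, using $\partial (1/\sigma_t^2) = -(1/\sigma_t^4)\partial\sigma_t^2$ and the product/chain rules.

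Second, I would show that for a sufficiently small compact neighborhood $N(\theta_0)\subset\Theta^*$,
\[
\sup_{\theta\in N(\theta_0)}\frac{\sigma_t^2(\theta_0)}{\sigma_t^2(\theta)} \le C \quad\text{a.s.}
\]
for a deterministic constant $C$. The idea is to use the ARMA-type representation $\sigma_t^2(\theta) = \omega/\mathcal{B}_\theta(1) + \sum_{k\ge 1} c_k(\theta) X_{t-k}^2$ guaranteed by \textbf{G3}, continuity of $(\omega,\{c_k(\cdot)\})$ in $\theta$, and the exponential decay of $c_k(\theta)$ uniform in $\theta\in N(\theta_0)$ to control each coefficient ratio $c_k(\theta_0)/c_k(\theta)$ (together with $\omega_0/\omega$) inside a fixed window around $1$, which then yields a term-by-term bound on the ratio. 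Since $X_t^2 = \sigma_t^2(\theta_0)\,e_t^2$ under $H_0$, this gives
\[
\sup_{\theta\in N(\theta_0)}\frac{X_t^2}{\sigma_t^2(\theta)} \le C\,e_t^2 \quad\text{a.s.},
\]
whose $L^d$-norm is finite for every $d\ge 1$ because $e_t\sim N(0,1)$.

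Third, I combine the two ingredients via H\"older's inequality. For example, for the first derivative,
\[
\mathbb{E}\sup_{\theta\in N(\theta_0)} |\partial_{\theta_i}l_t(\theta)|^{d}
\;\le\; C\Big(\mathbb{E}\sup_{\theta\in N(\theta_0)}\Big|\tfrac{1}{\sigma_t^2}\partial_{\theta_i}\sigma_t^2\Big|^{2d}\Big)^{\!1/2}\Big(\mathbb{E}(1+C e_t^{2})^{2d}\Big)^{\!1/2},
\]
finite by (\ref{tech1}) and the Gaussian bound. The same strategy, with a multi-factor H\"older split matched to the number of factors in each product, handles the second and third derivatives; the moment bounds of $\frac{1}{\sigma_t^2}\partial^{(\alpha)}\sigma_t^2$ in (\ref{tech1}) hold for every exponent, so choosing H\"older exponents is unconstrained.

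The main obstacle is the uniform comparison $\sup_{\theta\in N(\theta_0)}\sigma_t^2(\theta_0)/\sigma_t^2(\theta)\le C$ a.s. A cruder bound $\sigma_t^2(\theta)\ge\omega_{\min}$ would only give $X_t^2/\sigma_t^2(\theta)\le X_t^2/\omega_{\min}$, and $X_t$ need not admit moments of all orders under the GARCH assumptions; one truly needs the ratio-type bound and hence the shrinkage of $N(\theta_0)$, which is the delicate step and relies on \textbf{G3} to avoid degeneracy in the ARMA representation.
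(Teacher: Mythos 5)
Your overall architecture --- expand $\pa_{\T_i}l_t(\T)$, $\pa^2_{\T_i\T_j}l_t(\T)$, $\pa^3_{\T_i\T_j\T_k}l_t(\T)$ into finite sums of products of the normalized derivatives $\sigma_t^{-2}\pa_{\T_i}\sigma_t^2$, $\sigma_t^{-2}\pa^2_{\T_i\T_j}\sigma_t^2$, $\sigma_t^{-2}\pa^3_{\T_i\T_j\T_k}\sigma_t^2$ times an affine function of $X_t^2/\sigma_t^2(\T)$, then close with H\"older --- is exactly the paper's, and your first and third steps are fine. The gap is in your second step. The claimed almost-sure bound $\sup_{\T\in N(\T_0)}\sigma_t^2(\T_0)/\sigma_t^2(\T)\le C$ with a \emph{deterministic} constant $C$ is false for a genuine GARCH model with a positive $\beta$ component of $\T_0$. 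Writing $\sigma_t^2(\T)=\underline{\omega}(\T)+\sum_{k\ge1}c_k(\T)X_{t-k}^2$, in the GARCH$(1,1)$ case $c_k(\T)=\alpha\beta^{k-1}$, and any neighborhood of an interior $\T_0$ contains points with $\beta<\beta_0$, so $c_k(\T_0)/c_k(\T)=(\alpha_0/\alpha)(\beta_0/\beta)^{k-1}\to\infty$; a realization in which $X_{t-K}^2$ is of order $\beta^{-(K-1)}$ for a single large lag $K$ while the other lags are small makes $\sigma_t^2(\T_0)/\sigma_t^2(\T)$ of order $(\beta_0/\beta)^{K-1}$, which exceeds any prescribed $C$ with positive probability. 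No term-by-term comparison of coefficients can therefore yield a deterministic constant, and your consequent bound $X_t^2/\sigma_t^2(\T)\le C e_t^2$ a.s.\ does not hold.

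What is true, and what the paper invokes as its display (\ref{tech3}) (Lemma 1 of \cite{lee2009minimum}, in the spirit of Francq and Zako\"ian's asymptotic-normality proof), is the \emph{moment} version: for every $d\ge1$ one can shrink $N(\T_0)\subset\Theta^*$ so that $\E\sup_{\T\in N(\T_0)}\big(X_t^2/\sigma_t^2(\T)\big)^{d}<\infty$. Its proof uses the inequality $y/(1+y)\le y^{s}$ to get $c_k(\T)X_{t-k}^2/\sigma_t^2(\T)\le\big(c_k(\T)X_{t-k}^2/\underline{\omega}\big)^{s}$ for small $s>0$, the existence of a small moment $\E X_t^{2s}<\infty$ for the strictly stationary solution (moments of all orders are \emph{not} available, as you correctly noted), and the fact that $\sup_{\T\in N(\T_0)}c_k(\T_0)/c_k(\T)$ grows at most like $(1+\delta)^{k}$ for arbitrarily small $\delta$ once the neighborhood is small enough, so that the geometric decay of $c_k(\T)^{s}$ dominates. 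With this moment bound substituted for your almost-sure bound, your H\"older step goes through verbatim and recovers the paper's proof; the repair is local, but as written the argument does not establish the lemma.
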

\begin{proof}
By Lemma 1 in \cite{lee2009minimum}, we can take a neighborhood $N(\T_0)$ included in $\Theta^*$ such that  for any $d\geq 1$,
 \begin{eqnarray}\label{tech3}
 \E \sup_{\T \in N(\T_0)} \frac{X^{2d}_t}{\sigma^{2d}_t}  <\infty.
 \end{eqnarray}
Observe that 
 \begin{eqnarray}
 \big|\pa_{\T_i} l_t(\T)\big| &=& \bigg|\Big( 1-\frac{X^2_t}{\sigma^2_t} \Big)  \frac{1}{\sigma^2_t} \pa_{\T_i} \sigma^2_t\bigg| 
 \les \bigg| 1+\frac{X_t^2}{\sigma_t^2} \bigg| \bigg|\frac{1}{\sigma^2_t} \pa_{\T_i} \sigma^2_t\bigg| \label{upbd.pa.l}\\
 \big|\pa^2_{\T_i\T_j} l_t(\T)\big| &=& \bigg|\Big( 1-\frac{X^2_t}{\sigma^2_t} \Big) \frac{1}{\sigma^2_t} \pa^2_{\T_i \T_j} \sigma^2_t + \Big( 2 \frac{X^2_t}{\sigma^2_t} -1 \Big)  \frac{1}{\sigma^2_t} \pa_{\T_i} \sigma^2_t  \frac{1}{\sigma^2_t} \pa_{\T_j} \sigma^2_t\bigg| \nonumber\\
 &\les& 
 \bigg| 1+\frac{X_t^2}{\sigma_t^2} \bigg| \bigg( \bigg|\frac{1}{\sigma^2_t} \pa^2_{\T_i \T_j} \sigma^2_t\bigg| + \bigg|\frac{1}{\sigma^2_t} \pa_{\T_i} \sigma^2_t  \frac{1}{\sigma^2_t} \pa_{\T_j} \sigma^2_t\bigg|\bigg) \nonumber
 \end{eqnarray}
 and
\begin{eqnarray*}
	\big|\pa^3_{\T_i \T_j \T_k} l_t(\T)\big| &=& \bigg|\Big( 1-\frac{X^2_t}{\sigma^2_t} \Big)  \frac{1}{\sigma^2_t} \pa^3_{\T_i \T_j \T_k} \sigma^2_t 
	+ \Big( 2-6\frac{X^2_t}{\sigma^2_t} \Big)   \frac{1}{\sigma^2_t} \pa^3_{\T_i \T_j \T_k} \sigma^2_t \\
	 &&+\Big( 2\frac{X^2_t}{\sigma^2_t}-1 \Big)  \Big(\frac{1}{\sigma^2_t} \pa_{\T_i} \sigma^2_t    \frac{1}{\sigma^2_t} \pa^2_{\T_j \T_k} \sigma^2_t  +  \frac{1}{\sigma^2_t}\pa_{\T_j} \sigma^2_t   \frac{1}{\sigma^2_t} \pa^2_{\T_i \T_k} \sigma^2_t+ \frac{1}{\sigma^2_t} \pa_{\T_k} \sigma^2_t \frac{1}{\sigma^2_t} \pa^2_{\T_i \T_j} \sigma^2_t \Big) \bigg|\\
	&\les& 
	\bigg| 1+\frac{X_t^2}{\sigma_t^2} \bigg(
	\bigg| \frac{1}{\sigma^2_t} \pa^3_{\T_i \T_j \T_k} \sigma^2_t \bigg|
	+ \bigg|\frac{1}{\sigma^2_t} \pa^3_{\T_i \T_j \T_k} \sigma^2_t \bigg|\\
	&& \hspace{1.7cm}
	+\bigg|\frac{1}{\sigma^2_t} \pa_{\T_i} \sigma^2_t    \frac{1}{\sigma^2_t} \pa^2_{\T_j \T_k} \sigma^2_t\bigg|  +  \bigg|\frac{1}{\sigma^2_t}\pa_{\T_j} \sigma^2_t   \frac{1}{\sigma^2_t} \pa^2_{\T_i \T_k} \sigma^2_t\bigg|+ \bigg|\frac{1}{\sigma^2_t} \pa_{\T_k} \sigma^2_t \frac{1}{\sigma^2_t} \pa^2_{\T_i \T_j} \sigma^2_t\bigg|\bigg).
\end{eqnarray*}
Then, using  (\ref{tech1}),(\ref{tech3}), and the Cauchy–Schwarz inequality, one can establish the lemma.
\end{proof}

\begin{lm} \label{lm_garch_C4}
	Suppose that $\bf{G1}$-$\bf{G4}$ hold. Then, under $H_0$, we have
\[ \sum_{t=1}^{n} \sup_{\theta \in N(\theta_{0})}  \big\|  \partial_{\theta} l_t(\theta) \partial_{\theta^{'}} l_t(\theta)- \partial_{\theta } \tilde l_t(\theta) \partial_{\theta^{'} } \tilde l_t(\theta) \big\| = O(1)\quad a.s.\]
and
\[\sum_{t=1}^{n} \sup_{\theta \in N(\theta_{0})} \big\|\paa  l_t(\theta)
	-\paa  \tilde l_t(\theta)\big\|= O(1)\quad a.s. \]
\end{lm}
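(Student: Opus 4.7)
The plan is to reduce both bounds to the master estimate
\[
\sup_{\T\in N(\T_0)}\bigl\|\pa^{(k)} l_t(\T)-\pa^{(k)} \tilde l_t(\T)\bigr\|\;\les\; \rho^t\, M_t\qquad a.s.,\ k=1,2,
\]
where $M_t$ is an integrable random variable with $\sup_t \E M_t<\infty$, and then use Borel--Cantelli (or monotone convergence) to conclude $\sum_{t=1}^{\infty}\rho^t M_t<\infty$ a.s., which immediately yields the $O(1)$ bound. The same trick then handles the product $\pa_\theta l_t \pa_{\theta'} l_t$ via the identity $ab-cd=a(b-d)+(a-c)d$: multiplying one factor of order $\rho^t M_t$ by an integrable factor of $\pa_\theta l_t$ or $\pa_\theta\tilde l_t$ keeps the product in the class $\rho^t\cdot(\text{integrable})$, after an application of Cauchy--Schwarz to split the sup.

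First I would work out the explicit form of $\pa_{\T_i} l_t(\T)=\tfrac12 (X_t^2/\sigma_t^2-1)\,\sigma_t^{-2}\pa_{\T_i}\sigma_t^2$ and its second derivative, together with the analogous expressions for $\tilde l_t$. Subtracting and telescoping gives, for each fixed $i$ (and $i,j$ for the second derivative), a finite sum of terms each of the form
\[
\bigl(\text{difference term}\bigr)\times\bigl(\text{remaining factors involving $\sigma_t^{2}$, $\tilde\sigma_t^{2}$ and their derivatives, plus $X_t^2$}\bigr),
\]
where the difference term is one of $\sigma_t^{-2}-\tilde\sigma_t^{-2}$, $\pa_{\T_i}\sigma_t^2-\pa_{\T_i}\tilde\sigma_t^2$, or $\pa^2_{\T_i\T_j}\sigma_t^2-\pa^2_{\T_i\T_j}\tilde\sigma_t^2$. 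By the technical bounds (\ref{tech2}), each such difference is $O(\rho^t)$ a.s.\ uniformly in $\theta$ (and the first of them, divided by either $\sigma_t^2$ or $\tilde\sigma_t^2$, is also $O(\rho^t)$ since $\sigma_t^2\ge\omega>0$).

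For the remaining factors, I would invoke the moment bounds (\ref{tech1}) on $\sigma_t^{-2}\pa^{(k)}\sigma_t^2$, the bound (\ref{tech3}) on $X_t^2/\sigma_t^2$ valid on the shrunken neighborhood $N(\theta_0)$, and the comparison (\ref{tech-deriv}) between the $\tilde\sigma_t^2$ and $\sigma_t^2$ versions. Repeated Cauchy--Schwarz then gives $\E\sup_{\theta\in N(\T_0)}M_t\le C$ with $C$ independent of $t$. Consequently
\[
\sum_{t=1}^{\infty}\rho^{t}\,\E\sup_{\theta\in N(\T_0)}M_t<\infty,
\]
so by Fubini/monotone convergence $\sum_{t=1}^{\infty}\rho^{t}\sup_{\theta\in N(\T_0)}M_t<\infty$ a.s., and the two claimed $O(1)$ bounds follow at once.

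The main obstacle will be bookkeeping, not analysis. The second-derivative identity $\paa l_t-\paa\tilde l_t$ unfolds into on the order of ten summands once one expands $\paa[\log\sigma_t^2]$ and $\paa[X_t^2/\sigma_t^2]$, and for the product $\pa_\theta l_t\,\pa_{\theta'} l_t-\pa_\theta\tilde l_t\,\pa_{\theta'}\tilde l_t$ one has to be careful that, after the $ab-cd$ split, the surviving $\pa_\theta\tilde l_t$ is controlled by a uniformly integrable envelope via (\ref{tech-deriv}) and (\ref{tech3}); this is what allows the Cauchy--Schwarz step to close. Once the algebraic reduction is written down, the moment estimates and the geometric factor $\rho^t$ do all the work.
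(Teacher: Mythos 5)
Your proposal is correct and follows essentially the same route as the paper's proof: bound each difference by $\rho^t$ times an envelope whose supremum over $N(\theta_0)$ has finite (stationary) expectation using (\ref{tech1})--(\ref{tech-deriv}) and (\ref{tech3}), handle the score product via the $ab-cd=a(b-d)+(a-c)d$ split with the $\tilde l_t$ factor dominated by its stationary counterpart plus the difference envelope, and conclude by summability of $\sum_t \rho^t\,\E\sup(\cdot)$. No gaps.
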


\begin{proof}
The lemma can be shown in the same fashion as in Lemma \ref{lm_approx}. Using (\ref{tech2}) and (\ref{tech-deriv}), we have
\begin{eqnarray} \label{temp-1}
	\big| \pa_{\T_i} l_t(\T) - \pa_{\T_i} \tilde{l}_t(\T) \big| &=& \bigg|  \Big(\frac{X^2_t}{\tilde{\sigma}^2_t} - \frac{X^2_t}{\sigma^2_t} \Big)  \frac{1}{\sigma^2_t} \pa_{\T_i} \sigma^2_t + \Big( 1- \frac{X^2_t}{\tilde{\sigma}^2_t} \Big) \Big( \frac{1}{\sigma^2_t} - \frac{1}{\tilde{\sigma}^2_t} \Big) \pa_{\T_i} \sigma^2_t \nonumber \\  \nonumber
	&& + \Big( 1-\frac{X^2_t}{\tilde{\sigma}^2_t} \Big)  \frac{1}{\tilde{\sigma}^2_t} \big(\pa_{\T_i} \sigma^2_t - \pa_{\T_i} \tilde{\sigma}^2_t \big)
	 \bigg|\\ 
	 &\les&  \left( 1+ \frac{X^2_t}{\sigma^2_t} \right) \bigg( 1+ \bigg|\frac{1}{\sigma^2_t} \pa_{\T_i} \sigma^2_t\bigg|\bigg)\rho^t :=  P_{t,i}(\T)\rho^t. 
\end{eqnarray}
and thus it follows from (\ref{upbd.pa.l}) that
\begin{eqnarray*}
&&\big|\pa_{\T_i} l_t(\T) \pa_{\T_j} l_t(\T)-\pa_{\T_i}\tilde l_t(\T)
\pa_{\T_j}\tilde l_t(\T)\big| \\
&&\les 
 \bigg| 1+\frac{X_t^2}{\sigma_t^2} \bigg| \bigg(\bigg|\frac{1}{\sigma^2_t} \pa_{\T_i} \sigma^2_t\bigg|+\bigg|\frac{1}{\sigma^2_t} \pa_{\T_j} \sigma^2_t\bigg|+P_{t,j}(\T)\bigg) \big(P_{t,i}(\T)+P_{t,j}(\T)\big)\rho^t
:= Q_{t,i,j}(\T) \rho^t.
\end{eqnarray*}
By simple algebra with (\ref{tech2}) and (\ref{tech-deriv}), we can also have 
\begin{eqnarray*}
\big|\pa^2_{\T_i \T_j} l_t(\T)-\pa^2_{\T_i \T_j} \tilde l_t(\T)\big|
&\les& \left( 1+\frac{X^2_t}{\sigma^2_t} \right) \left( 1 + \left| \frac{1}{\sigma^2_t} \pa^2_{\T_i \T_j} \sigma^2_t \right| + \left| \frac{1}{\sigma^2_t} \pa_{\T_i} \sigma^2_t \right|  \left| \frac{1}{\sigma^2_t} \pa_{\T_j} \sigma^2_t \right| \right)\rho^t :=   R_{t,i,j}(\T) \rho^t. 
\end{eqnarray*}
Using the moments in (\ref{tech1}) and (\ref{tech3}), we have
\begin{eqnarray*}
    \E \sup_{\T \in \Theta} Q_{t,i,j}(\T) < \infty\quad \mbox{and} \quad  \E \sup_{\T \in \Theta} R_{t,i,j}(\T) < \infty,
\end{eqnarray*}
which assert the lemma. 
\end{proof}

\subsection{Proofs for Subsection \ref{Sec:3_3}}
\begin{lm}\label{lm_dar_C2}    
	Under $H_0$, we have that for all $d\ge1$,
	\[\E \sup_{\theta \in \Theta }\big|\pa_{\T_i} l_t(\T)\big|^d  <\infty, \quad\E \sup_{\theta \in \Theta }\big|\pa^2_{\T_i\T_j} l_t(\T)\big|^d  <\infty,
	\quad\E \sup_{\theta \in \Theta }\big|\pa^3_{\T_i\T_j\T_k} l_t(\T)\big|^d  <\infty.\]
\end{lm}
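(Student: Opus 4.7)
Set $v_t(\T)=\omega+\alpha X_{t-1}^{2}$ and $u_t(\T)=X_t-\phi X_{t-1}$, so that
\[
l_t(\T)=-\tfrac{1}{2}\log v_t(\T)-\frac{u_t(\T)^{2}}{2\,v_t(\T)}.
\]
Under $H_0$ we have $X_t=\phi_0 X_{t-1}+e_t\sqrt{v_t(\T_0)}$, hence $u_t(\T)=(\phi_0-\phi)X_{t-1}+e_t\sqrt{v_t(\T_0)}$. My plan is to show that each partial derivative of $l_t(\T)$ of order $k\in\{1,2,3\}$ is a finite linear combination of monomials of the form $M_{p,q,r}:=u_t^{p}X_{t-1}^{q}/v_t^{r}$ with $p+q\le 2r$, and that every such monomial can be controlled uniformly in $\T\in\Theta$ by a polynomial in $|e_t|$.

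First I would record the uniform elementary bounds that follow directly from the parameter constraints in (\ref{dar_par}):
\[
\sup_{\T\in\Theta}\frac{1}{v_t(\T)}\le \frac{1}{c_2},\qquad
\sup_{\T\in\Theta}\frac{|X_{t-1}|}{\sqrt{v_t(\T)}}\le \frac{1}{\sqrt{c_4}},\qquad
\sup_{\T\in\Theta}\frac{v_t(\T_0)}{v_t(\T)}\le \frac{\omega_0}{c_2}+\frac{\alpha_0}{c_4}.
\]
Combining the last two with the decomposition of $u_t(\T)$ above and with $|\phi|\le c_1$ yields
$\sup_{\T\in\Theta}|u_t(\T)/\sqrt{v_t(\T)}|\les 1+|e_t|$ almost surely. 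These are the only ``probabilistic'' inputs; everything else is algebra.

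Next I would verify the structural claim by a short induction. The only building blocks are $\pa_\phi u_t=-X_{t-1}$, $\pa_\omega u_t=\pa_\alpha u_t=0$, $\pa_\phi v_t=0$, $\pa_\omega v_t=1$, $\pa_\alpha v_t=X_{t-1}^{2}$. Differentiating a monomial $M_{p,q,r}$ therefore produces, up to sign and multiplicative constants, one of $M_{p-1,q+1,r}$, $M_{p,q,r+1}$, or $M_{p,q+2,r+1}$; in each case $p'+q'\le 2r'$ is preserved. Since the first derivatives of $l_t$ are clearly finite sums of such $M_{p,q,r}$ with $p+q\le 2r$ (for example $\pa_\phi l_t=u_t X_{t-1}/v_t$, $\pa_\omega l_t=\tfrac12(u_t^{2}/v_t^{2}-1/v_t)$), the claim propagates to the second and third derivatives by iterating the three rules above.

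Finally, for any monomial with $p+q\le 2r$ one writes
\[
|M_{p,q,r}|=\Bigl(\tfrac{|u_t|}{\sqrt{v_t}}\Bigr)^{p}\Bigl(\tfrac{|X_{t-1}|}{\sqrt{v_t}}\Bigr)^{q}\, v_t^{(p+q)/2-r},
\]
and since $(p+q)/2-r\le 0$ and $v_t\ge c_2$, the last factor is uniformly bounded. The previous step then gives $\sup_{\T\in\Theta}|M_{p,q,r}|\les (1+|e_t|)^{p}$, whose $L^d$-norm is finite for every $d\ge 1$ because $e_t\sim N(0,1)$. Summing the finitely many such bounds that appear in each derivative of $l_t$ delivers the three moment inequalities in the lemma. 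The only tedious part is tabulating the explicit second- and third-order derivatives; the recursive preservation of $p+q\le 2r$ sidesteps any case-by-case enumeration, so I expect no real obstacle beyond careful bookkeeping.
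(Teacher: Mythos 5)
Your proposal is correct and rests on the same ingredients as the paper's proof: the decomposition $u_t(\T)=(\phi_0-\phi)X_{t-1}+e_t\sqrt{v_t(\T_0)}$ under $H_0$ together with the uniform ratio bounds $1/v_t\le 1/c_2$, $X_{t-1}^2/v_t\le 1/c_4$, and $v_t(\T_0)/v_t(\T)\le \omega_0/c_2+\alpha_0/c_4$, which are exactly the estimates the paper records in (\ref{bound-1})--(\ref{bound-3}) before bounding $|\pa_{\T_i}l_t(\T)|^{2n}\les 1+e_t^{2n}+e_t^{4n}$. The only difference is organizational: where the paper computes the first-order derivative explicitly and asserts that the second and third orders follow ``similarly,'' your induction on monomials $u_t^pX_{t-1}^q/v_t^r$ with the preserved invariant $p+q\le 2r$ makes that ``similarly'' rigorous without enumerating the higher-order derivatives, which is a clean piece of bookkeeping but not a different proof.
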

\begin{proof}
Let $\Lambda_t(\T) =(\phi_0-\phi)X_{t-1}+\epsilon_t   $ and $\Upsilon_t^2(\T)=\omega+\alpha X_{t-1}^2$, where $\ep_t=e_t\sqrt{\omega_0 +\A_0X^2_{t-1}}$ and $e_t\sim N(0,1)$. Then we can write that
	\[ l_t(\T)=-\frac{1}{2}\log\Upsilon_t^2(\T)-\frac{1}{2}\frac{\Lambda_t^2(\T)}{\Upsilon_t^2(\T)}. \]
Note that $\pa_{\T}\Lambda(\T)=(-X_{t-1},0,0)', \pa_{\T}\Upsilon_t^2(\T)=(0,1,X_{t-1}^2)',$ and 
\begin{eqnarray}\label{bound-1}
	\left| \frac{1}{\Upsilon_t^2(\T)}\pa_{\T_i}\Upsilon_t^2(\T) \right| \le \frac{1}{c_2+c_4 X_{t-1}^2}\left(X_{t-1}^2+1\right)\les 1,
\end{eqnarray}
where $c_2$ and $c_4$ are the ones given in (\ref{dar_par}).
Further, we also have from (\ref{bound-1}) that
\begin{eqnarray}\label{bound-2}
      \left|\frac{1}{\Upsilon_t^2(\T)} \frac{\Lambda_t^2(\T)}{\Upsilon_t^2(\T)}\pa_{\T_i}\Upsilon_t^2(\T) \right|
        &\les &
        \left|\frac{\Lambda_t^2(\T)}{\Upsilon_t^2(\T)} \right| \nonumber\\
        &\les&
        \sup_{\T \in \Theta}\left|\frac{\left( \phi-\phi_0 \right)^2X_{t-1}^2}{\omega+\alpha X_{t-1}^2} \right| 
        +
        \sup_{\T \in \Theta}\left|\frac{\omega_0 +\A_0X^2_{t-1}}{\omega+\alpha X_{t-1}^2}\,   e_t^2   \right| \\
      &\les& 1+ e_t^2\nonumber
\end{eqnarray}
and, for any fixed $n \in \mathbb{N}$,
\begin{eqnarray}\label{bound-3}
	\left| \frac{\Lambda_t(\T)}{\Upsilon_t^2(\T)}\pa_{\T_i}\Lambda_t(\T) \right|^{2n} 
 &\le& \sup_{\T \in \Theta}\left| \frac{(\phi_0-\phi)X_{t-1}+\epsilon_t}{\omega+\alpha X_{t-1}^2}X_{t-1} \right|^{2n} \nonumber \\
		&\les& 
  \sup_{\T \in \Theta}\left| \frac{(\phi-\phi_0)^{2n}X_{t-1}^{4n}}{(\omega+\alpha X_{t-1}^2)^{2n}} \right|
  +\sup_{\T \in \Theta}\left| \frac{(\omega_0+\alpha_0X_{t-1}^2)^n e_t^{2n}}{(\omega+\alpha X_{t-1}^2)^{2n}}X_{t-1}^{2n} \right|\\
		&\les&1+e_t^{2n}. \nonumber
\end{eqnarray}
Using (\ref{bound-1}) - (\ref{bound-3}), one can have that
\begin{eqnarray}\label{p-l}
		\big|\pa_{\T_i}l_t(\T)\big|^{2n}=\left|\frac{1}{2}\frac{1}{\Upsilon_t^2(\T)}\left[ \left( 1-\frac{\Lambda_t^2(\T)}{\Upsilon_t^2(\T)} \right)\pa_{\T_i}\Upsilon_t^2(\T)+2\Lambda_t(\T)\pa_{\T_i}\Lambda_t(\T) \right]\right|^{2n} \les 1+e_t^{2n}+e_t^{4n}.
\end{eqnarray}
Similarly to the above, it can also be shown that 
\[ \left| \pa_{\T_i\T_j}^2 l_t(\T) \right|^{2n}\les 1+e_t^{2n}+e_t^{4n} \quad \mbox{and}\quad
\left| \pa_{\T_i\T_j\T_k}^3 l_t(\T) \right|^{2n}\les 1+e_t^{2n}+e_t^{4n}.\]
Recalling that $e_t$ follows a normal distribution under $H_0$, we have
\begin{eqnarray*}
	\E\sup_{\T \in \Theta}\left| \pa_{\T_i}l_t(\T) \right|^{2n}<\infty, \quad \E\sup_{\T \in \Theta}\left| \pa_{\T_i\T_j}^2 l_t(\T) \right|^{2n}<\infty, \quad \E\sup_{\T \in \Theta}\left| \pa_{\T_i\T_j\T_k}^3 l_t(\T) \right|^{2n}<\infty, 
\end{eqnarray*}
which together with Lyapunov's inequality asserts the lemma. 
\end{proof}


\noindent{\bf Acknowledgements}\\
This research was supported by Basic Science Research Program through the National Research Foundation of Korea (NRF) funded by the Ministry of Education (NRF-2019R1I1A3A01056924).

\normalem
\catcode`'=9
\catcode``=9
\bibliography{jun}

\end{document}